\newcommand{\maxGMRES}{j_m  }
\newcommand{\PSD}{N_0}
\newcommand{\fidx}{k}
\newcommand{\newidx}{r}
\newcommand{\setj}{\mathcal{C}}
\newcommand{\rdmmtx}{S}
\newtheorem{definition}{Definition}
\newtheorem{thm}{Theorem}
\newtheorem{corollary}{Corollary}
\begin{document}

\title{Doubly-Iterative Sparsified MMSE Turbo Equalization for OTFS Modulation }

\author{Haotian Li,~\IEEEmembership{Student Member,~IEEE,} Qiyue Yu, \IEEEmembership{Senior Member,~IEEE}

\thanks{
The work presented in this paper was supported by the National Natural Science Foundation of China under Grand No. 62071148 and No. 62171151, partly by the Natural Science Foundation of Heilongjiang Province of China under Grand No. YQ2019F009, and partly by the Fundamental Research Funds for Central Universities under Grand No. HIT.OCEF.2021012.}

\thanks{
H. Li, and Q. Yu are with the Communication Research Center, Harbin Institute of Technology, Harbin 150001, China 
(email: lihaotian@stu.hit.edu.cn; yuqiyue@hit.edu.cn). }

}

\markboth{IEEE TRANSACTIONS ON COMMUNICATIONS, Vol. 71, No. 3, MARCH 2023} { \ldots}

\maketitle

\vspace{-0pt}
\begin{abstract}
  Currently, orthogonal time frequency space (OTFS) modulation has drawn much attention to reliable communications in high-mobility scenarios.
  This paper proposes a doubly-iterative sparsified minimum mean square error (DI-S-MMSE) turbo equalizer, which iteratively exchanges the extrinsic information between a soft-input-soft-input (SISO) MMSE estimator and a SISO decoder.
  Our proposed equalizer does not suffer from short loops and approaches the performance of the near-optimal symbol-wise maximum {\it a posteriori} (MAP) algorithm.
  To exploit the inherent sparsity of OTFS system, we resort to graph theory to investigate the sparsity pattern of the channel matrix, and propose two sparsification guidelines to reduce the complexity of calculating the matrix inverse at the MMSE estimator. 
  Then, we apply two iterative algorithms to MMSE estimation, i.e., the Generalized Minimal Residual (GMRES) and Factorized Sparse Approximate Inverse (FSPAI) algorithms.
  The former is used at the initial turbo iteration, whose global convergence is proven in our equalizer, while the latter is used at the subsequent turbo iterations with the help of our proposed guidelines. 
  Simulation results demonstrate that our equalizer can achieve a good bit-error-rate (BER) performance regardless of the fractional Doppler, and only has a linear order of complexity. Simulation codes are available to reproduce the results presented in this paper: \href{https://github.com/Alga53/DISMMSE-Turbo-Equalizer-for-OTFS}{https://github.com/Alga53/DISMMSE-Turbo-Equalizer-for-OTFS}.  
  
\end{abstract}

\begin{IEEEkeywords}
  Orthogonal time frequency space (OTFS), minimum mean square error (MMSE), turbo equalization, graph theory, iterative algorithms for sparse linear systems.   
\end{IEEEkeywords}

\vspace{-5pt}
\section{Introduction}
Next generation wireless communication systems are expected to accommodate various wireless applications in high-mobility scenarios, such as vehicle-to-vehicle (V2V), and high speed railway.
In these scenarios, the performance of conventional orthogonal frequency division multiplexing (OFDM) is greatly degraded.
To combat the Doppler spread, one of the appealing solutions is orthogonal time frequency space (OTFS)\cite{hadani2017orthogonal,hadani2018otfs}, which has great potential for supporting reliable communications in high-mobility scenarios.

OTFS multiplexes the information symbols in the delay-Doppler (DD) domain instead of the time-frequency (TF) domain.
Through a two-dimensional transform, OTFS spreads each symbol across the entire TF domain so that all symbols experience the same channel gain.
Hence, OTFS has the potential of extracting full channel diversity with a well-designed transceiver\cite{9392379}, and greatly outperforms conventional OFDM systems \cite{hadani2017orthogonal,raviteja2018interference,surabhi2019diversity,li2021performance,wei2021orthogonal}.
More importantly, the DD domain wireless channel is inherently sparse and has lower variability than that in the TF domain, which facilitates channel estimation and significantly reduces the overhead\cite{hadani2017orthogonal,raviteja2019embedded,murali2018otfs}.

However, these benefits of OTFS are obtained at the expense of prohibitive detection complexity.  
Therefore, many researchers have developed various low-complexity detectors by exploiting the sparsity of the wireless channel in the DD domain.
In \cite{tiwari2019low}, the authors have proposed a linear minimum mean square error (LMMSE) receiver with a log-linear complexity order, which utilizes the sparsity and quasi-banded structure of the matrices.
To further harvest the diversity gain, the authors of \cite{raviteja2018interference} have developed an iterative receiver based on message passing algorithm (MPA) with a linear complexity order, in which the interference is assumed to be Gaussian.
Moreover, a hybrid maximum {\it a posteriori} (MAP) and parallel interference cancellation (PIC) detection (Hybrid-MAP-PIC) algorithm has been proposed in \cite{9439819}, whose performance approaches that of the near-optimal symbol-wise MAP algorithm with low complexity.
However, due to the short loops in the factor graph that commonly exist in OTFS system, the MPA and Hybrid-MAP-PIC algorithm may fail to converge, resulting in undesirable performance degradation.
To tackle this issue, a variational Bayes approach has been developed in \cite{yuan2020simple}, which can converge regardless of the short loops in the factor graph.
Besides, the authors of \cite{yuan2021iterative} have investigated the design of OTFS detector based on approximate message passing (AMP), which handles short loops well with low complexity, and achieves a comparable bit error rate (BER) performance to the MMSE detection.
In \cite{li2021cross}, a cross domain iterative detection of OTFS system has been proposed, which passes the extrinsic information between the time and DD domain.
This detector can approach the performance of the maximum-likelihood (ML) sequence detection at the cost of a cubic complexity order.

The existing studies motivate us to achieve a desirable trade-off between the BER performance and complexity.
In this paper, we propose a doubly-iterative sparsified MMSE (DI-S-MMSE) turbo equalizer, which consists of a sparsified soft-input-soft-output (SISO) MMSE estimator and a SISO decoder.
It is observed that the overall computational complexity is dominated by computing the inverse of the covariance matrix of the received sequence for MMSE estimation.
To address this problem, our proposed equalizer sparsifies the covariance matrix so that its inverse can be derived with a linear complexity order.
More importantly, our equalizer can approach the performance of the near-optimal symbol-wise MAP detector regardless of the fractional Doppler, and does not suffer from the short loops as in the MPA \cite{raviteja2018interference} and SPA \cite{9439819} detectors.
The major contributions of this paper are summarized as follows.
\begin{enumerate}
  \vspace{-3pt}
  \item We propose a doubly-iterative sparsified MMSE turbo equalizer, which iteratively exchanges the extrinsic information between the MMSE estimator and the decoder.
  This turbo iteration is referred to as the outer iteration.
  On the other hand, to reduce the detection complexity, we use two iterative algorithms to perform MMSE estimation, i.e., the Generalized Minimal Residual (GMRES) and Factorized Sparse Approximate Inverse (FSPAI) algorithms, which are referred to as inner iteration.
  \item To fully exploit the inherent sparsity of the DD domain channel, we resort to graph theory to analyze the sparsity pattern of the channel matrix.
  Based on this, we propose two guidelines to sparsify the covariance matrix.
  In addition, considering the randomness of the channel matrix, we define the sparsity level in the form of cumulative distribution function (CDF) to measure how sparse a matrix is.
  It is validated that the sparsity level of the inverse can be greatly improved by the sparsification guidelines and thus the detection complexity is significantly reduced. 
  \item At the initial outer iteration, computing the inverse is converted into solving the equivalent sparse linear systems via GMRES.
  Moreover, we prove the global convergence of GMRES in our system.
  At the subsequent outer iterations, we employ the proposed guidelines to sparsify the covariance matrix.
  Then, using the Hermitian positive definiteness of the covariance matrix, we modify FSPAI to approximately derive the Cholesky decomposition of its inverse. 
  Simulation results show that our equalizer greatly reduces the complexity and achieves a great BER performance regardless of the fractional Doppler.
\end{enumerate}

The remainder of this paper is organized as follows.
In Section \ref{section2}, the system model of the proposed system is presented.
Section \ref{section3} provides a brief overview of the proposed equalizer.
Then, we analyze the sparsity pattern of the random sparse matrices and propose two sparsification guidelines based on graph theory in Section \ref{section4}.   
Section \ref{section5} presents the iterative algorithms.
The theoretical analysis is given in Section \ref{section6}, and simulation results are presented in Section \ref{section7}.
Finally, some conclusions are drawn in Section \ref{section8}.

\textit{Notations}: Scalars, vectors and matrices are denoted by $a$, $\mathbf{a}$ and $\mathbf{A}$ respectively. $\mathbb{B}$, $\mathbb{N}$, $\mathbb{Z}$, $\mathbb{R}$, $\mathbb{R}^{+}$ and $\mathbb{C}$ represent the set of all binary numbers, natural numbers, integers, real numbers, real positive numbers and complex numbers.
$\delta (\cdot)$ is the Dirac delta function.
$\lfloor a\rfloor$ denotes the the largest integer less than or equal to $a$. 
$\mathrm{Re}\{a\}$ and $\mathrm{Im}\{a\}$ mean the real and imaginary parts of $a$, and $a^{*}$ is the conjugate of $a$.
$\left\|{\mathbf{a}}\right\|$ represents the $\ell_2$-norm of a vector $\mathbf{a}$.
The inner product of two vectors $\mathbf{a}$ and $\mathbf{b}$ is denoted by $(\mathbf{a},\mathbf{b})$.
$\mathbf{A}(i,j)$ denotes the $(i,j)$-th element of $\mathbf{A}$.
The transpose, conjugate, conjugate transpose and inverse of a matrix are represented by $(\cdot)^\mathrm{T}$, $(\cdot)^{\ast}$, $(\cdot)^\mathrm{H}$ and $(\cdot)^{-1}$. $\mathrm{diag}[\mathbf{a}]$ denotes a diagonal matrix whose diagonal elements are $\mathbf{a}$, and $\mathrm{vec}(\cdot)$ represents the column-wise vectorization of a matrix.
$\mathcal{CN}(\mu ,\sigma^2)$ denotes the circularly symmetric complex Gaussian distribution with mean $\mu $ and variance $\sigma^2$. 
The expectation operator is denoted by $\mathbb{E}\{\cdot\}$. The covariance matrix of two vectors is denoted by $\mathrm{Cov}\{\mathbf{x},\mathbf{y}\}= \mathbb{E}\{\mathbf{xy}^{\mathrm{H}}\}-\mathbb{E\{\mathbf{x}\}}\mathbb{E}\{\mathbf{y}^{\mathrm{H}}\}$.
$\mathcal{O}(\cdot)$ is the order of asymptotic time complexity.
$|\mathcal{A}|$ is the cardinality of a set $\mathcal{A}$. 

\section{System model}\label{section2}
\subsection{OTFS Transmitter} 
The diagram of OTFS system is shown in \figurename~\ref{OTFSsys}.
The bitstream ${\bf u} \in {\mathbb B}^{K_1}$ is encoded by an encoder with a generator matrix $\mathbf{G}$ whose code rate is $R_c$, through which an encoded vector ${\bf c}=\mathbf{G}\mathbf{u} \in {\mathbb B}^{K_2}$ is obtained, where $K_1 = K_2 \cdot R_c$.
Then, $\bf c$ is interleaved to ${\bf d}\in {\mathbb B}^{K_2}$, and mapped to the symbol sequence ${\bf x} \in {\mathbb C}^{K_3}$ from the symbol alphabet $\mathcal{S}$, where $K_3=K_2/\mathrm{log}_2(|\mathcal{S}|)$.
Then $\mathbf{x}$ is arranged into an $M\times N$ DD domain matrix $\mathbf{X}_{\mathrm{DD}}$ column by column with $K_3 = MN$, where the delay and Doppler dimensions contain $M$ and $N$ samples respectively.

\begin{figure*}[t]
  \centering
  \includegraphics[width = 0.85\textwidth]{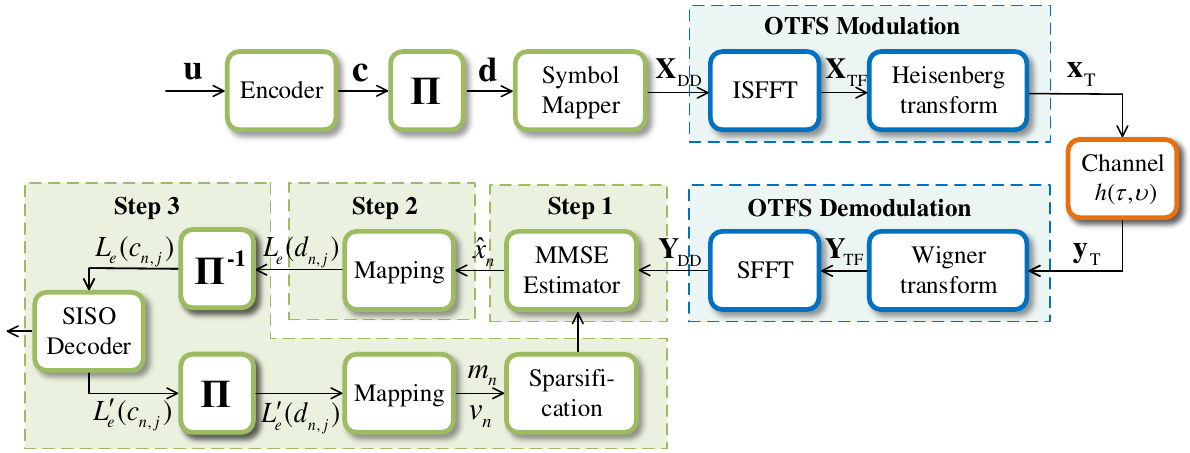}
  \caption{A diagram of OTFS transceiver, where $\mathbf{\Pi}$ and $\mathbf{\Pi}^{-1}$ are the interleaver and deinterleaver respectively. Each iteration of the proposed equalizer consists of three steps: 1) perform MMSE estimation using the {\it a priori} information; 
  2) map the estimated symbols to the extrinsic information for the decoder; 
  3) after decoding, update means and variances, and sparsify the covariance matrix of the received sequence for the MMSE estimator.}\vspace{-1em}
  \label{OTFSsys}
\end{figure*}

Next, the inverse symplectic finite Fourier transform (ISFFT) performs an $M$-point FFT along the delay dimension and an $N$-point IFFT along the Doppler dimension to derive the TF domain transmit symbols matrix $\mathbf{X}_{\mathrm{TF}}\in \mathbb{C}^{M\times N}$ as \cite{raviteja2018practical}
\begin{equation} \label{ISFFT}
  \mathbf{X}_{\mathrm{TF}} = \mathbf{F }_M \mathbf{X}_{\mathrm{DD}} \mathbf{F  }_N^{\mathrm{H}},
\end{equation}
where $\mathbf{F  }_M$ denotes the $M$-point DFT matrix and $\mathbf{F  }_N^\mathrm{H}$ denotes the $N$-point IDFT matrix. 
Here, $NT$ and $M\Delta f$ represent the time duration and bandwidth respectively, where the subcarrier spacing is $\Delta f$ (Hz) and the time interval is $T$ (seconds).

After the ISFFT, the Heisenberg transform performs an $M$-point IFFT along each column of $\mathbf{X}_{\mathrm{TF}}$ and applies the pulse shaping waveform $g_{\mathrm{tx}}(t)$ of duration $[0,T)$. 
Then, the obtained time domain transmit matrix $\mathbf{X}_{\mathrm{T}}$ can be denoted by \cite{raviteja2018practical}
\begin{equation}
  \mathbf{X}_{\mathrm{T}} = \mathbf{G  }_{\mathrm{tx}} \mathbf{F  }_M^{\mathrm{H}} \mathbf{X}_{\mathrm{TF}}= \mathbf{G  }_{\mathrm{tx}} \mathbf{X}_{\mathrm{DD}} \mathbf{F  }_N^{\mathrm{H}},
\end{equation}
where the diagonal elements of $\mathbf{G  }_{\mathrm{tx}}$ are samples of $g_{\mathrm{tx}}(t)$ with sampling interval $T/M$, i.e., $\mathbf{G  }_{\mathrm{tx}}=\mathrm{diag}\left[g_{\mathrm{tx}}(0), g_{\mathrm{tx}}(T/M),\dots, g_{\mathrm{tx}}((M-1)T/M)\right]\in \mathbb{C}^{M\times M}$. 
Without loss of generality, this paper only considers the rectangular waveform, where $\mathbf{G  }_{\mathrm{tx}}$ is simplified as the identity matrix $\mathbf{I}_M$.
Nonetheless, some other well-designed windows in \cite{wei2021transmitter} can also be applied.
Then, the transmit signal can be vectorized as
$
  \mathbf{x}_{\mathrm{T}} = (\mathbf{F  }_N^{\mathrm{H}}\otimes \mathbf{G  }_{\mathrm{tx}} )\mathbf{x} = (\mathbf{F  }_N^{\mathrm{H}}\otimes \mathbf{I}_M)\mathbf{x},
$
where $\mathbf{x}_{\mathrm{T}}= \mathrm{vec}(\mathbf{X}_{\mathrm{T}})\in \mathbb{C}^{MN}$ and $\otimes$ represents the kronecker product. 
Such a vector of length $MN$ is termed an OTFS frame\cite{hadani2017orthogonal}. 
A cyclic prefix (CP) of length $L$ is appended to each OTFS frame, and after digital to analog conversion, the time domain signal $x_{\mathrm{T}}(t)$ is transmitted into the time-varying channel.

\vspace{-0pt}
\subsection{Channel Model}
The channel impulse response can be characterized as $h(\tau,\nu)$ with delay $\tau$ and Doppler $\nu$
\vspace{-0pt}
\begin{equation} \label{h}
  h(\tau, \nu)=\sum_{p=1}^{P} h_{p} \delta\left(\tau-\tau_{p}\right) \delta\left(\nu-\nu_{p}\right),
  \vspace{-0pt}
\end{equation}
where $h_p,\tau_p$ and $\nu_p$ represent the channel gain, delay and Doppler shift of the $p$-th path for $p=1,\dots,P$, and $P$ is the number of paths\cite{raviteja2018interference}.
Here, the delay and Doppler shift are given by 
\vspace{-0pt}
\begin{equation}
  \tau_p = \frac{l_p}{M\Delta f}, ~~\nu_p = \frac{k_p+\kappa_p}{NT},
  \vspace{-0pt}
\end{equation} 
where the delay indices $l_p$ and Doppler shift indices $k_p$ are integers, i.e., $k_p, l_p\in\mathbb{Z}$, and the term $\kappa_p \in [-1/2, 1/2]$ is the fractional Doppler.
For conventional wide-band systems, the sampling time $1/(M\Delta f)$ is sufficiently small so that the delay taps can be approximated as integers\cite{raviteja2018interference}. However, due to the limit of latency, $N$ cannot be chosen arbitrarily large.
Therefore, the fractional Doppler may exist in practical systems.
In the following, we assume $N$ is sufficiently large so that the fractional Doppler can be neglected.
This facilitates us to understand the sparsity characteristics of the channel matrix.
Then in the simulation part, we will demonstrate that our proposed equalizer can also handle the fractional Doppler well with low complexity. 
 
The aforementioned length of the CP should be larger than the maximum delay, i.e., $L> \tau_{\mathrm{max}}M\Delta f$. 
According to \cite{hadani2017orthogonal}, the received signal $y_{\mathrm{T}}(t)$ is
\vspace{-0pt}
\begin{equation} \label{rt}
  y_{\mathrm{T}}(t)=\iint h(\tau, \nu) e^{j 2 \pi \nu(t-\tau)} x_{\mathrm{T}}(t-\tau) \mathrm{d} \tau \mathrm{d} \nu+z_{\mathrm{T}}(t),
  \vspace{-0pt}
\end{equation}
where $z_{\mathrm{T}}(t)$ is the additive white Gaussian noise (AWGN) in the time domain, which is distributed as $\mathcal{CN}(0,\PSD)$.
After CP is removed, $y_{\mathrm{T}}(t)$ is sampled with interval $T/M$. By substituting (\ref{h}), the discrete received signal is obtained as  \cite{raviteja2018practical}
\vspace{-0pt}
\begin{equation}
  \label{discreteR}
  y_{\mathrm{T}}(n) = \sum_{p=1}^P h_p e^{j2\pi \frac{k_p(n-l_p)}{MN}}x_{\mathrm{T}}([n-l_p]_{MN})+z_{\mathrm{T}}(n),
  \vspace{-0pt}
\end{equation}
where $n=1,\dots,MN$ and $[\cdot]_{MN}$ represents the modulo-$MN$ operation.
From (\ref{discreteR}), the time domain effective channel matrix $\mathbf{H}_{\mathrm{T}}\in \mathbb{C}^{MN\times MN}$ can be written as \cite{raviteja2018practical}
\vspace{-0pt}
\begin{equation} \label{HT}
  \mathbf{H}_{\mathrm{T}} = \sum_{p=1}^{P} h_p \mathbf{\Pi}^{l_p} \mathbf{\Delta}^{k_p},
  \vspace{-0pt}
\end{equation}
where $\mathbf{\Pi}$ is a permutation matrix (forward cyclic shift) and represents the effect of delay
\begin{equation}
  \setlength{\arraycolsep}{2pt}
  \renewcommand{\arraystretch}{0.5}
  \boldsymbol{\Pi}_{MN}=\left[\begin{array}{cccc}
      0      & \cdots & 0      & 1      \\
      1      & \ddots & 0      & 0      \\
      \vdots & \ddots & \ddots & \vdots \\
      0      & \cdots & 1      & 0
    \end{array}\right]_{MN\times MN},
\end{equation}
and $\mathbf{\Delta}$ models the Doppler effect and introduces an extra frequency shift, which is an $MN\times MN$ diagonal matrix as
$
  \mathbf{\Delta} = \mathrm{diag}[\omega^0, \omega^1,\cdots,\omega^{MN-1}]
$
with $\omega= e^{\frac{j2\pi}{MN}}$. In (\ref{HT}), each path incurs an $l_p$-step cyclic time shift, represented by $\mathbf{\Pi}^{l_p}$, along with a frequency shift $k_p$ of the transmit signal $\mathbf{x}_{\mathrm{T}}$, represented by $\mathbf{\Delta}^{k_p}$. In this way, the received signal $\mathbf{y}_{\mathrm{T}}\in\mathbb{C}^{MN}$ of (\ref{discreteR}) is given by
\vspace{-0pt}
\begin{equation}
  \mathbf{y}_{\mathrm{T}}=\mathbf{H}_{\mathrm{T}}\mathbf{x}_{\mathrm{T}}+\mathbf{z}_{\mathrm{T}},
  \vspace{-0pt}
\end{equation}
where $\mathbf{z}_{\mathrm{T}}\in \mathbb{C}^{MN}$ is the AWGN vector.

\vspace{-0pt}
\subsection{OTFS Receiver}
At the receiver, $\mathbf{y}_{\mathrm{T}}$ is rearranged into an $M\times N$ matrix $\mathbf{Y}_{\mathrm{T}}$ column by column. 
Similarly, we consider rectangular received pulse shaping waveform. 
Then, an $M$-point FFT operation is applied to each column of $\mathbf{Y}_{\mathrm{T}}$ to obtain the TF domain received signal $\mathbf{Y}_{\mathrm{TF}}\in \mathbb{C}^{M\times N}$ as 
\vspace{-0pt}
\begin{equation} \label{YTF}
  \mathbf{Y}_{\mathrm{TF}} = \mathbf{F  }_M \mathbf{G  }_{\mathrm{rx}} \mathbf{Y}_{\mathrm{T}} = \mathbf{F  }_M \mathbf{I}_M \mathbf{Y}_{\mathrm{T}}.
  \vspace{-0pt}
\end{equation}

Afterwards, the SFFT is performed, which consists of an $M$-point IFFT of columns and an $N$-point FFT of rows. Thus, by substituting (\ref{YTF}), the DD domain received matrix $\mathbf{Y}_{\mathrm{DD}}$ is yielded as
$
  \mathbf{Y}_{\mathrm{DD}} = \mathbf{F  }_M^{\mathrm{H}} \mathbf{Y}_{\mathrm{TF}} \mathbf{F  }_N = \mathbf{I}_M \mathbf{Y}_{\mathrm{T}} \mathbf{F  }_N
$,
whose vectorized form is
$
  \mathbf{y} = \mathrm{vec}(\mathbf{Y_{\mathrm{DD}}}) = (\mathbf{F  }_N\otimes \mathbf{I}_M)\mathbf{y}_{\mathrm{T}}.
$
Then, the transmitted DD domain vector $\mathbf{x}$ and the received DD domain vector $\mathbf{y}$ are related as
\vspace{-0pt}
\begin{equation}
  \begin{aligned}
 \label{yHDDx}
    \mathbf{y}  &=\left(\mathbf{F  }_{N} \otimes \mathbf{I}_M\right) \mathbf{H}_{\mathrm{T}}\left(\mathbf{F  }_{N}^{\mathrm{H}} \otimes \mathbf{I}_M\right) \mathbf{x}+\left(\mathbf{F  }_{N} \otimes \mathbf{I}_M\right) \mathbf{z}_{\mathrm{T}} \\
    &=\mathbf{H}_{\mathrm{DD}} \mathbf{x}+\mathbf{z},
    \vspace{-0pt}
  \end{aligned}
\end{equation}
where $\mathbf{z}$ is the DD domain noise vector, and the effective DD domain channel matrix is
\vspace{-0pt}
\begin{equation} \label{HDDp}
  \mathbf{H}_{\mathrm{DD}}=(\mathbf{F  }_{N} \otimes \mathbf{I}_M) \mathbf{H}_{\mathrm{T}}(\mathbf{F  }_{N}^{\mathrm{H}} \otimes \mathbf{I}_M)=\sum_{p=1}^{P} h_p \mathbf{H}_{\mathrm{DD}}^{(p)},
  \vspace{-0pt}
\end{equation}
where the $p$-th component $\mathbf{H}_{\mathrm{DD}}^{(p)}\in\mathbb{C}^{MN\times MN}$ is caused by the $p$-th path. 
From \cite{raviteja2018practical}, $\mathbf{H}_{\mathrm{DD}}^{(p)}$ has been derived in (\ref{Hp}) as shown at the top of the next page, where $0\leq a,b\leq MN-1$, $k_a=\lfloor a/M \rfloor$ and $l_a=a-k_a M$. 
Apparently, $\mathbf{H}^{(p)}_{\mathrm{DD}}$ contains only one nonzero element in each row and column, and thus
$\mathbf{H}_{\mathrm{DD}}$ is a sparse matrix with $P$ nonzero elements in each row and column. 
In what follows, the received sequence $\mathbf{y}$ is fed into the turbo equalizer to recover the information bits. 
In practice, the size of an OTFS frame, i.e., $MN$ is usually very large. 
Therefore, a low-complexity equalizer is necessary.

\begin{figure*}[t]
  \centering
  \vspace{-0pt}
\begin{equation} \label{Hp}
  \mathbf{H}^{(p)}_{\mathrm{DD}}(a, b)=
  \begin{cases} 
    \omega  ^{k_{p}\left[l_a-l_{p}\right]_{M}-k_aM}, & \text { if } b=\left[l_a-l_{p}\right]_{M}+M\left[k_a-k_{p}\right]_{N} \text { and } l_a<l_{p}      \\
    \omega^{k_{p}\left[l_a-l_{p}\right]_{M}},      & \text { if } b=\left[l_a-l_{p}\right]_{M}+M\left[k_a-k_{p}\right]_{N} \text { and } l_a  \geq l_{p} \\
    0,                                                        & \text { otherwise }
  \end{cases}
  \vspace{-0pt}
\end{equation}
\noindent\rule[0.25\baselineskip]{\textwidth}{0.3pt}
\end{figure*}

\section{Doubly-iterative sparsified MMSE turbo equalizer}
\label{section3}
In this section, we propose a doubly-iterative sparsified MMSE (DI-S-MMSE) turbo equalizer, which is composed of a SISO MMSE estimator and a SISO decoder.
The proposed equalizer has a doubly iterative structure, i.e., the outer and inner iteration.
The outer iteration is operated by passing soft information between the MMSE estimator and the decoder iteratively; the inner iteration is performed inside the MMSE estimator to reduce the computational cost by means of two iterative algorithms, i.e., the GMRES algorithm at the initial outer iteration and the FSPAI algorithm at the subsequent outer iterations. 
This section mainly focuses on the outer iteration, and the following sections will investigate the sparsification methods and the iterative algorithms.

\vspace{-10pt}
\subsection{An overview of the outer iteration}
As shown in \figurename~\ref{OTFSsys}, the outer iteration of our proposed equalizer consists of three steps: 1) perform MMSE estimation using the {\it a priori} information; 2) calculate the extrinsic information for the decoder; 3) after decoding, update means and variances, and sparsify the covariance matrix for the MMSE estimator.
In this paper, we use quadrature phase shift keying (QPSK) with alphabet $\mathcal{S}=\{ s_1=(1+i)/\sqrt{2},s_2=(1-i)/\sqrt{2},s_3=(-1+i)/\sqrt{2},s_4=(-1-i)/\sqrt{2} \}$.
Here, $s_k$ for $k=1,\dots,4$ represents the $k$-th constellation, whose corresponding two bits $\mathbf{s}_k=[s_{k,1}, s_{k,2}]$ are $[0,0],[0,1],[1,0]$ and $[1,1]$ respectively.


The first step is to estimate the transmitted symbols from the received sequence $\mathbf{y}$ by using the MMSE estimator. 
The input of the MMSE estimator consists of the mean $m_n=\mathbb{E}\{x_n\}$ and the variance $v_n=\mathbb{E}\{|x_n-m_n|^2\}$ of each symbol $x_n$ for $n=1,\dots, MN$. 
To minimize the mean square error (MSE) $\mathbb{E}\{\left\| \mathbf{x-\hat{x}} \right\|^2 \}$, the estimator derives the estimated symbol $\hat{x}_n$ from $\mathbf{y}$ as \cite{poor2013introduction}
\vspace{-0pt}
\begin{equation} \label{MMSE}
  \hat{x}_n = m_n+\mathrm{Cov}\{x_n,\mathbf{y}\}\mathrm{Cov}\{\mathbf{y},\mathbf{y}\}^{-1}(\mathbf{y}-\mathbb{E}\{\mathbf{y}\}).
  \vspace{-0pt}
\end{equation}
For notational convenience, the covariance matrix of the received sequence $\mathbf{y}$ is denoted by $\mathbf{A}=\mathrm{Cov}\{\mathbf{y},\mathbf{y}\} $, and computing its inverse largely dominates the complexity of our equalizer.

Next, the estimated symbol $\hat{x}_n$ is mapped to its corresponding log-likelihood ratio (LLR) for the decoder. 
To harvest the potential full channel diversity gain, the {\it a posteriori} probability of each bit with respect to the entire received sequence $\mathbf{y}$ is desirable, i.e., $P(d_{n,j}|\mathbf{y})$, where
$\mathbf{d}_n=[d_{n,1},d_{n,2}]$ denotes the interleaved bits corresponding to the symbol $x_n\in \mathcal{S}$ and $j=1,2$. 
However, computing $P(d_{n,j}|\mathbf{y})$ is extremely time-consuming, especially when $M$ and $N$ are large. 
Therefore, \cite{tuchler2002minimum} inventively introduces the {\it a posteriori} probability with respect to the estimated symbol $P(d_{n,j}|\hat{x}_n)$ to approximate $P(d_{n,j}|\mathbf{y})$.
The LLR of $P(d_{n,j}|\hat{x}_n)$ is defined as
\vspace{-2pt}
\begin{equation}\small
  \begin{aligned}
    &L(d_{n,j}|\hat{x}_n)  \triangleq \mathrm{ln}\frac{P(d_{n,j}=0|\hat{x}_n)}{P(d_{n,j}=1|\hat{x}_n)} \\           
    =&\mathrm{ln}\frac{\sum_{\forall \mathbf{d}_n:d_{n,j}=0} p(\hat{x}_n|\mathbf{d}_n)P(\mathbf{d}_n) }
    {\sum_{\forall \mathbf{d}_n:d_{n,j}=1} p(\hat{x}_n|\mathbf{d}_n)P(\mathbf{d}_n)}                                             \\
      =&\mathrm{ln}\frac{\sum_{\forall \mathbf{d}_n:d_{n,j}=0} p(\hat{x}_n|\mathbf{d}_n) \prod_{\forall j^{\prime}:j^{\prime}\neq j} P(d_{n,j^{\prime}}) }
    {\sum_{\forall \mathbf{d}_n:d_{n,j}=1} p(\hat{x}_n|\mathbf{d}_n)  \prod_{\forall j^{\prime}:j^{\prime}\neq j} P(d_{n,j^{\prime}})} +\mathrm{ln}\frac{P(d_{n,j}=0)}{P(d_{n,j}=1)}, \label{Ext}
  \end{aligned}
  \vspace{-0pt}
\end{equation}
where $j^{\prime}=1,2$.
The first term of (\ref{Ext}) is defined as the extrinsic LLR $L_e(d_{n,j})$, and the second term is defined as the {\it a priori} LLR $L(d_{n,j})$, which is derived from the decoder. 
It is worth noting that $L_e(d_{n,j})$ and $L(d_{n,j})$ should be strictly independent.

Last, the estimator outputs the extrinsic information, which serves as the {\it a priori} information of the decoder. 
Thereafter, the decoder further mitigates the interference, and feeds the extrinsic information back to the estimator, from which new means and variances are obtained. 
Before we start another outer iteration, the covariance matrix $\mathbf{A}$ is sparsified to reduce the detection complexity.
In the following, we will give a detailed description of these three steps.



\vspace{-10pt}
\subsection{Step 1: MMSE Estimation Using the {\it A Priori} Information}
At the initial iteration, it is assumed that $m_n=0$ and $v_n=1$, while at the subsequent iterations, means and variances are provided by the decoder.
By substituting (\ref{yHDDx}), we have
\vspace{-0pt}
\begin{subequations}
  \begin{align}
    \label{Covx} \mathrm{Cov} \{\mathbf{x}, \mathbf{x}\} & = \mathrm{diag} [v_1,\dots,v_{MN}] \triangleq \mathbf{V},                              \\
    \label{Covy} \mathrm{Cov} \{\mathbf{y}, \mathbf{y}\} & = \mathbf{A}= \mathbf{H}_{\mathrm{DD}}\mathbf{V}\mathbf{H}_{\mathrm{DD}}^{\mathrm{H}}+\PSD \mathbf{I},                                                                    \\
    \mathrm{Cov} \{\mathbf{z}, \mathbf{z}\}              & =
    \label{Covz}\left(\mathbf{F  }_{N} \otimes \mathbf{I}_M\right) \mathrm{Cov} \{\mathbf{z}_{\mathrm{T}}, \mathbf{z}_{\mathrm{T}}\} \left(\mathbf{F  }_{N}^{\mathrm{H}} \otimes \mathbf{I}_M\right) = \PSD \mathbf{I}, 
  \end{align}
\end{subequations}
\vspace{-10pt}

\noindent where (\ref{Covx}) is obtained due to the assumption that the transmitted symbols are independent, which approximately holds due to the interleaver; (\ref{Covz}) is derived from $\mathrm{Cov} \{\mathbf{z}_{\mathrm{T}}, \mathbf{z}_{\mathrm{T}}\} = \PSD \mathbf{I}$ and $(\mathbf{F  }_{N} \otimes \mathbf{I}_M)\left(\mathbf{F  }_{N}^{\mathrm{H}} \otimes \mathbf{I}_M\right)=\mathbf{I}$ where the subscript of $\mathbf{I}_{MN}$ can be omitted without ambiguity.
Moreover, we derive
$
  \mathrm{Cov}\{x_n,\mathbf{y}\} = \mathrm{Cov}\{x_n,\mathbf{x}\}\mathbf{H}_{\mathrm{DD}}^{\mathrm{H}} = v_n\mathbf{h}_n^{\mathrm{H}},
$
where $\mathbf{h}_{n}$ represents the $n$-th column of $\mathbf{H}_{\mathrm{DD}}$. 
By substituting (\ref{Covy}) and $\mathrm{Cov}\{x_n,\mathbf{y}\}$ into (\ref{MMSE}), the estimated symbol $\hat{x}_n$ is derived as
\vspace{-0pt}
\begin{equation} \label{MMSE2}
  \hat{x}_n = m_n+v_n\mathbf{h}_n^{\mathrm{H}}
  \mathbf{A}^{-1}
  (\mathbf{y}-\mathbf{H}_{\mathrm{DD}}\mathbf{m}),
  \vspace{-0pt}
\end{equation}
where $\mathbf{m}=[m_1,\dots,m_{MN}]^{\mathrm{T}}$ denotes the vector of means.

Note that the derivation of $\hat{x}_n$ relies on the {\it a prior} knowledge $L(d_{n,j})$, so that the resultant $L_e(d_{n,j})$ derived from $\hat{x}_n$ is not independent of $L(d_{n,j})$. 
Thus, (\ref{MMSE2}) is modified as \cite{tuchler2002minimum}
\vspace{-0pt}
\begin{equation} \label{MMSE3}
  \hat{x}_n = 0+1\cdot\mathbf{h}_n^{\mathrm{H}}
  \left(\mathbf{A}+(1-v_n) \mathbf{h}_n \mathbf{h}_n^{\mathrm{H}} \right)^{-1}
  (\mathbf{y}-\mathbf{H}_{\mathrm{DD}}\mathbf{m}+m_n\mathbf{h}_n),
  \vspace{-0pt}
\end{equation}
where $\hat{x}_n$ excludes its own {\it a priori} knowledge by setting $m_n=0$ and $v_n=1$. 
To avoid computing the inverse in (\ref{MMSE3}) repeatedly for each $\hat{x}_n$, the Woodbury matrix identity \cite{golub1996matrix} is utilized as
\vspace{-0pt}
\begin{equation}
  \begin{aligned} \label{Woodbury}
      &\mathbf{h}_n^{\mathrm{H}}\left(\mathbf{A}+(1-v_n) \mathbf{h}_n \mathbf{h}_n^{\mathrm{H}} \right)^{-1}\\=&                                                                                                                        
      \mathbf{h}_n^{\mathrm{H}}\mathbf{A}^{-1}-\mathbf{h}_n^{\mathrm{H}}\mathbf{A}^{-1} \mathbf{h}_n \left[(1-v_n)^{-1}+\mathbf{h}_n^{\mathrm{H}} \mathbf{A}^{-1} \mathbf{h}_n\right]^{-1}\mathbf{h}_n^{\mathrm{H}}\mathbf{A}^{-1} \\
      =&[1+(1-v_n)\xi _n]^{-1}\mathbf{h}_n^{\mathrm{H}}\mathbf{A}^{-1},
  \end{aligned}
  \vspace{-0pt}
\end{equation}
where $\xi _n= \mathbf{h}_n^{\mathrm{H}} \mathbf{A}^{-1} \mathbf{h}_n$.
Then, by substituting (\ref{Woodbury}) into (\ref{MMSE3}), the estimated symbol is
\vspace{-0pt}
\begin{equation} \label{MMSELAST}
  \hat{x}_n = \frac{ \mathbf{h}_n^{\mathrm{H}} \mathbf{A}^{-1} (\mathbf{y}-\mathbf{H}_{\mathrm{DD}}\mathbf{m})+m_n\xi _n } {1+(1-v_n)\xi _n}.
\end{equation}
From (\ref{MMSELAST}), we observe that $\mathbf{A}^{-1}$ is used twice, i.e., $\mathbf{h}_n^{\mathrm{H}} \mathbf{A}^{-1} (\mathbf{y}-\mathbf{H}_{\mathrm{DD}}\mathbf{m})$ and $\xi_n$, which dominates the complexity of MMSE estimation.

\vspace{-5pt}
\subsection{Step 2: Calculate the Extrinsic Information for the Decoder}
To calculate $L_e(d_{n,j})$ in (\ref{Ext}), $p(\hat{x}_n|\mathbf{d}_n=\mathbf{s}_k)$ can be assumed to be Gaussian distribution as
$p(\hat{x}_n|\mathbf{d}_n=\mathbf{s}_k) = p(\hat{x}_n|x_n=s_k) =
  1/(\pi \sigma_{n,k}^2)\mathrm{exp}(-|\hat{x}_n-\mu _{n,k}|^2/\sigma_{n,k}^2)$.
Here, $\mu _{n,k} \triangleq \mathbb{E}\{ \hat{x}_n|x_n=s_k \}$ and $\sigma_{n,k}^2\triangleq \mathbb{E}\{ |\hat{x}_n-\mu _{n,k}|^2 \mid x_n=s_k\}$ are the mean and variance of the estimation error $e_{n,k}=\hat{x}_n-s_k$. 
According to (\ref{MMSELAST}), $\mu _{n,k}$ and $\sigma_{n,k}^2$ are given by
\vspace{-5pt}
\begin{equation} \label{mu}
  \begin{aligned}
    \mu _{n,k}  = &\frac{
      \left[\mathbf{h}_n^{\mathrm{H}} \mathbf{A}^{-1} \left( \mathbb{E}\{\mathbf{y}|x_n=s_k\} - \mathbf{H}_{\mathrm{DD}}\mathbf{m}\right)+m_n\xi _n\right]
    }{1+(1-v_n)\xi _n}             \quad\quad\quad               \\
     & = \frac{1}{1+(1-v_n)\xi _n} \mathbf{h}_n^{\mathrm{H}} \mathbf{A}^{-1}\mathbf{h}_n s_k = \frac{\xi _ns_k}{1+(1-v_n)\xi _n},
  \end{aligned}
\end{equation}
\vspace{-5pt}
\begin{equation} \label{sigma}
  \begin{aligned}
    \sigma_{n,k}^2 & = \frac{
      \mathbf{h}_n^{\mathrm{H}} \mathbf{A}^{-1}
    \mathrm{Cov}\{\mathbf{y},\mathbf{y}|x_n=s_k\}  (\mathbf{A}^{-1})^{\mathrm{H}} \mathbf{h}_n
    }{\left[1+(1-v_n)\xi _n\right]^2}   \\
    & =\frac{\xi _n^{\ast}-v_n\xi _n \xi _n^{\ast} }{\left[1+(1-v_n)\xi _n\right]^2}
    \overset{(a)}{=} \frac{\xi _n(1-v_n\xi _n)}{\left[1+(1-v_n)\xi _n\right]^2},
  \end{aligned}
  \quad\quad\quad\quad\quad~
\end{equation}
\vspace{-5pt}

\noindent where (a) follows from the fact that $\mathbf{A}$ is Hermitian, and thus $\xi _n$ is a real number. By substituting (\ref{mu}) and (\ref{sigma}) into the extrinsic LLR of (\ref{Ext}), $L_e(d_{n,j})$ for $j=1,2$ are derived as \cite{tuchler2002minimum}
\vspace{-0pt}
\begin{subequations}
\begin{align}
  \label{Le}
    L_e(d_{n,1})  &= \frac{\sqrt{8} [1+(1-v_n)\xi_n] \mathrm{Re}\{ \hat{x}_n\}} {1-v_n \xi _n}, \\
    L_e(d_{n,2})  &= \frac{\sqrt{8} [1+(1-v_n)\xi_n] \mathrm{Im}\{ \hat{x}_n\}} {1-v_n \xi _n}.
\end{align}
\end{subequations}
Then, $L_e(c_{n,j})$ is deinterleaved into $L_e(c_{n,j})$, and serves as the {\it a priori} information of the decoder.

\vspace{-5pt}
\subsection{Step 3: Update Means and Variances and Sparsify the Covariance Matrix}
After decoding, the decoder outputs new extrinsic information $L_e^{\prime}(c_{n,j})$. Through interleaver, $L_e^{\prime}(d_{n,j})$ is obtained and fed back to the estimator.  Considering the definition of LLR, $P(x_n=s_k)$ can be readily derived as
\vspace{-3pt}
\begin{equation}
  \begin{aligned}
    &P(x_n=s_k) =  \prod_{j=1}^{2}P(d_{n,j}=s_{k,j})                                  
    \\=&             \prod_{j=1}^{2} \frac{1}{2} \cdot \left[1-\mathrm{sgn}(s_{k,j}-\frac{1}{2}) \cdot \mathrm{tanh}(\frac{L_e^{\prime}(d_{n,j})}{2})\right],
    \vspace{-0pt}
  \end{aligned}
\end{equation}
where $\mathrm{sgn}(\cdot)$ is the sign function.
From $\mathcal{S}$, new means and variances are obtained as
\vspace{-0pt}
\begin{equation*}
  \begin{aligned}
    m_n & = \sum_{k=1}^{4} s_k\cdot P(x_n=s_k)\\
    &= \frac{1}{\sqrt{2}} \left[ \mathrm{tanh}(\frac{L_e^{\prime}(d_{n,1})}{2})+i\cdot \mathrm{tanh}(\frac{L_e^{\prime}(d_{n,2})}{2}) \right], \\
    v_n & = \sum_{k=1}^{4} |x_n-m_n|^2 P(x_n=s_k)= 1-|m_n|^2.
  \end{aligned}
\end{equation*}
\vspace{-0pt}

\noindent
At the end of an outer iteration, we expect to sparsify $\mathbf{A}$ for ease of computing $\mathbf{A}^{-1}$.
To this end, we resort to graph theory to propose some guidelines as shown in the next section. 

\vspace{-0pt}
\section{Sparsification of the covariance matrix based on graph theory}
\label{section4}
As mentioned before, high complexity of the proposed equalizer is mainly caused by computing $\mathbf{A}^{-1}$ in MMSE estimation, since $\mathbf{A}$ is an $MN\times MN$ random matrix, and $MN$ is typically large. 
For example, if we have $M=64$ and $ N=32$, directly calculating $\mathbf{A}^{-1}$ takes roughly $(MN)^3=2^{33}$ complex multiplications. 
According to (\ref{Covy}), $\mathbf{A}$ is a function of the random sparse matrix $\mathbf{H}_{\mathrm{DD}}$ so that $\mathbf{A}$ is also a random sparse matrix. 
Unfortunately, the sparsity of $\mathbf{A}$ cannot guarantee that $\mathbf{A}^{-1}$ is also sparse.
On the contrary, the inverse $\mathbf{A}^{-1}$ is dense in most cases.
Hence, in this section, we resort to graph theory to provide some insight into the sparsity pattern of $\mathbf{A}$, and propose some guidelines to sparsify $\mathbf{A}$ so that $\mathbf{A}^{-1}$ can be approximated as a sparse matrix and thus the detection complexity can be greatly reduced.

\vspace{-0pt}
\subsection{Sparsification Guidelines Based on Graph Theory}
Without loss of generality, we first introduce an arbitrary random sparse matrix $\mathbf{S}$, and present two important definitions to consider its sparsity and randomness respectively.
Then using graph theory, we propose two sparsification guidelines to sparsify $\mathbf{S}$.
To begin with, we define the sparsity pattern of $\mathbf{S}$ to describe the locations of nonzero elements.

\vspace{-3pt}
\begin{definition} 
(Sparsity pattern)
For an arbitrary random sparse matrix $\mathbf{S}\in\mathbb{C}^{N_S\times N_S}$, its sparsity pattern is denoted by a Boolean matrix $\mathbf{\Upsilon_{\rdmmtx}}\in\mathbb{B}^{N_{\rdmmtx}\times N_{\rdmmtx}}$ to extract the locations of nonzero elements, where $\mathbf{\Upsilon_{\rdmmtx}}(i,j)=1$ if and only if $\mathbf{\rdmmtx}(i,j)\neq 0$.
\end{definition}
\vspace{-3pt}

In our system, we only focus on symmetric sparsity pattern. Therefore, we assume $\mathbf{\Upsilon_{\rdmmtx}}$ is symmetric in the following.
Here, $\mathbf{\Upsilon_{\rdmmtx}}$ can be easily illustrated by its graph, denoted by $\mathcal{G}(\mathcal{N},\mathcal{E})$, which consists of a set of nodes $\mathcal{N}$ and a set of edges $\mathcal{E}$ between nodes\cite{osterby1983direct}.
In this graph, the node set $\mathcal{N}$ contains $N_{\rdmmtx}$ nodes that represent the corresponding row or column indices of $\mathbf{\rdmmtx}$; the edge set $\mathcal{E}$ includes all the undirected lines connecting from node $i$ to node $j$ if $\mathbf{\rdmmtx}(i,j)$ and $\mathbf{\rdmmtx}(j,i)$ are nonzero.
Besides, the number of edges connected to a node is defined as the degree, denoted by $D$.

On the other hand, we need to take the randomness of $\mathbf{\rdmmtx}$ into consideration.
Since $\mathbf{\rdmmtx}$ is a random sparse matrix, degree of its nodes can be denoted by a discrete random variable ${\varphi}_S$, where $0\leq {\varphi}_S \leq N_{\rdmmtx}-1$.
Based on this, the sparsity level of $\mathbf{\rdmmtx}$ is statistically measured as follows.

\vspace{-0pt}
\begin{definition} \label{def2} 
  (Sparsity level)
  For an arbitrary random sparse matrix $\mathbf{\rdmmtx}$, let $F_{{\varphi}_S}(D)=P({\varphi}_S\leq D)$ denote the CDF of ${\varphi}_S$, where $D$ denotes a specific value of degrees and $D=0,1,\dots,N_{\rdmmtx}-1$.
  This CDF $F_{{\varphi}_S}(D)$ is defined as the sparsity level of $\mathbf{\rdmmtx}$.
\end{definition}
\vspace{-0pt}

With the help of the above definition, we can easily compare the sparsity level of two random sparse matrices.
For two random sparse matrices $\mathbf{\rdmmtx}_1$ and $\mathbf{\rdmmtx}_2$, let $\varphi_{S_1}$ and $\varphi_{S_2}$ denote the node degrees of $\mathbf{\rdmmtx}_1$ and $\mathbf{\rdmmtx}_2$ respectively.
If their sparsity levels satisfy $F_{\varphi_{S_1}}(D)>F_{\varphi_{S_2}}(D)$ for any $D$, more nodes of $\mathbf{\rdmmtx}_1$ have small degrees than those of $\mathbf{\rdmmtx}_2$.
Therefore, $\mathbf{\rdmmtx}_1$ is sparser than $\mathbf{\rdmmtx}_2$.
As will be shown in Section \ref{compl}, the average complexity of our proposed equalizer is largely dictated by the sparsity level of $\mathbf{A}^{-1}$.

Next, based on graph theory, we propose two guidelines to sparsify $\mathbf{\rdmmtx}$ from the edge and node aspects respectively so that the complexity of deriving $\mathbf{\rdmmtx}^{-1}$ can be greatly reduced.
\begin{enumerate}
  \item {\bf Guideline 1} (Sparsify $\mathbf{S}$ from edge constraint): Normalize the diagonal elements of $\mathbf{\rdmmtx}$ through Jacobi scaling as
          $\mathbf{\widetilde{\rdmmtx}}=\mathbf{J}^{-\frac{1}{2}}\mathbf{\rdmmtx}\mathbf{J}^{-\frac{1}{2}}$, where $\mathbf{J}$ is a diagonal matrix and $\mathbf{J}(i,i)=|\mathbf{\rdmmtx}(i,i)|$ for $i=1,\dots,N_{\rdmmtx}$\cite{sedlacek2012sparse}.
        Then, if an edge satisfies $ |\mathbf{\widetilde{\rdmmtx}}(i,j)|\leq \epsilon_{A} $, then this edge is neglected and $\mathbf{{\rdmmtx}}(i,j)$ is set to be zero.
        Here, $\epsilon_{A}$ is an appropriate threshold, and it is preferable to let $\epsilon_{A}\ll 1$ in case of significant performance degradation;
  \item {\bf Guideline 2} (Sparsify $\mathbf{S}$ from node constraint): If the degree of a node is smaller than a threshold $\epsilon_D$, i.e., $D \leq \epsilon_D$, this node can be disconnected from other nodes. 
  To avoid incurring too much imprecision, $\epsilon_D$ should be much smaller than the maximum degree.
\end{enumerate}
The first guideline eliminates edges whose magnitude is negligible compared with the corresponding diagonal element, since these edges have little impact on the matrix operations of $\mathbf{\rdmmtx}$.
Here, we use the Jacobi scaling to normalize the diagonal elements for ease of choosing $\epsilon_A$.
When the diagonal elements of $\mathbf{\rdmmtx}$ have larger magnitude than most of the off-diagonal ones, guideline 1 can effectively sparsify $\mathbf{\rdmmtx}$.
Moreover, the second guideline further enhances the sparsity level  by isolating the nodes with small degrees, since these nodes are hardly related with other nodes.

According to the guidelines, a smaller $\epsilon_{A}$ and a larger $\epsilon_D$ result in a sparser $\mathbf{\rdmmtx}$.
Once $\mathbf{\rdmmtx}$ is well sparsified, the computational complexity of calculating $\mathbf{\rdmmtx}^{-1}$ can be greatly reduced.
However, this complexity reduction is achieved at the cost of performance degradation.
Hence, $\epsilon_{A}$ and $\epsilon_D$ will be carefully chosen from considerable simulations in Section \ref{section7} to strike a balance between the complexity of calculating $\mathbf{\rdmmtx}^{-1}$ and performance degradation.

\vspace{-0pt}
\subsection{Apply the Sparsification Guidelines to our Equalizer}
Now, we apply the above definitions to $\mathbf{A}$, and leverage the sparsification guidelines to reduce the complexity of our equalizer.
From (\ref{HDDp}) and (\ref{Hp}), $\mathbf{H}_{\mathrm{DD}}$ is the sum of $\mathbf{H}_{\mathrm{DD}}^{(p)}$, and $\mathbf{H}_{\mathrm{DD}}^{(p)}$ contains only one nonzero element in each row and column, whose sparsity pattern is given by
\vspace{-0pt}
\begin{equation} \label{spHDDp}
  \mathbf{\Upsilon}_{\mathbf{H}_{\mathrm{DD}}^{(p)}}=
  \mathrm{sp}(\mathbf{H}_{\mathrm{DD}}^{(p)})=\mathbf{\Pi}_N^{k_p}\otimes \mathbf{\Pi}_M^{l_p},
  \vspace{-0pt}
\end{equation}
where $\mathrm{sp}(\cdot)$ denotes the sparsity pattern of a matrix.
Then, the sparsity pattern of $\mathbf{A}$ is
\vspace{-0pt}
\begin{equation} \label{spA}
  \begin{aligned}
    \mathbf{\Upsilon_{A}} & = \mathrm{sp}(\mathbf{H}_{\mathrm{DD}} \mathbf{V} \mathbf{H}_{\mathrm{DD}}^{\mathrm{H}}+N_0\mathbf{I} ) \\&=  
    \mathrm{sp}\left( \sum_{p=1}^P  \mathbf{\Pi}_N^{k_p}\otimes \mathbf{\Pi}_M^{l_p} \cdot \sum_{p^{\prime}=1}^P \mathbf{\Pi}_N^{-k_{p^{\prime}}}\otimes \mathbf{\Pi}_M^{-l_{p^{\prime}}} \right)\\
    &= \mathrm{sp}\left( \sum_{p=1}^P \sum_{p^{\prime}=1}^P \mathbf{\Pi}_N^{k_p-k_{p^{\prime}}}\otimes \mathbf{\Pi}_M^{l_p-l_{p^{\prime}}}   \right)\\
    &=\sum_{(\delta_k,\delta_l)\in\mathcal{D}} (\mathbf{\Pi}_N^{\delta_k}\otimes \mathbf{\Pi}_M^{\delta_l})
  \end{aligned},
  \vspace{-0pt}
\end{equation}
where the identity matrix is dropped, since it is readily proven that the diagonal elements of $\mathbf{H}_{\mathrm{DD}} \mathbf{V} \mathbf{H}_{\mathrm{DD}}^{\mathrm{H}}$ are positive.
Here, $\delta_k=k_p-k_{p^{\prime}}$, $\delta_l=l_p-l_{p^{\prime}}$ for $p,{p^{\prime}}=1,\dots,P$, and $\mathcal{D}$ denotes the set of pairs $(\delta_k,\delta_l)$.
We observe that if $(\delta_k,\delta_l)\in\mathcal{D}$, then $(-\delta_k,-\delta_l)\in\mathcal{D}$, which is consistent with the fact that $\mathbf{\Upsilon_{A}}$ is symmetric.
Besides, the degree of each node $\varphi_A$ is the number of nonzero off-diagonal elements in this column.
According to (\ref{spA}), all the node degrees satisfy $\varphi_A=|\mathcal{D}|-1$.

An example is presented to illustrate the sparsity patterns of $\mathbf{H}_{\mathrm{DD}}$ and $\mathbf{A}$ as shown in \figurename~\ref{spHDDA}.
Assume there are $P=3$ paths with the delay indices $[0,1,2]$ and Doppler indices $[-1,1,1]$, and a small OTFS frame with $M=4$ and $N=3$ is used for convenience of analysis.
In \figurename~\ref{spHDDA} (a), different colors of the square units denote the nonzero elements caused by different paths, i.e., $\mathbf{H}_{\mathrm{DD}}^{(p)}$ for $p=1,2,3$.
In \figurename~\ref{spHDDA} (b), different kinds of the square units denote the nonzero elements caused by different pairs in $\mathcal{D}$, and the square units of pairs $(\delta_k,\delta_l)$ and $(-\delta_k,-\delta_l)$ share the same color but differ in texture.
The graph of sparse matrix $\mathbf{A}$ in \figurename~\ref{spHDDA} is shown in \figurename~\ref{GraphA} (a).
Obviously, the nodes are connected intricately where each node has degree $D=6$.
The resultant $\mathbf{A}^{-1}$ turns out to be a dense matrix whose sparsity pattern is shown in \figurename~\ref{GraphA} (d), and each node of $\mathbf{A}^{-1}$ has degree $D=MN-1$, i.e., $F_{\varphi_{A^{-1}}}(MN-2) = 0$ and $F_{\varphi_{A^{-1}}}(MN-1) = 1$.
This is the worst case that needs to be avoided.

\begin{figure}[t]

  \centering
  \subfigure[Sparsity pattern of $\mathbf{H}_{\mathrm{DD}}$]{
  \begin{minipage}[t]{0.5\linewidth}
    \centering
    \includegraphics[width=1.3in]{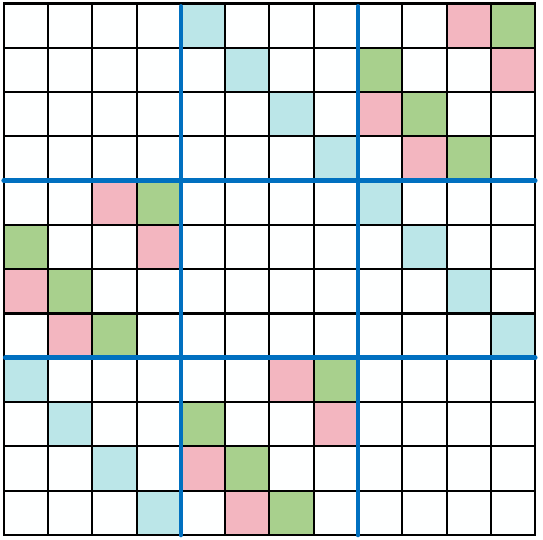}
  \end{minipage}%
  }%
  \subfigure[Sparsity pattern of $\mathbf{A}$]{
    \begin{minipage}[t]{0.5\linewidth}
      \centering
      \includegraphics[width=1.3in]{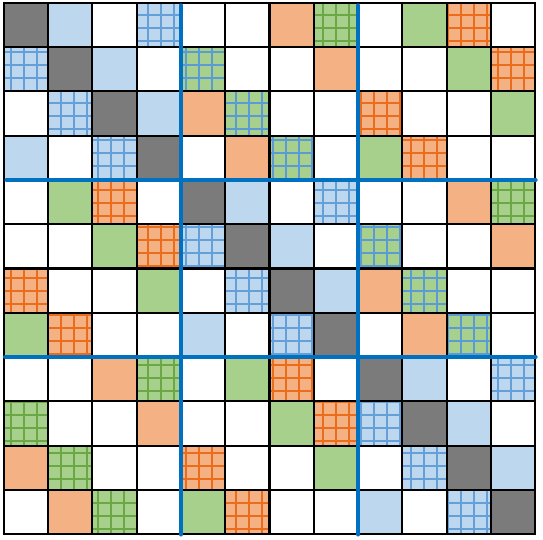}
    \end{minipage}%
  }%

  \centering
  \caption{ An example to show the sparsity patterns of $\mathbf{H}_{\mathrm{DD}}$ and $\mathbf{A}$ with $M=4$ and $N=3$, where the delay and Doppler indices are $[0,1,2]$ and $[-1,1,1]$ respectively. } \label{spHDDA}
  \vspace{0pt}
\end{figure}

\begin{figure}[t]

  \centering
  \subfigure[complete graph]{
    \begin{minipage}[t]{0.3\linewidth}
      \centering
      \includegraphics[width=1in]{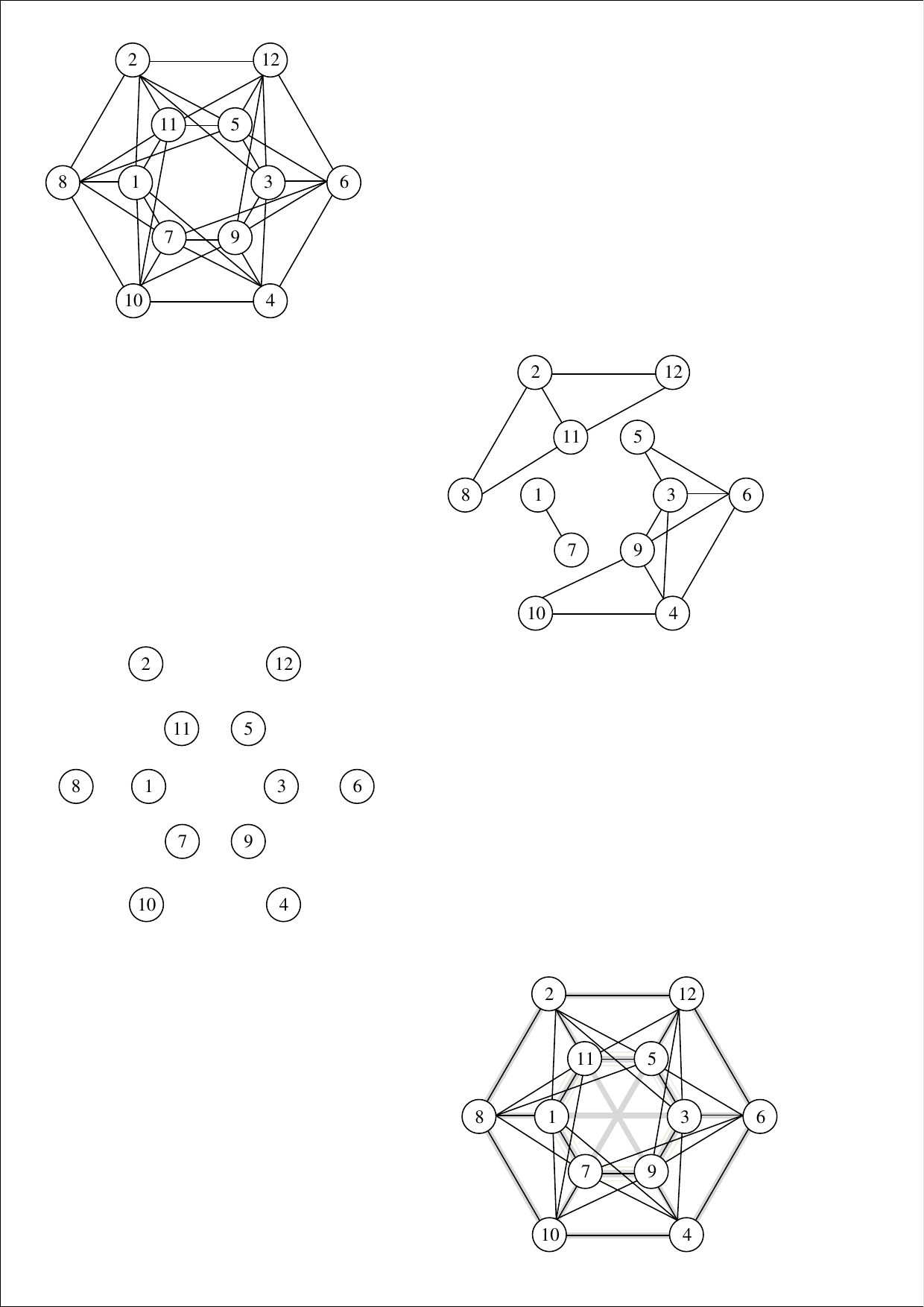}
    \end{minipage} \label{Grapha}%
  }%
  \centering
  \subfigure[partially split graph]{
    \begin{minipage}[t]{0.3\linewidth}
      \centering
      \includegraphics[width=1in]{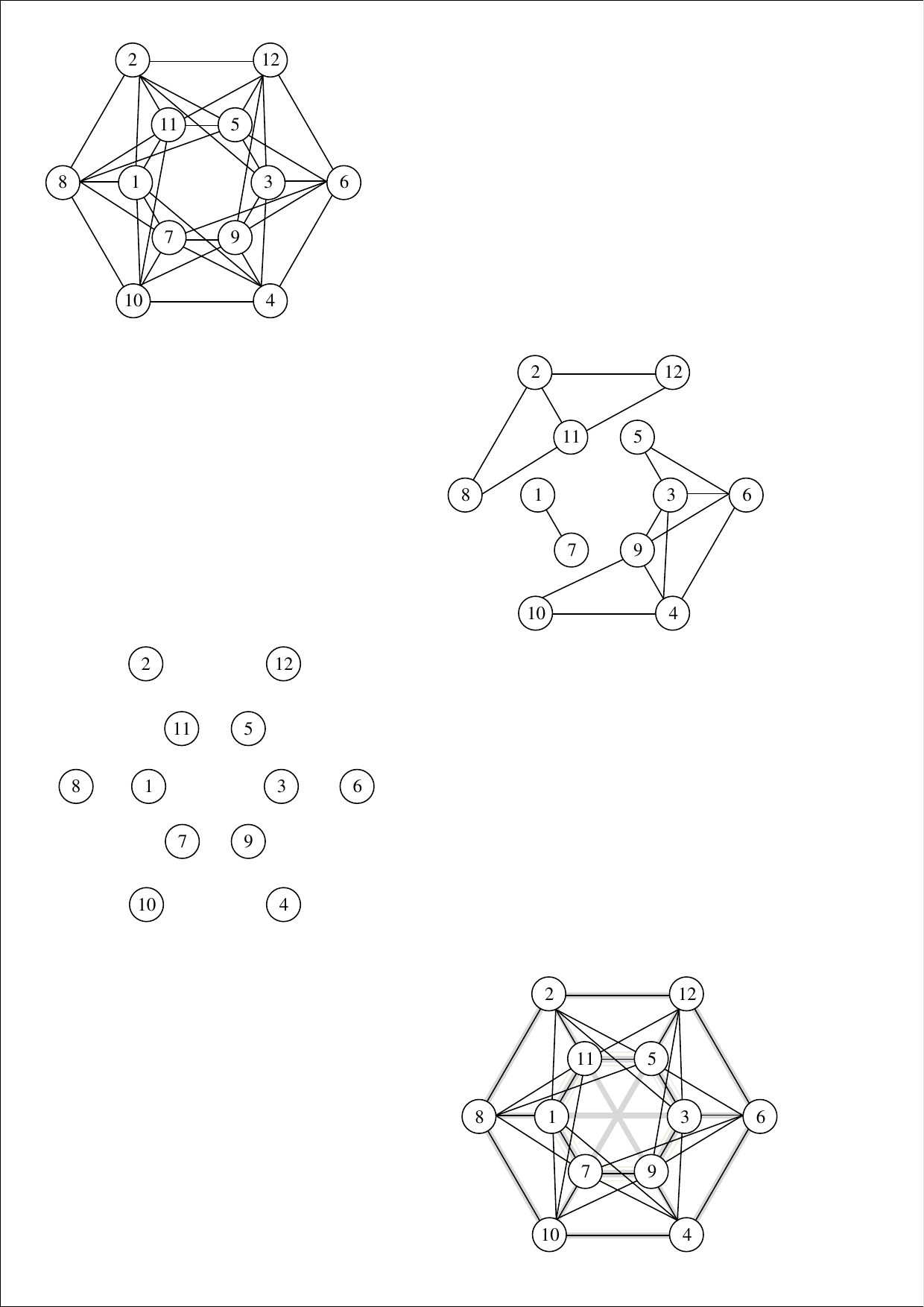}
    \end{minipage} \label{Graphb}%
  }%
  \centering
  \subfigure[fully split graph]{
    \begin{minipage}[t]{0.3\linewidth}
      \centering
      \includegraphics[width=1in]{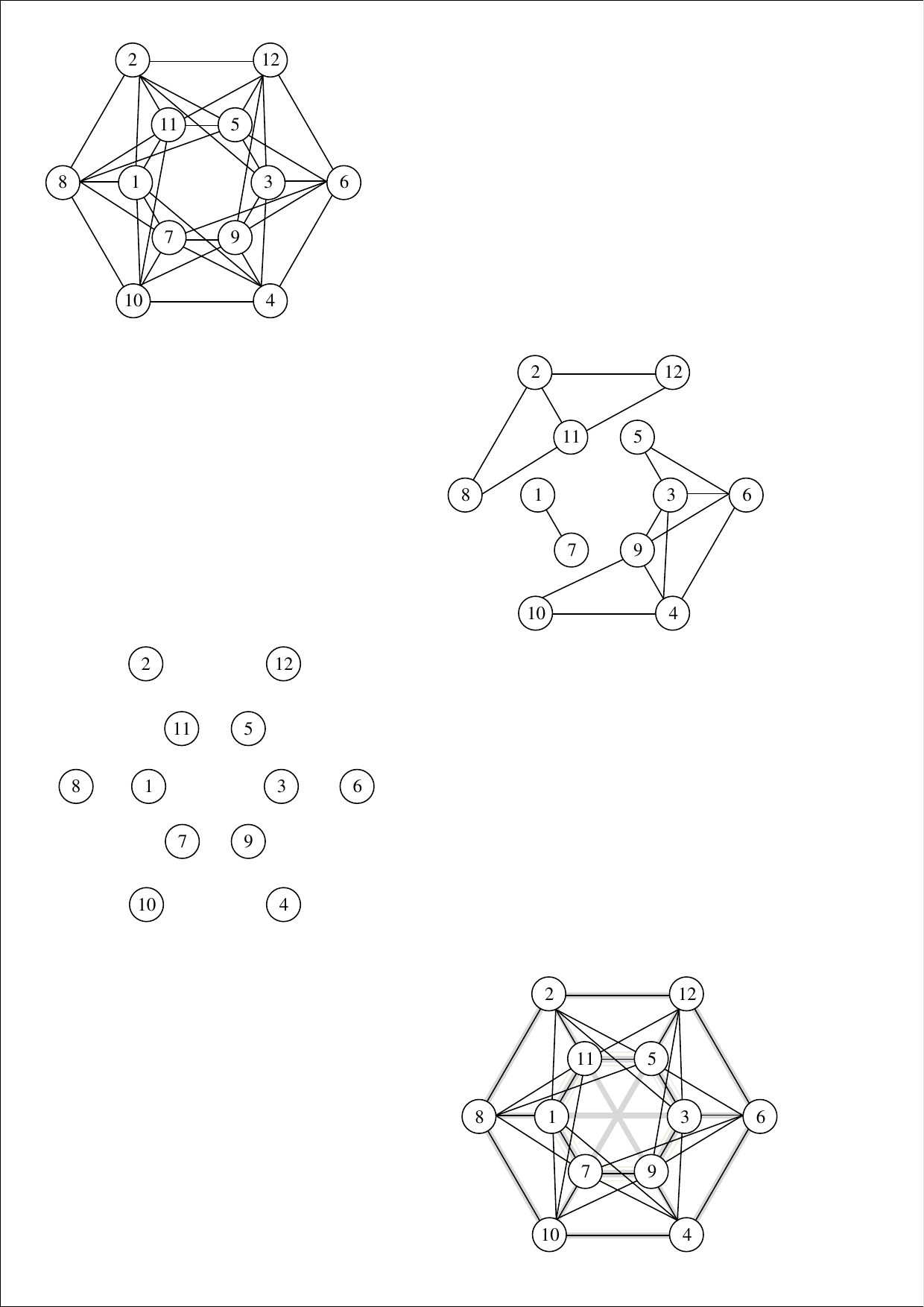}
    \end{minipage} \label{Graphc}%
  }%

  \centering
  \subfigure[$\mathbf{\Upsilon}_{\mathbf{A}^{-1}}$ of (a)]{
    \begin{minipage}[t]{0.3\linewidth}
      \centering
      \includegraphics[width=1in]{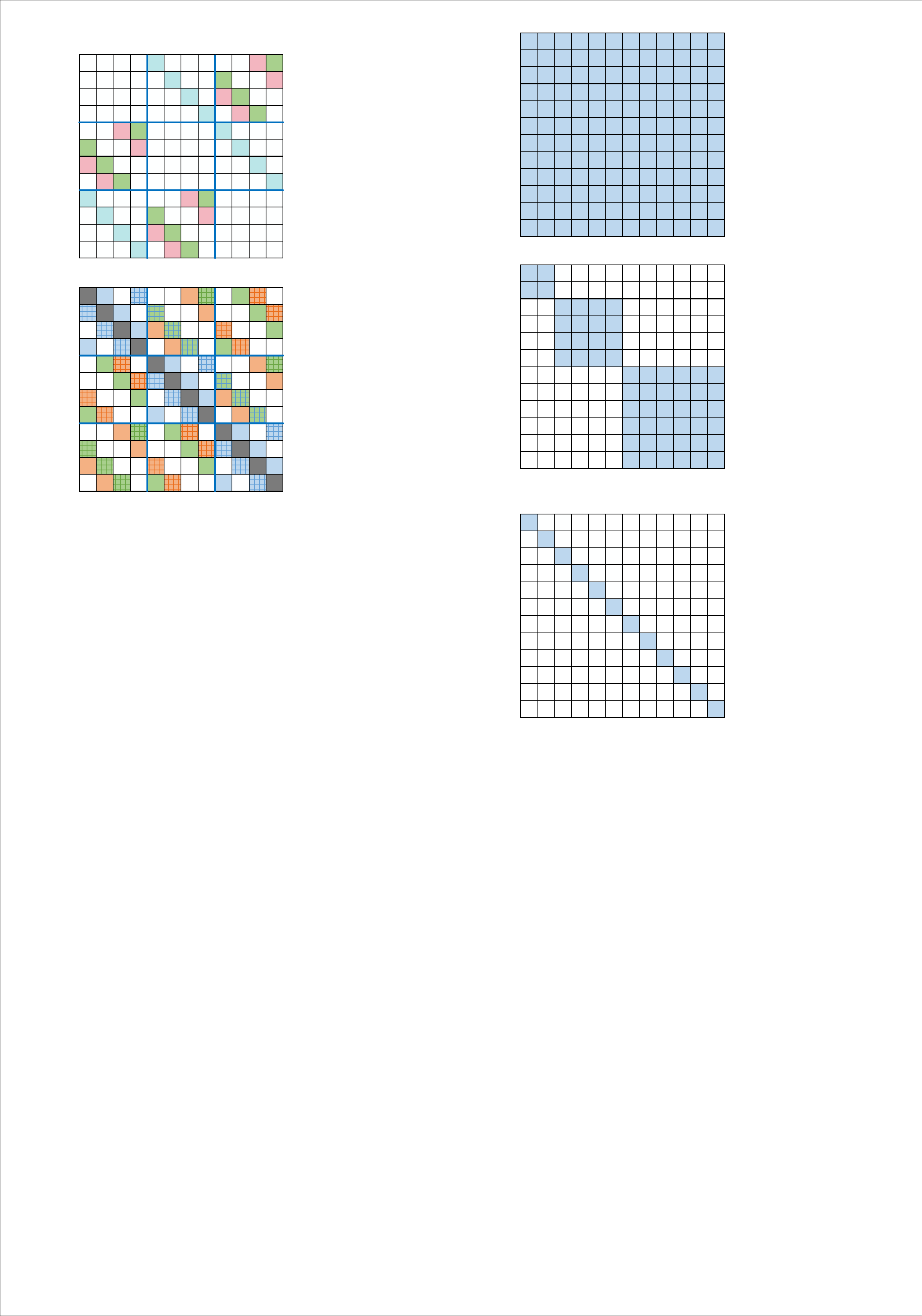}
    \end{minipage} \label{Graphd}%
  }%
  \centering
  \subfigure[$\mathbf{\Upsilon}_{\mathbf{A}^{-1}}$ of (b)]{
    \begin{minipage}[t]{0.3\linewidth}
      \centering
      \includegraphics[width=1in]{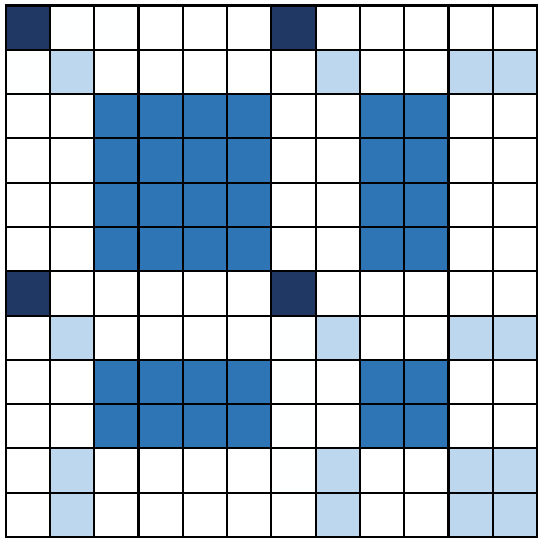}
    \end{minipage} \label{Graphe}%
  }%
  \centering
  \subfigure[$\mathbf{\Upsilon}_{\mathbf{A}^{-1}}$ of (c)]{
    \begin{minipage}[t]{0.3\linewidth}
      \centering
      \includegraphics[width=1in]{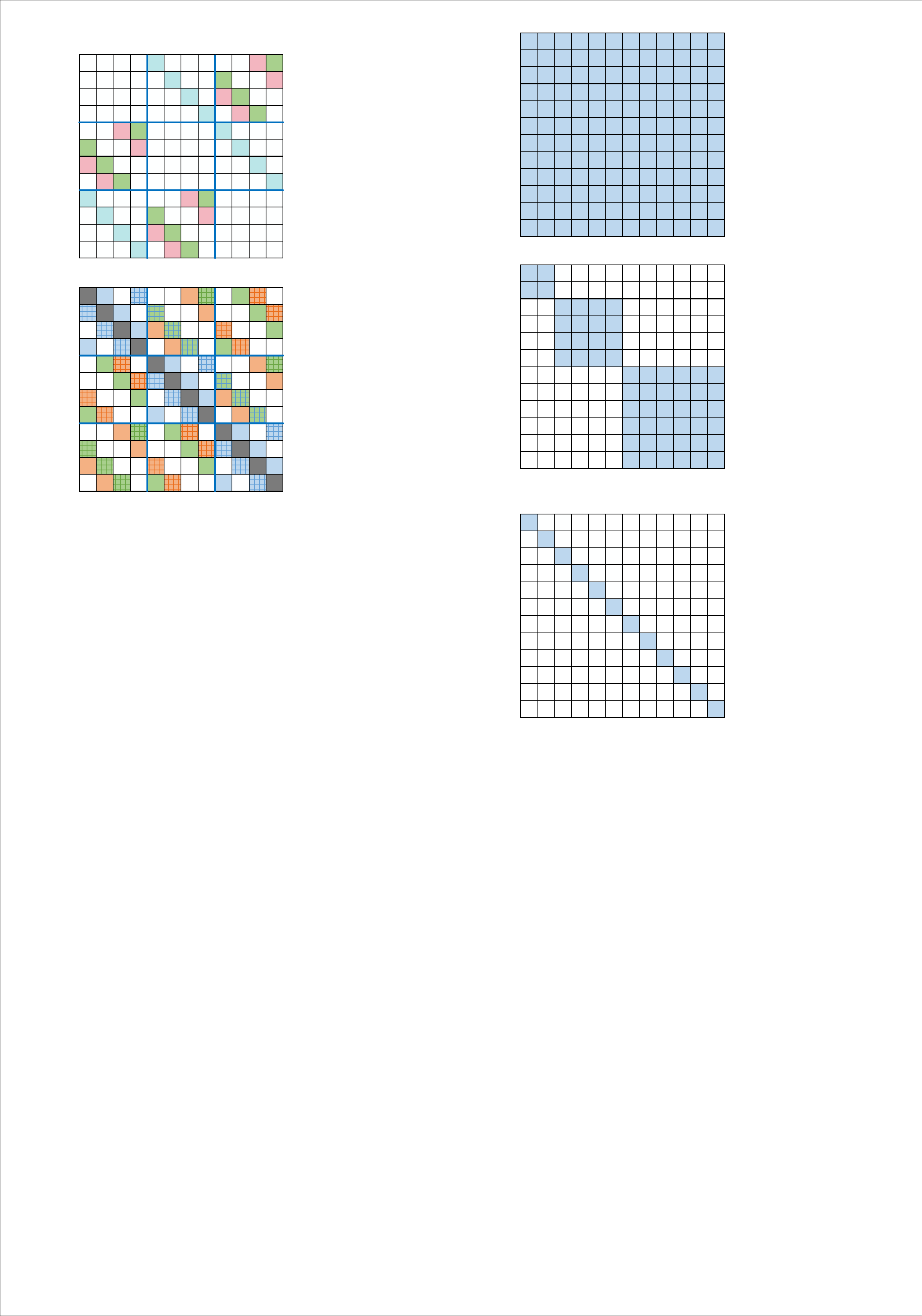}
    \end{minipage} \label{Graphf}%
  }%

  \centering
  \caption{Complete and split graphs of $\mathbf{A}$ with their corresponding sparsity patterns of $\mathbf{A}^{-1}$, where the blank square units represent zero elements.}
  \label{GraphA}
  \vspace{-5pt}
\end{figure}

Here, we give a special case to illustrate the effect of the guidelines.
As shown in \figurename~\ref{GraphA} (b), after eliminating all the trivial edges and isolating all the nodes with small degree, the graph is split into three separate subgraphs.
In this way, $\mathbf{A}^{-1}$ can be derived by computing the inverse of these subgraphs individually.
In \figurename~\ref{GraphA} (e), we use different color shades to illustrate the inverse of these subgraphs.
Obviously, the sparsity level of $\mathbf{A}^{-1}$ is enhanced and thus can be derived with less complexity.
As will be shown shortly, even if the graph is not ideally split into several disconnected subgraphs, the complexity can be significantly reduced with the help of the guidelines.
In an extreme case, if all the nodes are fully disconnected, $\mathbf{A}$ is perfectly reduced to a diagonal matrix whose graph is exhibited in \figurename~\ref{GraphA} (c), and the sparsity pattern of $\mathbf{A}^{-1}$ is also diagonal as shown in \figurename~\ref{GraphA} (f).
This is the best case, where $F_{\varphi_{A^{-1}}}(0)=1$ and $\mathbf{A}^{-1}$ can be directly obtained with complexity $\mathcal{O}(MN)$.

As illustrated in the example, $\mathbf{A}$ can be computed with much less complexity after applying these two guidelines.
When $E_b/N_0$ is low, $\mathbf{A}=\mathbf{H}_{\mathrm{DD}}\mathbf{V}\mathbf{H}_{\mathrm{DD}}^{\mathrm{H}}+N_0 \mathbf{I}$ is dominated by the second term $N_0\mathbf{I}$, so that the diagonal elements of $\mathbf{A}$ have much larger magnitude than the off-diagonal elements.
Hence, after sparsifying $\mathbf{A}$ according to the guidelines, most off-diagonal elements can be eliminated.
Unfortunately, at high $E_b/N_0$ the off-diagonal elements are likely to have the same order of magnitude as the diagonal elements.
In this case, we should avoid computing $\mathbf{A}^{-1}$ directly.
Nevertheless, with the help of the decoder, the {\it a priori} information is highly reliable at the second iteration, so that the updated variances are close to zero, which diminish the impact of the first term $\mathbf{H}_{\mathrm{DD}} \mathbf{V} \mathbf{H}_{\mathrm{DD}}^{\mathrm{H}}$ on $\mathbf{A}$.  
As a result, the second term $N_0\mathbf{I}$ dominates again and the guidelines can be utilized to sparsify $\mathbf{A}$.
In summary, reducing the complexity of MMSE estimation can be decomposed into two subproblems.
The first subproblem is to perform MMSE estimation without deriving $\mathbf{A}^{-1}$ directly at the initial iteration when the covariance matrix $\mathbf{A}$ can hardly be sparsified by the guidelines; 
the second one is to derive $\mathbf{A}^{-1}$ at the subsequent iterations when $\mathbf{A}$ has been sparsified according to the proposed guidelines.

\section{Low-Complexity Algorithms for MMSE estimation}
\label{section5}
In this section, we present two low-complexity algorithms to compute $\mathbf{A}^{-1}$, where one of them is used at the initial outer iteration and the other is applied at the subsequent iterations.
The former resorts to the GMRES algorithm to convert the problem of computing $\mathbf{A}^{-1}$ into solving the equivalent sparse linear systems, while the latter leverages the FSPAI algorithm to derive an approximation of $\mathbf{A}^{-1}$ after applying the proposed sparsification guidelines.

\vspace{-0pt}
\subsection{GMRES at the Initial Outer Iteration}
At the initial iteration, we have $\mathbf{A}=\mathbf{H}_{\mathrm{DD}}\mathbf{H}_{\mathrm{DD}}^{\mathrm{H}}+N_0 \mathbf{I}$ due to the lack of {\it a priori} information.
In this case, the off-diagonal elements of $\mathbf{A}$ has a comparable order to the diagonal elements, so that the sparsity level of $\mathbf{A}^{-1}$ can only be slightly enhanced.
Hence, directly deriving $\mathbf{A}^{-1}$ is not a good option.
As described in Section \ref{section3}, $\mathbf{A}^{-1}$ is needed in two places when MMSE estimation is performed, namely $\mathbf{A}^{-1}(\mathbf{y}-\mathbf{H}_{\mathrm{DD}}\mathbf{m})$ of (\ref{MMSELAST}) and $\xi _n= \mathbf{h}_n^{\mathrm{H}} \mathbf{A}^{-1} \mathbf{h}_n$ of (\ref{MMSELAST}) and (\ref{Le}), which are discussed respectively in the following.

With regard to the former, we observe that $\mathbf{A}^{-1}(\mathbf{y}-\mathbf{H}_{\mathrm{DD}}\mathbf{m})$ is a common part of all the estimated symbol $\hat{x}_n$, where $\mathbf{A}^{-1}(\mathbf{y}-\mathbf{H}_{\mathrm{DD}}\mathbf{m})=\mathbf{A}^{-1}\mathbf{y}$ at the initial iteration since $\mathbf{m}$ is of all zeros.
Therefore, we can solve only one equivalent sparse linear system $\mathbf{A}\mathbf{f  }_1={\mathbf{y}}$ for all $\hat{x}_n$'s and avoid directly computing $\mathbf{A}^{-1}$, where $\mathbf{f  }_1$ is the vector of unknowns. 

As for computing $\xi_n$, all $\xi _n$'s for $n=1,\dots, MN$ are almost the same for a given $\mathbf{H}_{\mathrm{DD}}$ as shown in \figurename{}~\ref{v_xi}, where $v_{\xi}=\mathbb{E}\{|\xi_n-\frac{1}{MN}\sum_{n=1}^{MN}\xi_n|^2\}$ is the variance of $\xi_n$.
It is found that $v_{\xi}$ converges to zero when $E_b/N_0$ is extremely low or high.
According to $\xi _n= \mathbf{h}_n^{\mathrm{H}} \mathbf{A}^{-1} \mathbf{h}_n$, $\xi_n$ is actually the diagonal element of $\mathbf{B}=\mathbf{H}_{\mathrm{DD}}^{\mathrm{H}} (\mathbf{H}_{\mathrm{DD}}\mathbf{H}_{\mathrm{DD}}^{\mathrm{H}}+\PSD \mathbf{I})^{-1} \mathbf{H}_{\mathrm{DD}}$.
If $E_b/N_0$ is low enough, $\mathbf{B}$ can be approximated as $\mathbf{B}\approx  N_0^{-1}\mathbf{H}_{\mathrm{DD}}^{\mathrm{H}}\mathbf{H}_{\mathrm{DD}}$ whose diagonal elements are all equal to $N_0^{-1}\sum_{p=1}^P |h_p|^2$ and $v_{\xi}\approx 0$.
If $E_b/N_0$ is sufficiently high, we have $\mathbf{B}\approx\mathbf{H}_{\mathrm{DD}}^{\mathrm{H}} (\mathbf{H}_{\mathrm{DD}}\mathbf{H}_{\mathrm{DD}}^{\mathrm{H}})^{-1} \mathbf{H}_{\mathrm{DD}}=\mathbf{I}$ and thus $v_{\xi}\approx 0$.
Moreover, as observed in \figurename{}~\ref{v_xi}, even the maximum value of $v_{\xi}$ is smaller than $10^{-5}$.
Hence, it is reasonable to assume that $\xi_n$'s are the same for $n=1,\dots,MN$.
Then, deriving $\xi_n$'s can be converted to solving another sparse linear system $\mathbf{A}\mathbf{f  }_2=\mathbf{h}_n$ for an arbitrary $n$, where $\mathbf{f  }_2 = \mathbf{A}^{-1}\mathbf{h}_n$ is the vector of unknowns.
Thereafter, little computation is required to obtain $\xi _n$ through $\xi _n = \mathbf{h}_n^{\mathrm{H}}\mathbf{f  }_2$, since $\mathbf{h}_n$ is sparse with only $P$ nonzero elements.
In this way, instead of directly calculating $\mathbf{A}^{-1}$, we can solve two equivalent sparse linear systems, i.e., $\mathbf{A}\mathbf{f  }_1={\mathbf{y}}$ and $\mathbf{A}\mathbf{f  }_2=\mathbf{h}_n$ at the initial iteration.

\begin{figure}[t]
  \centering
  \includegraphics[width = 0.4\textwidth]{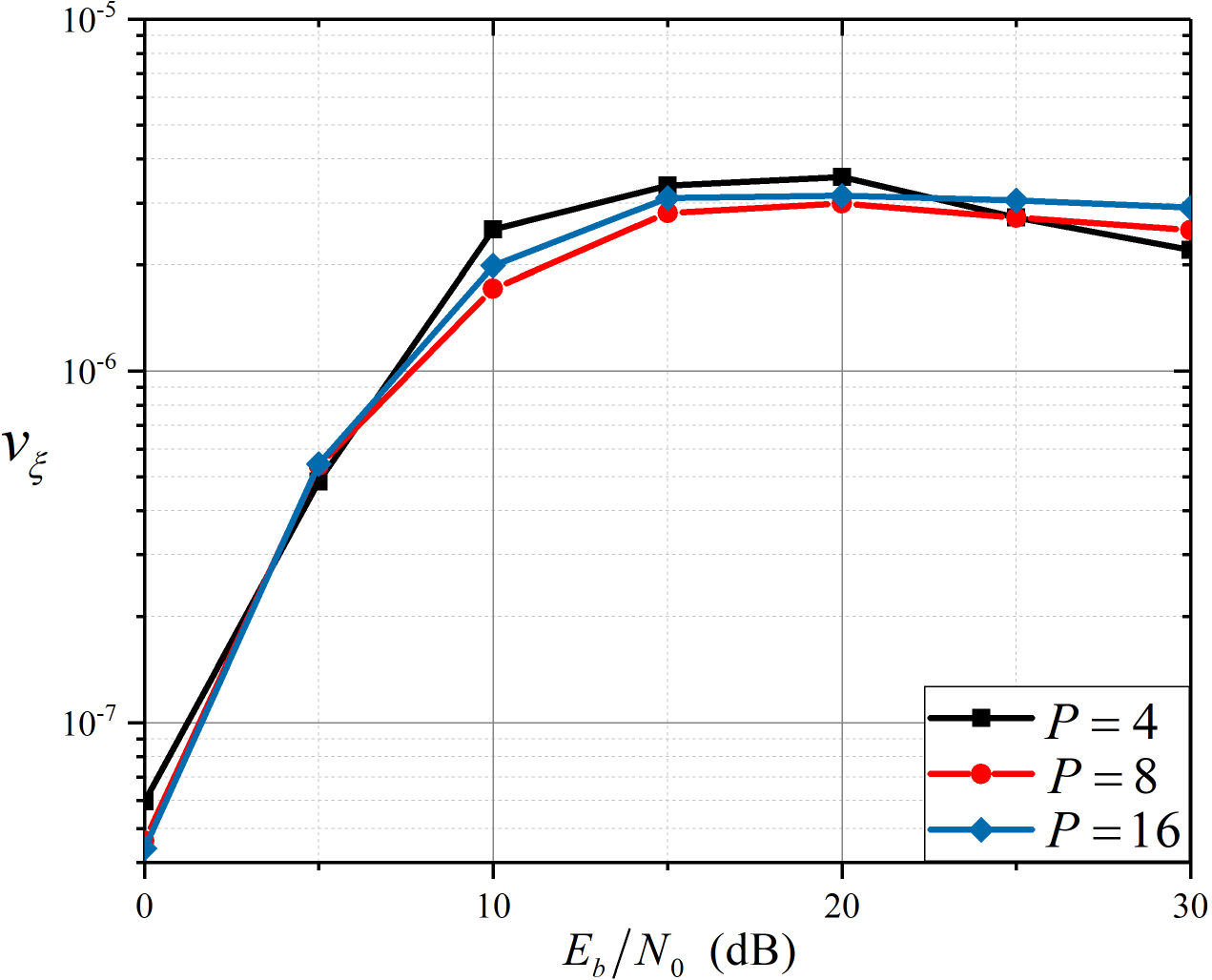}
  \caption{The variance of $\xi_n$ versus $E_b/N_0$, where $M=64$, $N=32$ and $P=4,8,16$. At each $E_b/N_0$ we average over 1000 channel realizations. }\vspace{-1em}
  \label{v_xi}

  \vspace{-0pt}
\end{figure}

We use an iterative algorithm called GMRES to solve the sparse linear systems, since GMRES has no requirements for the coefficient matrix $\mathbf{A}$ of a linear system $\mathbf{A}\mathbf{f  }=\mathbf{b}$.
Besides, GMRES takes less computation than some other candidate iterative algorithms, such as the Generalized Conjugate Residual (GCR) algorithm \cite{saad1986gmres}.
More importantly, the global convergence of GMRES applied in our system can be guaranteed with limited iterations (as will be presented in Section \ref{Conve}).
Note that if MMSE estimation is only performed once without subsequent outer iterations, the iterative equalizer is reduced to a conventional LMMSE equalizer.
Hence, GMRES can be readily extended to any other systems with LMMSE equalizer.
Next, we will briefly introduce how to apply GMRES at the initial iteration.

For a linear system $\mathbf{A}\mathbf{f  }=\mathbf{b}$, GMRES uses an arbitrary initial guess $\mathbf{f  }_0$, e.g., a zero vector, to extract an approximate solution $\mathbf{f  }_j=\mathbf{f  }_0+\Delta\mathbf{f  }_j$ iteratively, where $\Delta\mathbf{f  }_j$ is obtained from a subspace $\mathcal{K}_j\in\mathbb{C}^j$ to minimize the current residual norm $\left\|\mathbf{r}_j\right\| = \left\| \mathbf{b}-\mathbf{A}\mathbf{f  }_j \right\|$ \cite{saad1986gmres}.
Here, $\mathcal{K}_j$ is the $j$-th Krylov subspace
$
  \mathcal{K}_j = \mathit{span}\{ \mathbf{r}_0,\mathbf{Ar}_0,\dots,\mathbf{A}^{j-1}\mathbf{r}_0 \}
$
with the initial residual $\mathbf{r}_0 = \mathbf{b-A}\mathbf{f  }_0$.
For convenience, we set the norm $\left\|\mathbf{r}_0 \right\|=\rho$.
Then the least-squares (LS) problem can be written as
\vspace{-0pt}
\begin{equation} \label{mini}
  \min \limits_{\mathbf{\Delta \mathbf{f  }}_j\in \mathcal{K}_j} \left\| \mathbf{b}-\mathbf{A}(\mathbf{f  }_0+\Delta \mathbf{f  }_j) \right\|= \min \limits_{\mathbf{\Delta \mathbf{f  }}_j\in \mathcal{K}_j} \left\| \mathbf{r}_0-\mathbf{A}\Delta \mathbf{f  }_j \right\|.
  \vspace{-0pt}
\end{equation}
When the residual norm is smaller than a drop tolerance $\epsilon _g$, the algorithm stops.
Otherwise, the subspace needs to be expanded to $\mathcal{K}_{j+1}$, from which the approximate solution is updated to $\mathbf{f  }_{j+1}$ to derive a more precise solution.
This update process is an inner iteration.

However, with the expansion of the subspace, the computational cost increases exponentially.
Hence, we let $\maxGMRES$ be the maximum dimension of the Krylov subspace.
After the $\maxGMRES$-th inner iteration, if the drop tolerance $\epsilon_g$ is still not reached, then compute the current approximate solution $\mathbf{f  }_{\maxGMRES}$ and restart GMRES with a new initial guess $\mathbf{f  }_0^{\prime}=\mathbf{f  }_{\maxGMRES}$.
These $\maxGMRES$ inner iterations is called a restart cycle, and the restarting step can control the overall complexity of GMRES.

Considering that the LS problem of (\ref{mini}) dominates the complexity, we expect to reduce (\ref{mini}) into a smaller problem\cite{arnoldi1951principle}.
We first use the modified Gram-Schmidt algorithm to yield the $\ell_2$-orthonormal basis $\{\mathbf{p}_1,\mathbf{p}_2,\dots,\mathbf{p}_j\}$ of $\mathcal{K}_j$ as summarized in lines 4-8 of Algorithm \ref{GMRESalgo}\cite{saad2003iterative}, where $\mathbf{p}_i\in\mathbb{C}^{MN}$ for $i=1,\dots,j$.
From the modified Gram-Schmidt algorithm, $\mathbf{A}\mathbf{p}_j$ is rewritten as 
\vspace{-0pt}\begin{equation} \label{Ap}
  \mathbf{A}\mathbf{p}_j = \sum_{i=1}^{j+1} t_{i,j}\mathbf{p}_i,
  \vspace{-0pt}
\end{equation}
where $t_{i,j}=(\mathbf{Ap}_j,\mathbf{p}_i)$.
Therefore, we have 
$ \mathbf{A}\mathbf{P}_j  =[\mathbf{A}\mathbf{p}_1,\dots,\mathbf{A}\mathbf{p}_j]= \mathbf{P}_{j+1}\mathbf{{T}}_j
$
where 
$\mathbf{{T}}_j$ is an upper Hessenberg matrix
\vspace{-0pt}
\begin{equation*}
  \setlength{\arraycolsep}{2pt}
  \renewcommand{\arraystretch}{0.6}
  \mathbf{{T}}_j = \left[\begin{array}{ccccc}
    t_{1,1} & \cdots & t_{1,j}   \\
    t_{2,1} & \cdots & t_{2,j}   \\
            & \ddots & \vdots    \\
            &        & t_{j+1,j}
  \end{array}\right]_{(j+1)\times j}.
    \vspace{-0pt}
\end{equation*}
Besides, since $\Delta \mathbf{f  }_j$ is derived from $\mathcal{K}_j$, $\Delta \mathbf{f  }_j$ can be denoted by a linear combination of the orthonormal basis as $\Delta \mathbf{f  }_j = \mathbf{P}_j \mathbf{k}_j$ where $\mathbf{P}_j=[\mathbf{p}_1,\mathbf{p}_2,\dots,\mathbf{p}_j]$ and $\mathbf{k}_j\in\mathbb{C}^{j}$. 
By substituting $\Delta \mathbf{f  }_j = \mathbf{P}_j \mathbf{k}_j$ into (\ref{mini}), the LS problem reduces to
\vspace{-0pt}
\begin{equation}
  \begin{aligned} \label{minJ} 
    \min \limits_{\mathbf{k}_j}\left\| \rho\mathbf{p}_1-\mathbf{A}\mathbf{P}_j \mathbf{k}_j \right\|
    =&  \min \limits_{\mathbf{k}_j} \left\| \rho\mathbf{P}_{j+1}\mathbf{{e}}_1^{(j)}-\mathbf{P}_{j+1}\mathbf{{T}}_j\mathbf{k}_j \right\|
    \\=& \min \limits_{\mathbf{k}_j}\left\| \rho\mathbf{{e}}_1^{(j)}-\mathbf{{T}}_j\mathbf{k}_j \right\|,
  \end{aligned}
  \vspace{-0pt}
\end{equation}
where $\mathbf{{e}}_i^{(j)}$ is the $i$-th column of the identity matrix $\mathbf{I}_{j+1}$.
The computation of solving (\ref{minJ}) is trivial since $\mathbf{{T}}_j$ is a $j\times(j+1)$ matrix, where $j$ has a maximum value $\maxGMRES$ and is generally much smaller than $MN$.



In GMRES, whether to start the next iteration of GMRES is determined by the residual norm from (\ref{minJ}).
To avoid solving (\ref{minJ}) repeatedly, we can solve the LS problem of (\ref{minJ}) by transforming $\mathbf{{T}}_j$ into an upper triangular matrix $\mathbf{{U}}_{j}$ column by column using the rotation matrix
\vspace{-0pt}
\begin{equation}
  \setlength{\arraycolsep}{1.6pt}
  \renewcommand{\arraystretch}{0.4}
  \mathbf{\Psi} _{i}^{(j)}=\left[\begin{array}{cccccc}
    1 &           &                    &                   &           &   \\
      & \ddots    &                    &                   &           &      \\
      &           & \varsigma  _{i}    & \vartheta    _{i} &           &   \\
      &           & -\vartheta    _{i} & \varsigma  _{i}   &           &   \\
      &           &                    &                   &    \ddots &   \\
      &           &                    &                   &           & 1
  \end{array}\right]_{(j+1)\times(j+1)},
  \vspace{-0pt}
\end{equation}
where $
\varsigma  _{i} = {t_{i,i}}/ ({t_{i,i}^2+t_{i+1,i}^2})^{1/2}$ and $\vartheta  _i={t_{i+1,i}}/(t_{i,i}^2+t_{i+1,i}^2)^{1/2}$ are in the $i$-th and $(i+1)$-th rows for $i=1,2,\dots,j$.
Using $\mathbf{\mathbf{\Psi}} _{i}^{(j)}$, we derive the upper triangular matrix as
\vspace{-0pt}
\begin{equation} \label{Uj}
  \vspace{-0pt}
  \begin{aligned}
    \setlength{\arraycolsep}{2.2pt}
    \renewcommand{\arraystretch}{0.7}
    \mathbf{{U}}_j=\mathbf{\Psi} _{j}^{(j)}\dots\mathbf{\Psi} _{2}^{(j)}\mathbf{\Psi} _{1}^{(j)}\mathbf{{T}}_j
    =\mathbf{\Xi}_j\mathbf{{T}}_j
     =\left[\begin{array}{ccccc}
              u_{1,1} & \cdots & u_{1,j} \\
                      & \ddots & \vdots  \\
                      &        & u_{j,j} \\
                      &        & 0
                      \end{array}\right],
  \end{aligned}
  \vspace{-0pt}
\end{equation}
where $\mathbf{\Xi}_j =\mathbf{\Psi} _{j}^{(j)}\dots\mathbf{\Psi} _{2}^{(j)}\mathbf{\Psi} _{1}^{(j)}$.
Then, the LS problem is rewritten as
\vspace{-0pt}
\begin{equation}
  \label{wUk}
  \min \limits_{\mathbf{k}_j}\left\| \mathbf{\Xi}_j(\rho \mathbf{{e}}_1^{(j)}-\mathbf{{T}}_j\mathbf{k}_j) \right\|=\min \limits_{\mathbf{k}_j}\left\| \mathbf{w}_j-\mathbf{{U}}_j\mathbf{k}_j \right\|,
  \vspace{-0pt}
\end{equation}
where $\mathbf{w}_j= \rho \mathbf{\Xi}_j\mathbf{{e}}_1^{(j)}$.
Now we can easily solve the LS problem using backward substitution.
Since the last row of $\mathbf{{U}}_j$ is of all zeros, all rows except the last one in $\mathbf{w}_j-\mathbf{{U}}_j\mathbf{k}_j$ can be canceled out by choosing $\mathbf{k}_j$. 
Thus, the current residual is readily derived from the last element of $\mathbf{w}_j$ without any additional computation, i.e., $\|\mathbf{r}_j\|=|\mathbf{w}_j(j+1)|$.
In this way, the LS problem is solved only when the current restart cycle finishes or the algorithm stops.

\IncMargin{1em}
\begin{algorithm}[t]
  \begin{spacing}{1.1}
   \SetKwData{Left}{left}\SetKwData{This}{this}\SetKwData{Up}{up} \SetKwFunction{Union}{Union}\SetKwFunction{FindCompress}{FindCompress} \SetKwInOut{Input}{input}\SetKwInOut{Output}{output}

  {\bf Initialization:} $\mathbf{r}_0=\mathbf{b}-\mathbf{A}\mathbf{f}_0, \rho=\left\| \mathbf{r  }_{0} \right\|, \mathbf{p}_1=\mathbf{r}_0/\rho$, 
  $\mathbf{w}=\rho\mathbf{{e}}_1^{(\maxGMRES)}$, ${\mathbf{T}}$ and $\mathbf{U}$ are $(\maxGMRES+1)\times \maxGMRES$ zero matrix \;
  \For{$j=1$ \KwTo $\maxGMRES$}{
  $\mathbf{\widetilde{p}}_{j+1}= \mathbf{A}\mathbf{p}_j$\;
  \For{$i=1$ \KwTo $j$}{
  $t_{i,j} = (\mathbf{A}\mathbf{p}_j,\mathbf{p}_i)$, 
  $\mathbf{\widetilde{p}}_{j+1}\leftarrow \mathbf{\widetilde{p}}_{j+1}-t_{i,j}\mathbf{p}_i$.
  }
  $t_{j+1,j}=\left\| \mathbf{\widetilde{p}}_{j+1} \right\|$, 
  $\mathbf{p}_{j+1}=\mathbf{\widetilde{p}}_{j+1}/t_{j+1,j}$\;
  $\mathbf{T}(1:j+1,j)\leftarrow[t_{1,j},t_{2,j},\dots,t_{j+1,j}]^{\mathrm{T}} $\;
  Compute $\mathbf{\Psi}_i^{(\maxGMRES)}$, for $i=1,2,\dots,j$\;
  ${\mathbf{U}}(:,j)\leftarrow \mathbf{\Psi}_j^{(\maxGMRES)} \dots \mathbf{\Psi}_2^{(\maxGMRES)} \mathbf{\Psi}_1^{(\maxGMRES)}\mathbf{T}(:,j)$,
  $\mathbf{w}\leftarrow \mathbf{\Psi}_j^{(\maxGMRES)}\mathbf{w}$\;
  \If{$|\mathbf{w}(j+1)|<\epsilon _g $}{
  {\bf Solve:} the LS problem $\min_{\mathbf{k}_j}\left\| \mathbf{w}(1:j+1)-\mathbf{{U}}(1:j+1,1:j)\mathbf{k}_j \right\|$, $\mathbf{f  }_j=\mathbf{f  }_0+\mathbf{P}_j \mathbf{k}_j$\;
  {\bf Stop}
  }
  }
  {\bf Solve:} the LS problem $\min_{\mathbf{k}_{\maxGMRES}}\left\| \mathbf{w}-\mathbf{{U}}\mathbf{k}_{\maxGMRES} \right\|$, $\mathbf{f  }_{\maxGMRES}=\mathbf{f  }_0+\mathbf{P}_{\maxGMRES} \mathbf{k}_{\maxGMRES}$\;
  {\bf Restart:} $\mathbf{r}_0=\mathbf{b}-\mathbf{A}\mathbf{f}_{\maxGMRES}, \rho=\left\| \mathbf{r  }_{0} \right\|$, $\mathbf{p}_1=\mathbf{r  }_{0}/\rho$,
$\mathbf{w}=\rho\mathbf{{e}}_1^{(\maxGMRES)}$, ${\mathbf{T}}$ and $\mathbf{U}$ are $(\maxGMRES+1)\times \maxGMRES$ zero matrix\;
  {\bf Go to Line 3}.
  \caption{Restarted GMRES}
  \label{GMRESalgo}
  \end{spacing}
\end{algorithm}
\DecMargin{1em}

The complete algorithm is summarized in Algorithm \ref{GMRESalgo}. 
Initially, we set $\mathbf{{U}}$ as a $(\maxGMRES+1)\times \maxGMRES$ zero matrix, and transform $\mathbf{{T}}_{\maxGMRES}$ to $\mathbf{{U}}_{\maxGMRES}$ progressively in $\maxGMRES$ inner iterations.
At the $j$-th inner iteration ($j=1,\dots,\maxGMRES$), we use the modified Gram-Schmidt algorithm and obtain $\{t_{1,j},t_{2,j},\dots,t_{j+1,j}\}$ as shown in lines 4-8.
The new column $[t_{1,j},t_{2,j},\dots,t_{j+1,j}]^{\mathrm{T}}$ is first multiplied by all the previous  $\mathbf{\Psi} _{i}^{(\maxGMRES)}$ for $i=1,2,\dots,j-1$, since the previous rotation matrices also affect this newly produced column.
Then, $\mathbf{\Psi} _{j}^{(\maxGMRES)}$ is used to eliminate $t_{j+1,j}$.
Meanwhile, $\mathbf{w}$ is multiplied by $\mathbf{\Psi} _{j}^{(\maxGMRES)}$ according to (\ref{wUk}).
These steps are shown in lines 10-11.
If the current residual $|\mathbf{w}_j(j+1)|>\epsilon _g$, another iteration is performed.
At the end of a restart cycle, if norm $\|\mathbf{r}_{\maxGMRES}\|$ is still larger than $\epsilon _g$, then restart the algorithm and go to line 3.

\vspace{-0pt}
\subsection{FSPAI at the Subsequent Outer Iterations}
After the first outer iteration, we have obtained new means and variances, which serve as the {\it a priori} information to start the next outer iteration.
However, since the variances matrix $\mathbf{V}$ is no longer an identity matrix, $\xi _n$'s have completely different values for $n=1,\dots,MN$.
As a result, GMRES needs to be used $MN$ times per outer iteration to derive each $\xi _n$, which is not acceptable for the subsequent outer iterations.
However, with the help of the decoder, most {\it a prior} information of the estimator is reliable enough, so that the corresponding new variances of the estimated symbols approach zero.
Hence, the second term $\PSD\mathbf{I}$ of $\mathbf{A}$ dominates, and the diagonal elements have much larger magnitude than other elements.
By sparsifying $\mathbf{A}$ according to the guidelines, $\mathbf{A}$ becomes much sparser, and the sparsity level of $\mathbf{A}^{-1}$ can be greatly enhanced. 
Thus, we can use the FSPAI algorithm to yield a sparse approximation of $\mathbf{A}^{-1}$, which is inherently parallelizable and can further save the computational time.

We observe that $\mathbf{A}$ is Hermitian positive definite (HPD), since $\mathbf{A}=\mathbf{H}_{\mathrm{DD}} \mathbf{V} \mathbf{H}_{\mathrm{DD}}^{\mathrm{H}}+\PSD\mathbf{I}$ is the sum of a Gram matrix and an identity matrix\cite{horn2012matrix}.
For an HPD matrix $\mathbf{A}=\mathbf{L}_A^{\mathrm{H}}\mathbf{L}_A$ with Cholesky factor $\mathbf{L}_A$, the inverse $\mathbf{A}^{-1}$ can be obtained by deriving a sparse approximate Cholesky factor $\mathbf{L}$ such that $\mathbf{L}\mathbf{L}^{\mathrm{H}}\approx\mathbf{A}^{-1}$.
Here, $\mathbf{L}$ is a lower triangular matrix and can be obtained by minimizing the Frobenius norm $\|\mathbf{L}_A\mathbf{L}-\mathbf{I}\|_{F}$ with respect to a prescribed sparsity pattern of $\mathbf{L}$, denoted by $\mathbf{\Upsilon_{L}}=\mathcal{L}$ \cite{kolotilina1993factorized}.
This is equivalent to minimizing the Kaporin condition number of $\mathbf{L^{\mathbf{H}} \mathbf{A} \mathbf{L} }$, which is defined as \cite{kaporin1994new}
\vspace{-0pt}
\begin{equation}
  K=
  \frac{ \mathrm{tr} (\mathbf{L}^{\mathrm{H}}  \mathbf{A} \mathbf{L})    }
  { MN \cdot\mathrm{det} ( \mathbf{L}^{\mathrm{H}} \mathbf{A} \mathbf{L} )^{\frac{1}{MN} } },
  \vspace{-0pt}
\end{equation}
where $\mathrm{tr(\cdot)}$ and $\mathrm{det(\cdot)}$ denote the trace and determinant of a matrix respectively, and $K$ measures the quality of the sparse approximation where $K\geq 1$.
Evidently, a more precise $\mathbf{L}$ makes $\mathbf{L}^{\mathrm{H}}  \mathbf{A} \mathbf{L}$ closer to $\mathbf{I}$, and thus $K$ is closer to one.
By minimizing $K$, FSPAI can capture the sparsity pattern of $\mathbf{L}$ dynamically \cite{huckle2003factorized}.
Since the consequent $\mathbf{L}$ is typically sparse, there is a significant reduction in complexity.
As opposed to the conventional FSPAI used in symmetric positive definite linear system, FSPAI applied in our system should be modified to match the HPD case.

\vspace{-3pt}
\begin{figure*}[t]
  \centering
\begin{equation} \label{K}
  \begin{aligned}
    K & =  \frac{ \mathrm{tr} (\mathbf{L}^{\mathrm{H}}  \mathbf{A} \mathbf{L})    }
    { MN \cdot\mathrm{det} ( \mathbf{L}^{\mathrm{H}} \mathbf{A} \mathbf{L} )^{\frac{1}{MN} } }
    = \frac{ \sum_{\fidx=1}^{MN} \mathbf{l}_{\fidx}^{\mathrm{H}} \mathbf{A} \mathbf{l}_{\fidx} }
    { MN\cdot \left[ \mathrm{det}(\mathbf{L}^{\mathrm{H}})
    \mathrm{det}(\mathbf{A}) \mathrm{det}(\mathbf{L}) \right]   ^{\frac{1}{MN} }  } \\
      & = \frac{ \sum_{\fidx=1}^{MN} \left[
    l_{\fidx\fidx}^2 a_{\fidx\fidx} +
    2l_{\fidx\fidx}\mathrm{Re}\{ \mathbf{l}_{\fidx}(\widetilde{\setj}_{\fidx})^{\mathrm{H}} \mathbf{a}_{\fidx}(\widetilde{\setj}_{\fidx})\} +
    \mathbf{l}_{\fidx}(\widetilde{\setj}_{\fidx})^{\mathrm{H}}
    \mathbf{A}(\widetilde{\setj}_{\fidx}, \widetilde{\setj}_{\fidx})
    \mathbf{l}_{\fidx}(\widetilde{\setj}_{\fidx})
    \right] }
    { MN\cdot\mathrm{det}(\mathbf{A})^{\frac{1}{MN} } \left( \prod_{\fidx=1}^{MN} l_{\fidx\fidx} ^{\frac{2}{MN} } \right) }
  \end{aligned},
  \vspace{-0pt}
\end{equation}
\noindent\rule[0.25\baselineskip]{\textwidth}{0.3pt}
\end{figure*}

\vspace{3pt}
Initially, a prescribed sparsity pattern $\mathbf{\Upsilon_{L}}$ is selected, e.g., $\mathbf{\Upsilon_{L}}=\mathbf{I}$ with only diagonal elements nonzero.
Let $\mathbf{a}_{\fidx}$ and $\mathbf{l}_{\fidx}$ be the $\fidx$-th column of $\mathbf{A}$ and $\mathbf{L}$ respectively.
Let $\setj_{\fidx}$ denote the set of nonzero element indices of $\mathbf{l}_{\fidx}$ and $\widetilde{\setj}_{\fidx} = \setj_{\fidx}\backslash \{\fidx\}$.
Then we have (\ref{K}) as shown at the top of the next page,
where $l_{\fidx\fidx}=\mathbf{L}(\fidx,\fidx)\in\mathbb{R}^{+}$ and $a_{\fidx\fidx}=\mathbf{A}(\fidx,\fidx)\in\mathbb{R}^{+}$.
For each column, $l_{\fidx\fidx}$ and $\mathbf{l}_{\fidx}(\widetilde{\setj}_{\fidx})$ are obtained to minimize $K$.
To do so, we let $\partial K /\partial l_{\fidx\fidx}=0$ and $\partial K /\partial \mathbf{l}_{\fidx}(\widetilde{\setj}_{\fidx})=0$, and then
\vspace{-0pt}
\begin{subequations}\label{l}
  \begin{align}
    \label{l1} & \mathbf{q}_{\fidx} = \mathbf{A}(\widetilde{\setj}_{\fidx}, \widetilde{\setj}_{\fidx})^{-1}
    \mathbf{a}_{\fidx}(\widetilde{\setj}_{\fidx}),                                                                                                               \\
    \label{l2} & l_{\fidx\fidx} = \left( a_{\fidx\fidx} - \mathbf{a}_{\fidx}(\widetilde{\setj}_{\fidx})^{\mathrm{H}} \mathbf{q}_{\fidx}  \right)^{-\frac{1}{2}}, \\
    \label{l3} & \mathbf{l}_{\fidx}(\widetilde{\setj}_{\fidx}) = -l_{\fidx\fidx}\mathbf{q}_{\fidx}.
  \end{align}
  \vspace{-15pt}
\end{subequations}

\noindent Since $l_{\fidx\fidx}$ and $\mathbf{l}_{\fidx}(\widetilde{\setj}_{\fidx})$ are independently computed for each column, $\mathbf{L}$ can be obtained column by column in parallel.
Due to the parallelism, we can only consider $\mathbf{l}_{\fidx}$ in the following.

FSPAI iteratively improves the approximation by carefully choosing new index $\newidx$ and adding it to $\setj_{\fidx}$.
This process of augmenting $\setj_{\fidx}$ is called an inner iteration of FSPAI.
The improved approximation is derived as $\mathbf{l}_{\fidx}^{\prime} = \mathbf{l}_{\fidx}+l_{\newidx\fidx}\mathbf{e}_{\newidx}$, where $\mathbf{e}_{\newidx}$ denotes the $\newidx$-th column of $\mathbf{I}$ and $\newidx>\fidx$.
To minimize the new Kaporin condition number $K^{\prime}_{\newidx}$ for index $\newidx$, we need to solve \cite{sedlacek2012sparse}
\vspace{-0pt}
\begin{equation}
  \begin{aligned}
    &\min \limits_{l_{\newidx\fidx}} K^{\prime}_{\newidx}
    \\= &\min \limits_{l_{\newidx\fidx}}
    \frac{ \mathrm{tr} \left(
    (\mathbf{L} + l_{\newidx\fidx} \mathbf{e}_{\newidx} \mathbf{e}_{\fidx}^{\mathrm{H}} )^{\mathrm{H}}
    \mathbf{A}
    (\mathbf{L} + l_{\newidx\fidx} \mathbf{e}_{\newidx} \mathbf{e}_{\fidx}^{\mathrm{H}} )
    \right)    }
    { MN \cdot\mathrm{det} \left(
    (\mathbf{L}^{\mathrm{H}} + l_{\newidx\fidx}^{*} \mathbf{e}_{\fidx} \mathbf{e}_{\newidx}^{\mathrm{H}} )
    \mathbf{A}
    (\mathbf{L} + l_{\newidx\fidx} \mathbf{e}_{\newidx} \mathbf{e}_{\fidx}^{\mathrm{H}} )
    \right)^{\frac{1}{MN} } }                                                                                               \\
    = & 
    \min \limits_{l_{\newidx\fidx}}
    \frac{
    \mathrm{tr}( \mathbf{L}^{\mathrm{H}} \mathbf{A} \mathbf{L} )+
    2\mathrm{Re} \{ l_{\newidx\fidx}^{*}\mathbf{a}_{\newidx}^{\mathrm{H}}\mathbf{l}_{\fidx} \}+
    a_{\newidx\newidx}| l_{\newidx\fidx} |^2
    }{
    MN\cdot\mathrm{det}(\mathbf{A})^{\frac{1}{MN} } \left( \prod_{\fidx=1}^{MN} l_{\fidx\fidx} ^{\frac{2}{MN} } \right)
    }
  \end{aligned},
  \vspace{-0pt}
\end{equation}
which can be derived by letting the derivative $\partial K^{\prime}_{\newidx}/\partial l_{\newidx\fidx}=0$.
Then, we have
\vspace{-0pt}
\begin{equation*}
  l_{\newidx\fidx} = - \frac{\mathbf{a}_{\newidx}^{\mathrm{H}} \mathbf{l}_{\fidx}}{a_{\newidx\newidx}}=
  -\frac{\mathbf{a}_{\newidx}(\setj_{\fidx})^{\mathrm{H}} \mathbf{l}_{\fidx}(\setj_{\fidx})} {a_{\newidx\newidx}},
  \vspace{-0pt}
\end{equation*}
where the sparsity of $\mathbf{l}_{\fidx}$ is taken into account, and thus trivial computation is entailed to derive $l_{\newidx\fidx}$. 
With $l_{\newidx\fidx}$, the minimum of $K^{\prime}_{\newidx}$ for index $\newidx$ is
\vspace{-0pt}
\begin{equation} \label{K2}
  \min \limits_{l_{\newidx\fidx}} K_{\newidx}^{\prime} =
  \left(
  1-\frac{\eta_{\newidx\fidx} }{MN} K
  \right),
  \vspace{-0pt}
\end{equation}
where $\eta_{\newidx\fidx}=|\mathbf{a}_{\newidx}^{\mathrm{H}} \mathbf{l}_{\fidx}|^2/a_{\newidx\newidx}$ measures the reduction in Kaporin condition number by adding the new index $\newidx$.
Due to the sparsity of $\mathbf{A}$, only a small fraction of indices $\newidx$ satisfy $\mathbf{a}_{\newidx}(\setj_{\fidx})^{\mathrm{H}} \mathbf{l}_{\fidx}(\setj_{\fidx})\neq 0$, such that the approximation can be improved as shown in (\ref{K2}).
Resorting to the graph of $\mathbf{A}$ in Section \ref{section4}, 
we can readily find these indices that are defined as
$
  \widehat{\setj}_{\fidx}=\{
  \newidx :\newidx>\fidx ~\text{and}~ \mathbf{a}_{\newidx}(\setj_{\fidx})^{\mathrm{H}} \mathbf{l}_{\fidx}(\setj_{\fidx})\neq 0
  \} \backslash \setj_{\fidx}.
$
Thereafter, we can add one or more indices from $\widehat{\setj}_{\fidx}$ with the largest $\eta_{\newidx\fidx}$ to $\setj_{\fidx}$, and obtain augmented indices sets $\setj_{\fidx}^{\prime}$ and $\widetilde{\setj}_{\fidx}^{\prime}=\setj_{\fidx}^{\prime}\backslash\{\fidx\}$.
Then a more precise solution $\mathbf{l}_{\fidx}^{\prime}$ is derived according to (\ref{l}).
Note that the complexity of FSPAI is dominated by computing the inverse in (\ref{l1}).
Hence, the cardinality of $\widetilde{\setj}_{\fidx}$ should not exceed a certain value $\zeta$.
The algorithm keeps iterating until $|\widetilde{\setj}_{\fidx}|>\zeta$ or the maximum of $\eta_{\newidx\fidx}$ is smaller than a well-chosen tolerance $\epsilon_f$.
The values of $\zeta$ and $\epsilon_f$ will be optimized from considerable simulations in Section \ref{section7} to reach a balance between the BER performance and complexity.
Especially, $\zeta$ should be carefully chosen, since it largely determines the complexity order of our equalizer.

The entire FSPAI is summarized in Algorithm \ref{FSPAIalgo}, where the initial sparsity pattern of $\mathbf{L}$ is set to be diagonal, and for each column, only one single index with the largest $\zeta_{\newidx\fidx}$ is added to $\setj_{\fidx}$ during each index update.
Other initial sparsity patterns and index update strategies can be readily applied with trivial modifications.

\IncMargin{1em}
\begin{algorithm} [t]
  \begin{spacing}{1.1}
  \SetKwData{Left}{left}\SetKwData{This}{this}\SetKwData{Up}{up} \SetKwFunction{Union}{Union}\SetKwFunction{FindCompress}{FindCompress} \SetKwInOut{Input}{input}\SetKwInOut{Output}{output}

  {\bf Initialization:} $\mathbf{\Upsilon_{L}}=\mathbf{I}$, $\epsilon_f\geq 0$, $\zeta\geq 0$ \;
  \For{$\fidx=1$ \KwTo $MN$}{
  $\setj_{\fidx}\leftarrow \{\fidx\}$, $\widetilde{\setj}_{\fidx}\leftarrow \emptyset$, $l_{\fidx\fidx}\leftarrow\sqrt{a_{\fidx\fidx}}$;\\
  \While{$|\widetilde{\setj}_{\fidx}|< \zeta$}{
  $\widehat{\setj}_{\fidx}=\{
    \newidx :\newidx>\fidx ~\text{and}~ \mathbf{a}_{\newidx}(\setj_{\fidx})^{\mathrm{H}} \mathbf{l}_{\newidx}(\setj_{\fidx})\neq 0
    \} \backslash \setj_{\fidx}$;\\
  $\eta_{\newidx\fidx}\leftarrow| \mathbf{a}_{\newidx}(\setj_{\fidx})^{\mathrm{H}} \mathbf{l}_{\newidx}(\setj_{\fidx}) |^2/a_{\newidx\newidx}$, for each $\newidx\in\widehat{\setj}_{\fidx}$;\\
  $\eta_{\mathrm{max}}=\max_{\newidx\in\widehat{\setj}_{\fidx}} \eta_{\newidx\fidx}$ ;\\
  \If{$\eta_{\mathrm{max}}<\epsilon_f$}{
    {\bf Stop.}
  }
  $\setj_{\fidx}\leftarrow\setj_{\fidx}\cup \{\newidx : \eta_{rk}=\eta_{\mathrm{max}} \}$, $\widetilde{\setj}_{\fidx}\leftarrow {\setj}_{\fidx}\backslash\{\fidx\}$;\\
  $\mathbf{q}_{\fidx} \leftarrow \mathbf{A}(\widetilde{\setj}_{\fidx}, \widetilde{\setj}_{\fidx})^{-1}
    \mathbf{a}_{\fidx}(\widetilde{\setj}_{\fidx})$;\\
  $l_{\fidx\fidx} \leftarrow ( a_{\fidx\fidx} - \mathbf{a}_{\fidx}(\widetilde{\setj}_{\fidx})^{\mathrm{H}} \mathbf{q}_{\fidx}  )^{-\frac{1}{2}}$; \\
  $\mathbf{l}_{\fidx}(\widetilde{\setj}_{\fidx}) \leftarrow -l_{\fidx\fidx}\mathbf{q}_{\fidx}$.
  }
  }
  \caption{FSPAI}
  \label{FSPAIalgo}
  \end{spacing}
\end{algorithm}
\DecMargin{1em}
\vspace{-0pt}

\section{Theoretical Analysis}\label{section6}
\vspace{-0pt}
\subsection{Convergence Analysis}  \label{Conve}
In what follows, full GMRES (without restarting) used in the MMSE estimator is proven to globally converge, and the upper bound of its convergence rate is derived.
Then we prove that for restarted GMRES, $\maxGMRES$ can be chosen arbitrarily without losing the global convergence.

For full GMRES, considering the definition of Krylov subspace, the residual norm at the $j$-th inner iteration for $j\geq 1$ can be written as 
\vspace{-0pt}
\begin{equation} 
  \|\mathbf{r}_j\|=\min \limits_{\mathbf{\Delta \mathbf{f  }}_j\in \mathcal{K}_j} \left\| \mathbf{r}_0-\mathbf{A}\Delta \mathbf{f  }_j \right\|=\min\limits_{p_j\in \mathcal{P} _j} \|p_j(\mathbf{A})\mathbf{r}_0\|,
  \vspace{-0pt}
\end{equation}
where $\mathcal{P} _j$ denotes the set of polynomials satisfying $p_j(\mathbf{A})=\sum_{i=0}^{j}\alpha_i\mathbf{A}^{i}$ and $\alpha_0=1$\cite{liesen2004convergence}.
Since $\mathbf{A}$ is Hermitian, $\mathbf{A}$ can be diagonalized as $\mathbf{A} = \mathbf{U}_A \mathbf{\Lambda} \mathbf{U}_A^{\mathrm{H}}$ with unitary matrix $\mathbf{U}_A$ and diagonal matrix $\mathbf{\Lambda}$.
As proven in \cite{saad1986gmres}, for a diagonalizable $\mathbf{A}$, the relative residual norm satisfies 
\vspace{-0pt}
\begin{equation} \label{ub1}
  \frac{\|\mathbf{r}_j\|}{\|\mathbf{r}_0\|} \leq \min\limits_{p_j\in \mathcal{P} _j} \max\limits_{\lambda_A\in\mathrm{eig}(\mathbf{A})}|p_j(\lambda_A)|,
  \vspace{-0pt}
\end{equation}
where $\mathrm{eig}(\mathbf{A})$ denotes the spectrum of $\mathbf{A}$ and $\lambda_A$ represents the eigenvalues of $\mathbf{A}$.
Since $\mathbf{A}$ is HPD, we have $\lambda_A\in\mathbb{R}^{+}$ and (\ref{ub1}) can be bounded as \cite{saad1981krylov}
\vspace{-0pt}
\begin{equation} \label{ConvRate}
   \frac{\|\mathbf{r}_j\|} {\|\mathbf{r}_0\|} \leq \min\limits_{p_j\in \mathcal{P} _j} \max\limits_{\lambda\in
    [\lambda_{\mathrm{min}},\lambda_{\mathrm{max}}]
  } |p_j(\lambda)|
  =[ T_j(\varrho ) ]^{-1},
  \vspace{-0pt}
\end{equation}
where $[\lambda_{\mathrm{min}},\lambda_{\mathrm{max}}]$ is an interval in $\mathbb{R}$ with $\lambda_{\mathrm{min}}$ and $\lambda_{\mathrm{max}}$ denoting the minimal and maximal eigenvalues of $\mathbf{A}$ respectively, and $\varrho =  ( \lambda_{\mathrm{max}} + \lambda_{\mathrm{min}} ) / ( \lambda_{\mathrm{max}}-\lambda_{\mathrm{min}} )$ satisfying $\varrho\geq 1$.
Here, $T_j$ is the Chebyshev polynomial of degree $j$ of the first kind as
\vspace{-0pt}
\begin{equation} \label{Tj}
  T_j(\varrho )=\frac{1}{2}\left[
    \left( \varrho+\sqrt{\varrho^2-1} \right)^j +
    \left( \varrho-\sqrt{\varrho^2-1} \right)^j
    \right].
    \vspace{-0pt}
\end{equation}
If $\lambda_{\mathrm{max}} = \lambda_{\mathrm{min}}\equiv \lambda$, i.e., $\varrho = 1$, then $\mathbf{A}=\lambda\mathbf{U}_A \mathbf{I} \mathbf{U}_A^{\mathrm{H}}=\lambda\mathbf{I}$.
This happens if and only if $\mathbf{H}_{\mathrm{DD}}$ is diagonal, i.e., $P=1$ with $\tau=0$ and $\nu=0$.
This indicates that there is only one time-invariant path, which conflicts with the assumption of the doubly selective channel, and thus $\varrho> 1$.

\vspace{-0pt}
\begin{thm} \label{thm1} 
  Full GMRES applied to MMSE estimation globally converges with a convergence rate at least $[T_j(\varrho)]^{-1}$.
\end{thm}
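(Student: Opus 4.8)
Since the decisive inequality (\ref{ConvRate}) has already been assembled in the excerpt, my plan is to read the theorem as two assertions and settle each separately: first, that the relative residual satisfies $\|\mathbf{r}_j\|/\|\mathbf{r}_0\|\le[T_j(\varrho)]^{-1}$, which is the claimed rate; and second, that this bound vanishes as $j$ grows, which upgrades the rate statement to \emph{global} convergence. The first assertion is inherited essentially verbatim from (\ref{ub1}) and (\ref{ConvRate}), so the genuinely new content lies in the second.

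Before invoking (\ref{ConvRate}) I would make explicit the step the excerpt compresses, since it is exactly where the argument could silently go wrong. Because $\mathbf{A}$ is Hermitian, it is not merely diagonalizable but \emph{unitarily} diagonalizable, $\mathbf{A}=\mathbf{U}_A\mathbf{\Lambda}\mathbf{U}_A^{\mathrm{H}}$ with $\mathbf{U}_A$ unitary. Hence $p_j(\mathbf{A})=\mathbf{U}_A\,p_j(\mathbf{\Lambda})\,\mathbf{U}_A^{\mathrm{H}}$, and since the spectral norm of a normal matrix equals its largest eigenvalue magnitude, $\|p_j(\mathbf{A})\|=\max_{\lambda_A\in\mathrm{eig}(\mathbf{A})}|p_j(\lambda_A)|$ with \emph{no} eigenvector condition-number factor. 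I would stress this point, because a bare diagonalizability argument would leak a $\kappa(\mathbf{U}_A)$ term; it is the normality of $\mathbf{A}$ that makes (\ref{ub1}) sharp. Relaxing the discrete maximum over $\mathrm{eig}(\mathbf{A})$ to the continuous maximum over $[\lambda_{\min},\lambda_{\max}]$ can only enlarge the bound, and the resulting scalar interval min-max is solved classically by the shifted-and-scaled Chebyshev polynomial, delivering the value $[T_j(\varrho)]^{-1}$ of (\ref{ConvRate}).

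For the decay I would work directly from (\ref{Tj}). Writing $\sigma=\varrho+\sqrt{\varrho^2-1}$, the two bracketed terms are reciprocals, since their product is $\varrho^2-(\varrho^2-1)=1$, so $T_j(\varrho)=\tfrac12(\sigma^j+\sigma^{-j})\ge\tfrac12\sigma^j$. The strict inequality $\varrho>1$, already justified in the excerpt by excluding the degenerate single-path static channel $\mathbf{A}=\lambda\mathbf{I}$, yields $\sigma>1$, so $T_j(\varrho)$ diverges geometrically and $[T_j(\varrho)]^{-1}\to0$. Since $\mathbf{A}$ is invertible, $\|\mathbf{r}_j\|=\|\mathbf{b}-\mathbf{A}\mathbf{f}_j\|\to0$ then forces $\mathbf{f}_j\to\mathbf{A}^{-1}\mathbf{b}$ from any initial guess; the residual is moreover monotone non-increasing over the nested Krylov subspaces, and in exact arithmetic the recursion terminates in at most $MN$ steps once $\mathcal{K}_j$ saturates.

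I do not expect a single hard calculation to be the obstacle; rather, the care lies in securing the two structural facts on which everything rests, namely the elimination of the condition-number factor in (\ref{ub1}) through normality of $\mathbf{A}$, and the strictness $\varrho>1$. Both follow from the Hermitian positive-definiteness of $\mathbf{A}=\mathbf{H}_{\mathrm{DD}}\mathbf{V}\mathbf{H}_{\mathrm{DD}}^{\mathrm{H}}+\PSD\mathbf{I}$ together with the doubly-selective channel assumption, and once they are in place the Chebyshev decay, and hence global convergence at the claimed rate, are routine.
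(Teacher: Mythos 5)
Your proof is correct and follows essentially the same route as the paper's: both reduce the theorem to inequality (\ref{ConvRate}) plus an elementary lower bound showing $T_j(\varrho)$ grows geometrically once $\varrho>1$, the paper via Jensen's inequality giving $T_j(\varrho)\geq\varrho^{j}$, you via the reciprocal-pair identity giving $T_j(\varrho)\geq\tfrac{1}{2}\left(\varrho+\sqrt{\varrho^{2}-1}\right)^{j}$, which is if anything slightly sharper. Your explicit observation that normality of $\mathbf{A}$ removes the eigenvector condition-number factor from (\ref{ub1}) is a point the paper leaves implicit by citing the general diagonalizable bound, and it is a sound clarification rather than a divergence in method.
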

\begin{proof}
  Applying Jessen's inequality to the concave function $x^n$ yields the inequality $(x_1^n+x_2^n)/2\geq [(x_1+x_2)/2]^n$.
  Then the Chebyshev polynomial in (\ref{Tj}) satisfies
  \vspace{-0pt}
  \begin{equation}
    \begin{aligned}
    [T_j(\varrho )]^{-1} \leq& \left( 
    {\left( \varrho+\sqrt{\varrho^2-1}  +  \varrho-\sqrt{\varrho^2-1} \right)/2} 
    \right)^{-j}\\=&\varrho^{-j}< 1,
    \vspace{-0pt}
    \end{aligned}
  \end{equation}
  for any $j\geq 1$.
  Thus, according to (\ref{ConvRate}), the relative residual norm decreases monotonically with $j$, and thus the GMRES algorithm converges at least as rapidly as $[T_j(\varrho)]^{-1}$.
\end{proof}

From Theorem \ref{thm1}, we can infer the relation between $E_b/N_0$ and the convergence rate as follows.

\vspace{0pt}
\begin{corollary} \label{prop1}
  The convergence rate of full GMRES decreases with $E_b/N_0$.
\end{corollary}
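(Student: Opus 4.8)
The plan is to track how the single scalar $\varrho$ that governs the bound $[T_j(\varrho)]^{-1}$ of Theorem~\ref{thm1} varies with the noise power, and then to exploit the monotonicity of the Chebyshev polynomial $T_j$. First I would recall that at the initial outer iteration $\mathbf{A}=\mathbf{H}_{\mathrm{DD}}\mathbf{H}_{\mathrm{DD}}^{\mathrm{H}}+N_0\mathbf{I}$, so the eigenvalues of $\mathbf{A}$ are exactly $\mu_i+N_0$, where $\{\mu_i\}$ are the nonnegative eigenvalues of the Gram matrix $\mathbf{H}_{\mathrm{DD}}\mathbf{H}_{\mathrm{DD}}^{\mathrm{H}}$, which do not depend on $N_0$. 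Hence $\lambda_{\mathrm{max}}=\mu_{\max}+N_0$ and $\lambda_{\mathrm{min}}=\mu_{\min}+N_0$, i.e.\ the additive $N_0\mathbf{I}$ shifts the whole spectrum without changing its spread.

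Next I would substitute these into the definition of $\varrho$ to obtain
\begin{equation}
  \varrho=\frac{\lambda_{\mathrm{max}}+\lambda_{\mathrm{min}}}{\lambda_{\mathrm{max}}-\lambda_{\mathrm{min}}}=\frac{\mu_{\max}+\mu_{\min}+2N_0}{\mu_{\max}-\mu_{\min}},
\end{equation}
in which the denominator is independent of $N_0$ and strictly positive (the degenerate case $\mu_{\max}=\mu_{\min}$ is precisely the excluded $\varrho=1$ scenario ruled out before Theorem~\ref{thm1}). Differentiating gives $\mathrm{d}\varrho/\mathrm{d}N_0=2/(\mu_{\max}-\mu_{\min})>0$, so $\varrho$ is strictly increasing in $N_0$. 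Since $E_b/N_0$ is inversely proportional to $N_0$ for fixed $E_b$, an increase in $E_b/N_0$ corresponds to a decrease in $N_0$, and therefore $\varrho$ strictly decreases as $E_b/N_0$ grows.

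Finally I would invoke the fact that $T_j$ is strictly increasing on $[1,\infty)$ (the same property that underlies the bound $[T_j(\varrho)]^{-1}\le\varrho^{-j}$ in the proof of Theorem~\ref{thm1}), so that $[T_j(\varrho)]^{-1}$ is strictly decreasing in $\varrho$ and hence strictly increasing in $E_b/N_0$. Because the guaranteed relative residual bound $\|\mathbf{r}_j\|/\|\mathbf{r}_0\|\le[T_j(\varrho)]^{-1}$ thus grows with $E_b/N_0$, GMRES is guaranteed to contract the residual more slowly, so its convergence rate decreases with $E_b/N_0$; this is consistent with $\mathbf{A}\to\mathbf{H}_{\mathrm{DD}}\mathbf{H}_{\mathrm{DD}}^{\mathrm{H}}$ becoming increasingly ill-conditioned as $N_0\to 0$. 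The step I expect to require the most care is not any single calculation but the bookkeeping around the definition: confirming that $\mu_{\max}-\mu_{\min}$ is genuinely $N_0$-independent and strictly positive, and pinning down the meaning of ``convergence rate'' so the monotonicity of the bound is read in the correct direction (a larger bound corresponds to slower convergence).
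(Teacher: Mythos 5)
Your proposal is correct and follows essentially the same route as the paper's own proof: both write the eigenvalues of $\mathbf{A}$ as an $N_0$-shift of the Gram matrix's spectrum so that $\varrho=(\mu_{\max}+\mu_{\min}+2N_0)/(\mu_{\max}-\mu_{\min})$ increases with $N_0$, identify increasing $E_b/N_0$ with decreasing $N_0$, and invoke the monotonicity of $T_j(\varrho)$ for $\varrho>1$. Your explicit derivative computation and the remark that $\mu_{\max}=\mu_{\min}$ is the excluded $\varrho=1$ case are just slightly more detailed versions of steps the paper states without calculation.
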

\begin{proof}
  For $\mathbf{A}=\mathbf{H}_{\mathrm{DD}}\mathbf{V}\mathbf{H}_{\mathrm{DD}}^{\mathrm{H}}+N_0\mathbf{I}$, we have $\lambda_A=\overline{\lambda}+N_0$, where $\overline{\lambda}$ denotes the eigenvalues of $\mathbf{H}_{\mathrm{DD}}\mathbf{V}\mathbf{H}_{\mathrm{DD}}^{\mathrm{H}}$.
  Then $\varrho$ can be rewritten as
  \vspace{-0pt}
  \begin{equation}
    \varrho=\frac{ {\lambda}_{\mathrm{max}} + \lambda_{\mathrm{min}} }{ \lambda_{\mathrm{max}} - \lambda_{\mathrm{min}} }
    =\frac{ {\overline{\lambda}}_{\mathrm{max}} + {\overline{\lambda}}_{\mathrm{min}} + 2\PSD }{ {\overline{\lambda}}_{\mathrm{max}} - {\overline{\lambda}}_{\mathrm{min}} }.
    \vspace{-0pt}
  \end{equation}
  Thus, for a specific channel realization $\mathbf{H}_{\mathrm{DD}}$, $\varrho$ is totally governed by $\PSD$.
  Besides, it can be easily proven that $T_j(\varrho)$ monotonically increases with $\varrho$ for $\varrho>1$.
  Since the energy of transmitted symbols is normalized, the increase of $E_b/N_0$ is equivalent to the decrease of $\PSD$.
  Accordingly, $\varrho$ and $T_j(\varrho)$ decrease.
  Hence, the convergence rate of full GMRES decreases with $E_b/N_0$.
\end{proof}

Next, we discuss the convergence of restarted GMRES, which is slower than that of full GMRES \cite{baker2005technique}.
This is due to the fact that once the algorithm is restarted, the current approximation subspace is discarded, and the residual generated later will not be orthogonal to the previously obtained subspace.
Nevertheless, the convergence of restarted GMRES used in the MMSE estimator can still be guaranteed as stated in Theorem \ref{thm2}.

\vspace{-0pt}
\begin{thm} \label{thm2}
  Restarted GMRES applied to MMSE estimation converges for any $\maxGMRES\geq 1$.
\end{thm}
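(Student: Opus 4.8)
The plan is to exploit the observation that a restarted GMRES run is nothing but a sequence of independent full GMRES runs on the \emph{same} coefficient matrix $\mathbf{A}$, each initialized at the approximate solution produced by the previous cycle. Because $\mathbf{A}$ is fixed throughout the entire process, its spectrum---and hence the quantity $\varrho$ defined in (\ref{ConvRate})---does not change from one restart cycle to the next. Consequently, the per-cycle residual-reduction bound $[T_{\maxGMRES}(\varrho)]^{-1}$ established for full GMRES in Theorem \ref{thm1} applies identically to \emph{every} restart cycle, and the task reduces to chaining this uniform contraction across cycles.

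Concretely, I would let $\mathbf{r}^{(k)}$ denote the residual at the beginning of the $k$-th restart cycle, so that $\mathbf{r}^{(0)}$ is the residual of the initial guess. Within the $k$-th cycle the algorithm performs $\maxGMRES$ inner iterations of full GMRES starting from this residual, so invoking (\ref{ConvRate}) of Theorem \ref{thm1} gives
\begin{equation*}
  \frac{\|\mathbf{r}^{(k+1)}\|}{\|\mathbf{r}^{(k)}\|} \leq [T_{\maxGMRES}(\varrho)]^{-1} \leq \varrho^{-\maxGMRES},
\end{equation*}
where the second inequality is exactly the estimate $[T_{\maxGMRES}(\varrho)]^{-1}\leq\varrho^{-\maxGMRES}$ derived in the proof of Theorem \ref{thm1}. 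Iterating this relation over $k$ cycles yields $\|\mathbf{r}^{(k)}\| \leq \varrho^{-k\maxGMRES}\|\mathbf{r}^{(0)}\|$, and since $\varrho>1$ forces $\varrho^{-\maxGMRES}<1$ for every $\maxGMRES\geq 1$, the residual norms decay geometrically to zero. This establishes convergence for an arbitrary restart parameter.

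The crux of the argument---and the step I expect to be most delicate---is certifying that the single-cycle contraction factor is \emph{strictly} below one even in the extreme case $\maxGMRES=1$. For a general non-Hermitian or indefinite matrix, restarted GMRES can stagnate precisely because a short cycle may fail to reduce the residual, so the claim is genuinely false without additional structure. Here the indispensable ingredient is that $\mathbf{A}=\mathbf{H}_{\mathrm{DD}}\mathbf{V}\mathbf{H}_{\mathrm{DD}}^{\mathrm{H}}+\PSD\mathbf{I}$ is Hermitian positive definite, which guarantees $\lambda_A\in\mathbb{R}^{+}$ and, as argued just before the statement, $\varrho>1$ strictly. I would therefore make the appeal to this HPD property explicit and note that it is exactly what rules out the degenerate case $\varrho=1$ (equivalently $\mathbf{A}=\lambda\mathbf{I}$), thereby guaranteeing the strict contraction per cycle that drives the geometric decay regardless of how small $\maxGMRES$ is chosen.
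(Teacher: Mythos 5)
Your proof is correct, but it follows a genuinely different route from the paper's. The paper's appendix argument works inside a single restart cycle geometrically: it notes that $\mathbf{A}\Delta\mathbf{f}_{\maxGMRES}'$ is the orthogonal projection of $\mathbf{r}_0$ onto $\mathrm{span}\{\mathbf{Ar}_0,\dots,\mathbf{A}^{\maxGMRES}\mathbf{r}_0\}$, writes $\|\mathbf{r}_{\maxGMRES}\|^2=\|\mathbf{r}_0\|^2-\|\mathbf{A}\Delta\mathbf{f}_{\maxGMRES}'\|^2$, and lower-bounds the projection via the Gram matrix $\overline{\mathbf{K}}^{\mathrm{H}}\overline{\mathbf{K}}$ to get $\|\mathbf{r}_{\maxGMRES}\|^2/\|\mathbf{r}_0\|^2\leq 1-\varepsilon_{\maxGMRES}$, where $\varepsilon_{\maxGMRES}>0$ because the Rayleigh quotients $\overline{\mathbf{r}}_0^{\mathrm{H}}\mathbf{A}^j\overline{\mathbf{r}}_0$ are strictly positive for HPD $\mathbf{A}$. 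You instead observe that each restart cycle is a fresh $\maxGMRES$-step full GMRES run on the same matrix, so the spectral min-max bound (\ref{ConvRate}) applies with the same $\varrho$ in every cycle, and you chain the uniform factor $[T_{\maxGMRES}(\varrho)]^{-1}\leq\varrho^{-\maxGMRES}<1$ across cycles. This is legitimate precisely because (\ref{ConvRate}) holds for an arbitrary starting residual (the Hermitian structure gives a unitary diagonalization, so no eigenvector conditioning enters), and you correctly flag that HPD-ness and the strict inequality $\varrho>1$ are the ingredients that rule out the stagnation that plagues restarted GMRES on general matrices. Your version is in one respect quantitatively stronger: the paper's contraction factor $1-\varepsilon_{\maxGMRES}$ depends on the direction $\overline{\mathbf{r}}_0$ of the current residual, so as written it establishes only a strict per-cycle decrease, and concluding that the residuals tend to zero would additionally require a uniform lower bound on $\varepsilon_{\maxGMRES}$ over the unit sphere (available by compactness and continuity, though the paper does not say so); your cycle-independent factor $\varrho^{-\maxGMRES}$ yields geometric convergence to zero directly, at the cost of being a typically looser worst-case spectral bound than the paper's residual-adapted one.
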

\begin{proof}
  The proof is given in the Appendix.
\end{proof}

\vspace{-10pt}
\subsection{Complexity Analysis} \label{compl}
In this subsection, we focus on the complexity of GMRES and FSPAI while neglecting other operations that take trivial computational time due to the inherent sparsity, such as $\mathbf{L}\mathbf{L}^{\mathrm{H}}$ and $\xi_n $.
We can infer from (\ref{spA}) that the number of nonzero elements in each column of $\mathbf{A}$ is at most $\psi    =2\tbinom{P}{2}+1$.
This is because for $p={p^{\prime}}$, $(\delta_k,\delta_l)=(0,0)$ which corresponds to the diagonal elements; for $p\neq {p^{\prime}}$, there are at most $\tbinom{P}{2}$ combinations of $(p,{p^{\prime}})$, where each combination has two distinct pairs $(\delta_k,\delta_l)$ and $(-\delta_k,-\delta_l)$ corresponding to the off-diagonal elements.
Moreover, after excluding the diagonal element, we observe that the degree of node satisfies $\varphi_A\leq \psi-1=P(P-1)$.
Since the maximum degree of $\mathbf{A}$ is $P(P-1)$, which has an order $\mathcal{O}(P^2)$, we prefer to use $P$ as a reference to determine some other parameters, including $\maxGMRES$, $\zeta$, and $\epsilon_D$.

According to \cite{saad1986gmres}, full GMRES has a complexity order $\mathcal{O}\left(( J+|\mathcal{D}| )JMN\right)$, where $|\mathcal{D}|$ is the number of nonzero elements in each column of $\mathbf{A}$ and $J$ is the number of iterations to reach the tolerance $\epsilon_g$.
Typically, $J$ is less than $|\mathcal{D}|$, and thus the complexity order is roughly $\mathcal{O}\left(|\mathcal{D}|JMN\right)$.
For restarted GMRES, the complexity order reduces to $\mathcal{O}\left( n_{cyc} (\maxGMRES+|\mathcal{D}|)\maxGMRES MN \right)$ in total, where $n_{cyc}$ denotes the required number of restart cycles.
Since generally $\maxGMRES\ll |\mathcal{D}|$, restarted GMRES has a complexity order $\mathcal{O}\left( n_{cyc} |\mathcal{D}|\maxGMRES MN \right)$.

In FSPAI, the computational complexity is dominated by computing the inverse $\mathbf{A}(\widetilde{\setj}_{\fidx}, \widetilde{\setj}_{\fidx})^{-1}$ of (\ref{l1}).
Since FSPAI aims to derive an approximation of the Cholesky factor of $\mathbf{A}^{-1}$, we focus on the sparsity level of $\mathbf{L}$.
According to Algorithm \ref{FSPAIalgo}, if a column of $\mathbf{L}$ has a degree $\varphi_L=D$, the complexity order of this column is $\mathcal{O}(D^4)$.
However, the practical degree $D$ is quite problem-dependent, because the random matrix $\mathbf{A}$ may lead to strikingly different sparsity level of $\mathbf{L}$.
Nevertheless, with the aid of $F_{\varphi_L}(D)$, the average complexity is derived as $\mathcal{O}(\sum_{D=1}^{\zeta}D^4P(\varphi_L=D)MN/n_p)=\mathcal{O}(\mathbb{E}\{\varphi_L^4\}MN/n_p)$, where $n_p$ is the number of processors available for parallel computing and $\zeta$ is the maximum node degrees of $\mathbf{L}$ as given in Algorithm \ref{FSPAIalgo}.
Furthermore, the worst case of FSPAI has a complexity order $\mathcal{O}(\zeta^4MN/n_p)$, where each node has degree $\zeta$.
Thus, even in the worst case, the complexity increases linearly with $MN$.
As will be observed in the simulations, $\zeta=P $ is sufficient for our equalizer, and $\varphi_L$ typically concentrates around zero, especially at high $E_b/N_0$.
This indicates that $\mathbf{L}$ is generally very sparse, and the average complexity is far less than that of the worst case.

\vspace{-0pt}
\section{Simulation Results} \label{section7}
In this section, the BER performance of OTFS with different detectors are simulated and discussed.
Then, we analyze the convergence rate of full and restarted GMRES respectively.
Thereafter, to evaluate the computational complexity of FSPAI, we investigate the sparsity level of $\mathbf{L}$.
Lastly, we apply our proposed equalizer to the fractional Doppler case, and present the corresponding BER performance and complexity.
We set $M=64$, $N=32$, and the length of CP is $L=16$.
An $R_c=0.5$ convolutional code with generator (5,7) in octal is used.
It is assumed that each path gain follows Rayleigh distribution with uniform power delay profile, i.e., $\mathbb{E}\{|h_p|^2\}=1/P$.
The delay and Doppler indices are uniformly chosen such that $0\leq l \leq l_{\mathrm{max}}$ and $-k_{\mathrm{max}}\leq k \leq k_{\mathrm{max}}$, where $l_{\mathrm{max}}=10$ and $k_{\mathrm{max}}=6$.
Compared with the DI-S-MMSE turbo equalizer, the DI-S-MMSE equalizer is used for uncoded OTFS systems, where the MMSE estimator treats its own extrinsic information generated previously as the {\it a priori} information in the next outer iteration.

In \figurename~\ref{BERR} (a), the BER performance of OTFS system with our proposed DI-S-MMSE (turbo) receivers is presented.
For comparison, the OTFS systems with LMMSE, MPA \cite{raviteja2018interference} and (coded) SPA \cite{kschischang2001factor} detectors are also simulated.
Here, the SPA is a near-optimal symbol-wise MAP algorithm \cite{9439819}, and considering that the complexity of SPA detector increases exponentially with $P$, we choose $P=4$.
At the first outer iteration, the DI-S-MMSE equalizer resorts to GMRES and achieves the same BER of the conventional LMMSE equalizer.
After 5 outer iterations, the BER of the DI-S-MMSE equalizer has a roughly $4$ dB gain compared with that of the LMMSE detector at around $\text{BER}=10^{-3}$, and roughly 1.3 dB gain compared with that of the MPA detector.
More importantly, it approaches the performance of the SPA detector.
Furthermore, with the help of the decoder, the DI-S-MMSE turbo equalizer achieves around 2 dB code gain compared with the DI-S-MMSE equalizer at BER $= 10^{-4}$.
Similarly, the BER performance of the DI-S-MMSE turbo equalizer approaches that of coded SPA detector.

\begin{figure*}[t]

  \centering
  \subfigure[]{
  \begin{minipage}[t]{0.45\linewidth}
    \centering
    \includegraphics[width=3in]{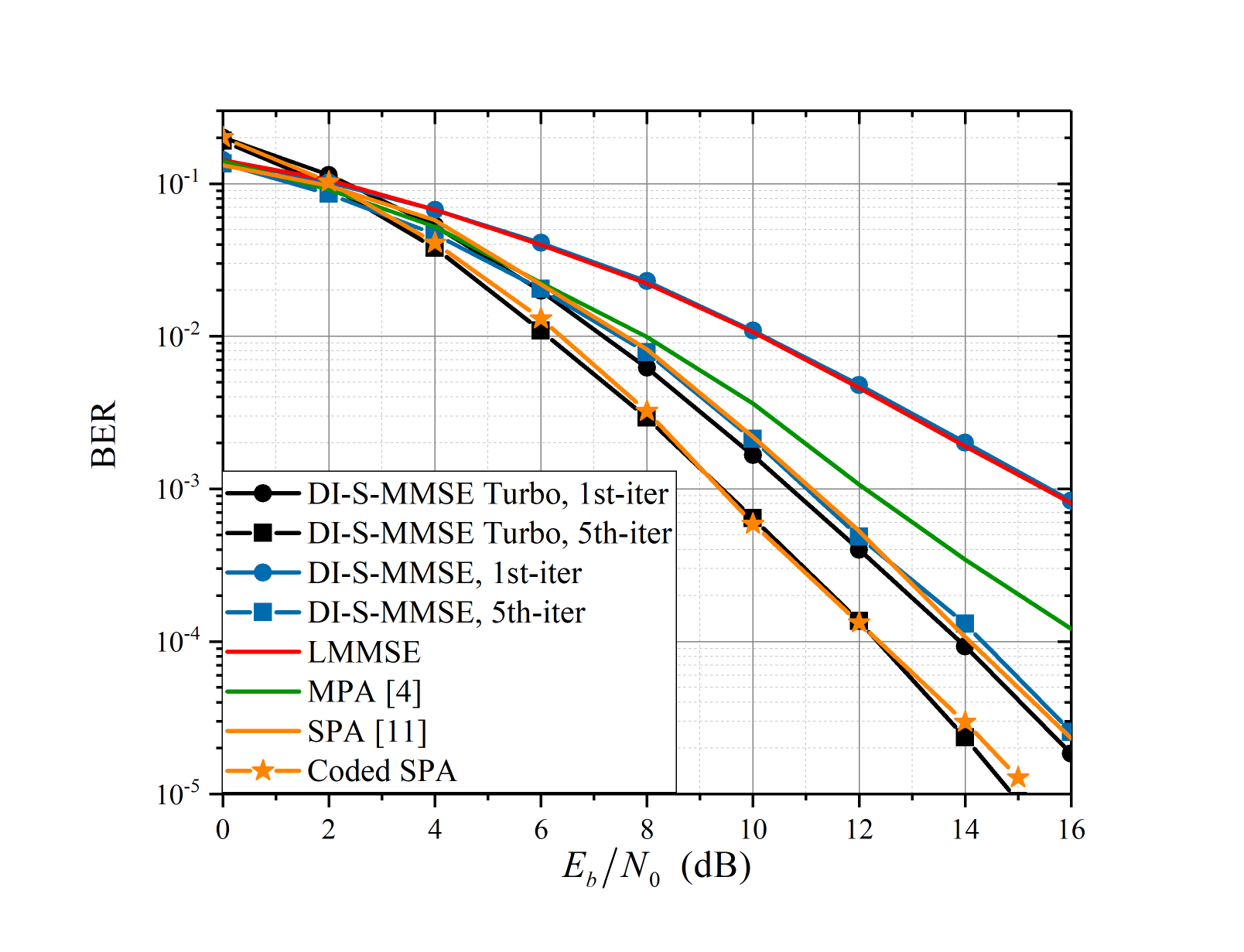}
  \end{minipage}
  \label{BER1}
  }%
  \subfigure[]{
    \begin{minipage}[t]{0.5\linewidth}
      \centering
      \includegraphics[width=3in]{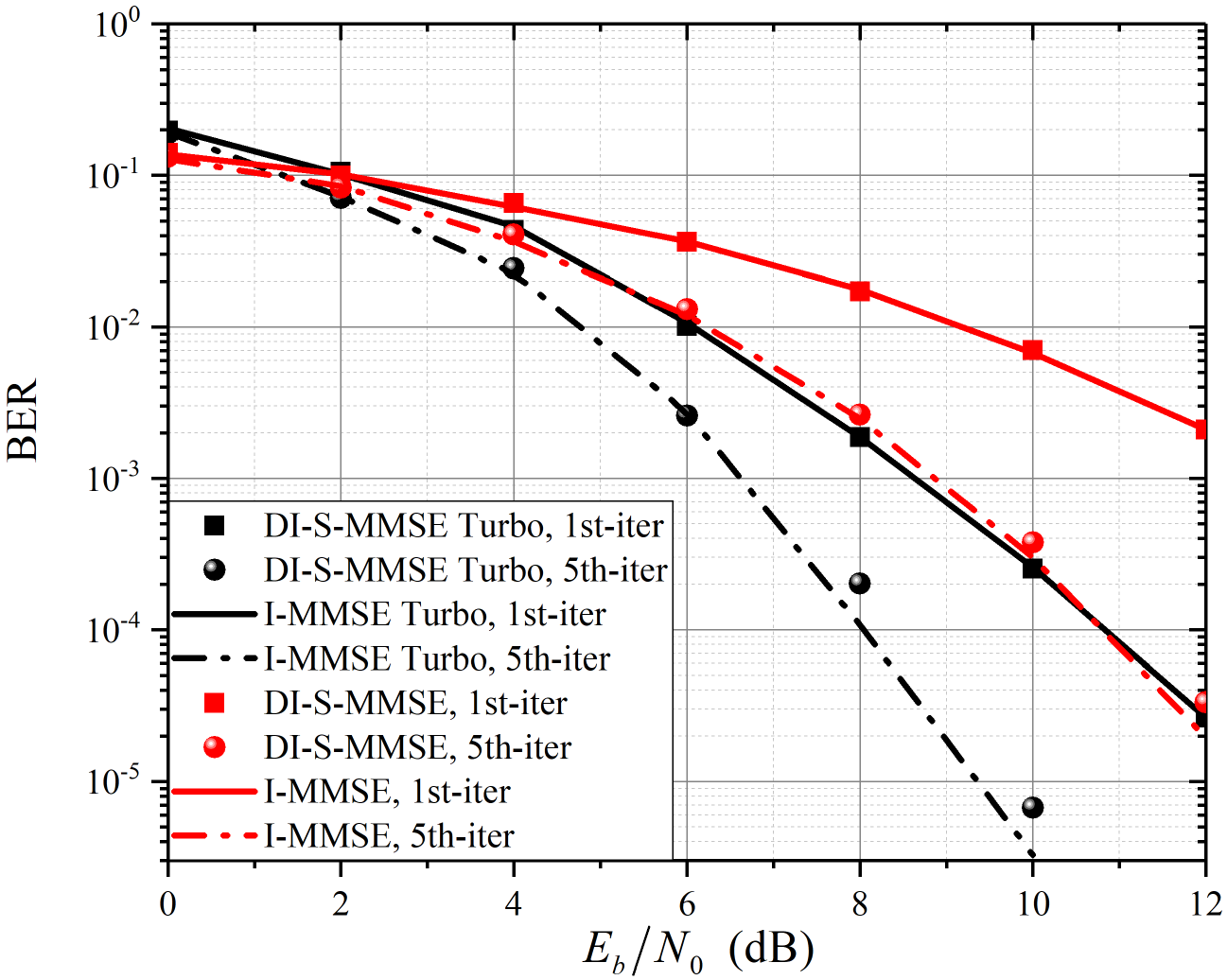}
    \end{minipage}
    \label{BER2}
  }%

  \centering
  \caption{The BER performance of our proposed DI-S-MMSE (turbo) equalizers, where $\epsilon_g=\epsilon_f=\epsilon_A=10^{-3}$, $\epsilon_D=P/4$ and $\zeta=P$.
  (a) With $P=4$, we present the BER performance of the LMMSE equalizer, MPA detector and (coded) SPA detector for comparison. 
  (b) With $P=8$, we simulate the BER performance of the I-MMSE (turbo) receivers as a lower bound, where $\mathbf{A}^{-1}$ is directly computed.} 
  \label{BERR}
  \vspace{-5pt}
\end{figure*}

\figurename~\ref{BERR} (b) shows the BER performance of our proposed DI-S-MMSE (turbo) receivers with $P=8$.
To evaluate the performance loss, we present the BER performance of iterative MMSE (I-MMSE) and I-MMSE turbo receivers, where $\mathbf{A}^{-1}$ is directly computed without resorting to GMRES and FSPAI.
Thus, the I-MMSE (turbo) receivers serve as lower bounds of our proposed receivers.
It is found that there is no BER performance loss at the first outer iteration.
On the other hand, at the subsequent outer iterations, the use of FSPAI brings a marginal BER performance degradation.
For example, there is around $0.2$ dB $E_b/N_0$ loss after 5 outer iterations.
Thus, $\zeta=P$ is sufficient for the DI-S-MMSE (turbo) equalizer.

\figurename~\ref{GMRES} shows the convergence performance of full and restarted GMRES.
In \figurename~\ref{GMRES} (a), we investigate the average relative residual norms of full GMRES applied to two sparse linear systems and their upper bounds with the number of inner iterations, where GMRES1 and GMRES2 refer to linear systems $\mathbf{A}\mathbf{f}_1=\mathbf{{y}}$ and $\mathbf{A}\mathbf{f}_2=\mathbf{h}_n$ respectively.
It is observed that full GMRES can converge steadily at any $E_b/N_0$ in both sparse linear systems.
In addition, the average relative residual norms of the upper bound have a gap compared with those of both linear systems, but they have the same slope.
This clearly substantiates Theorem \ref{thm1}, where the upper bound provides a good approximation of the convergence rate.
Furthermore, GMRES converges faster at a smaller $E_b/N_0$.
For example, for GMRES1 and a given tolerance $10^{-3}$, around 15 inner iterations are required at $E_b/N_0=6$ dB, while around 22 inner iterations are needed at $E_b/N_0=10$ dB.
This verifies the observation in Corollary \ref{prop1}.

\begin{figure*}[t]
  \centering
  \subfigure[Full GMRES]{
  \begin{minipage}[t]{0.45\linewidth}
    \centering
    \includegraphics[width=3in]{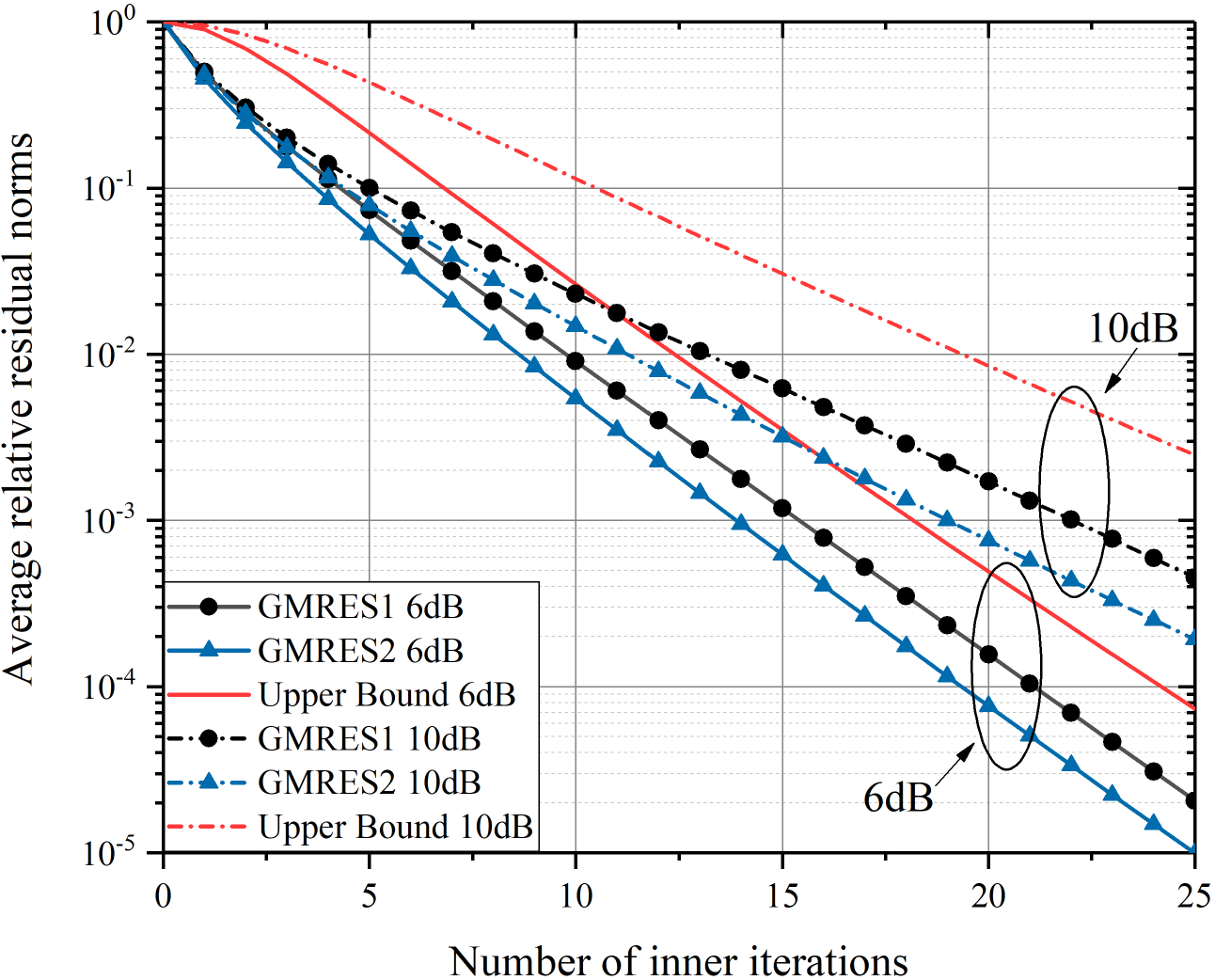}
  \end{minipage}
  \label{NoRestart} 
  }%
  \subfigure[Restarted GMRES]{
    \begin{minipage}[t]{0.5\linewidth}
      \centering
      \includegraphics[width=3in]{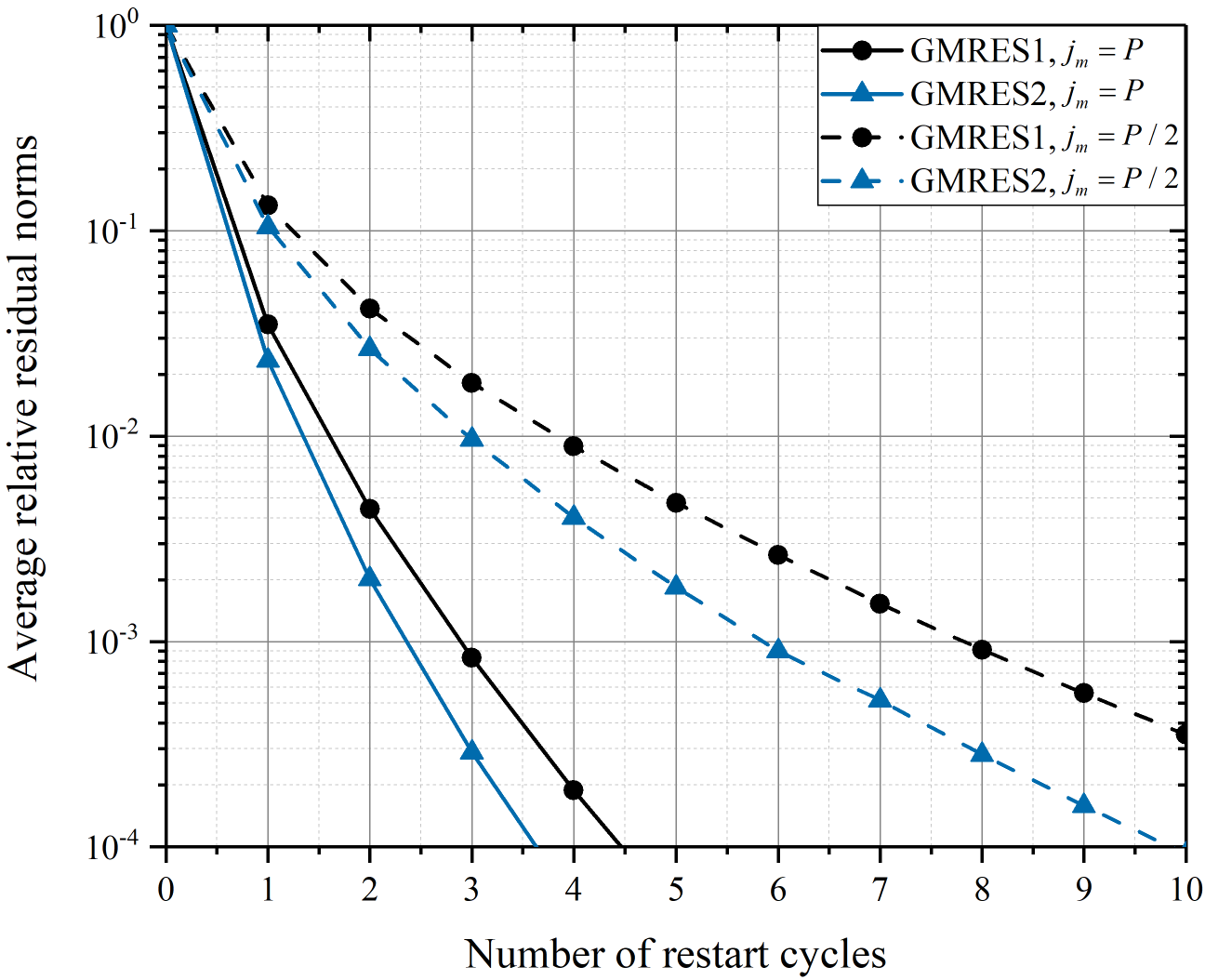}
    \end{minipage}
    \label{Restart}
  }%

  \centering
  \caption{Average relative residual norms of full and restarted GMRES applied to two linear systems, where $P=8$.
    (a) Full GMRES cases and the corresponding upper bounds with the number of inner iterations at $E_b/N_0=6$ dB and $10$ dB. (b) Restarted GMRES cases with the number of restart cycles, where $E_b/N_0=10$ dB.} 
  \label{GMRES}
  \vspace{-5pt}
\end{figure*}

\figurename~\ref{GMRES} (b) shows the average relative residual norms of restarted GMRES applied to two sparse linear systems with the number of restart cycles at $E_b/N_0=10$ dB, where the number of inner iterations of a restart cycle is set as $j_m=P=8$ and $j_m=P/2=4$ respectively.
We observe that the convergence of the current restart cycle tends to be slower than that of the previous one.
For example, for GMRES1 with $\maxGMRES=P/2$, the average relative residual norms are reduced by around 90\% (form 1 to around 0.1) at the first restart cycle, but around 70\% (from 0.1 to 0.03) at the second restart cycle.
Hence, the reduction of the residual norms of the current restart cycle is worse than that of the previous restart cycle, which is consistent with the convergence analysis.
Furthermore, a smaller $j_m$ results in a slower convergence, and a larger $n_{cyc}$ is required to reach the drop tolerance $\epsilon_g$.
For both linear systems with $\maxGMRES=P$, $n_{cyc}=3$ is sufficient to reduce the residual norms to $10^{-3}$, while $n_{cyc}=8$ is sufficient for $\maxGMRES=P/2$.
Nevertheless, the latter greatly reduces the complexity of each restart cycle compared with the former, since the maximum dimension of Krylov subspace is restricted.
More importantly, regardless of the choice of $\maxGMRES$, restarted GMRES can always converge, which demonstrates Theorem \ref{thm2}.

\begin{figure*}[t]
  \centering
  \subfigure[]{
  \begin{minipage}[t]{0.45\linewidth}
    \centering
    \includegraphics[width=3in]{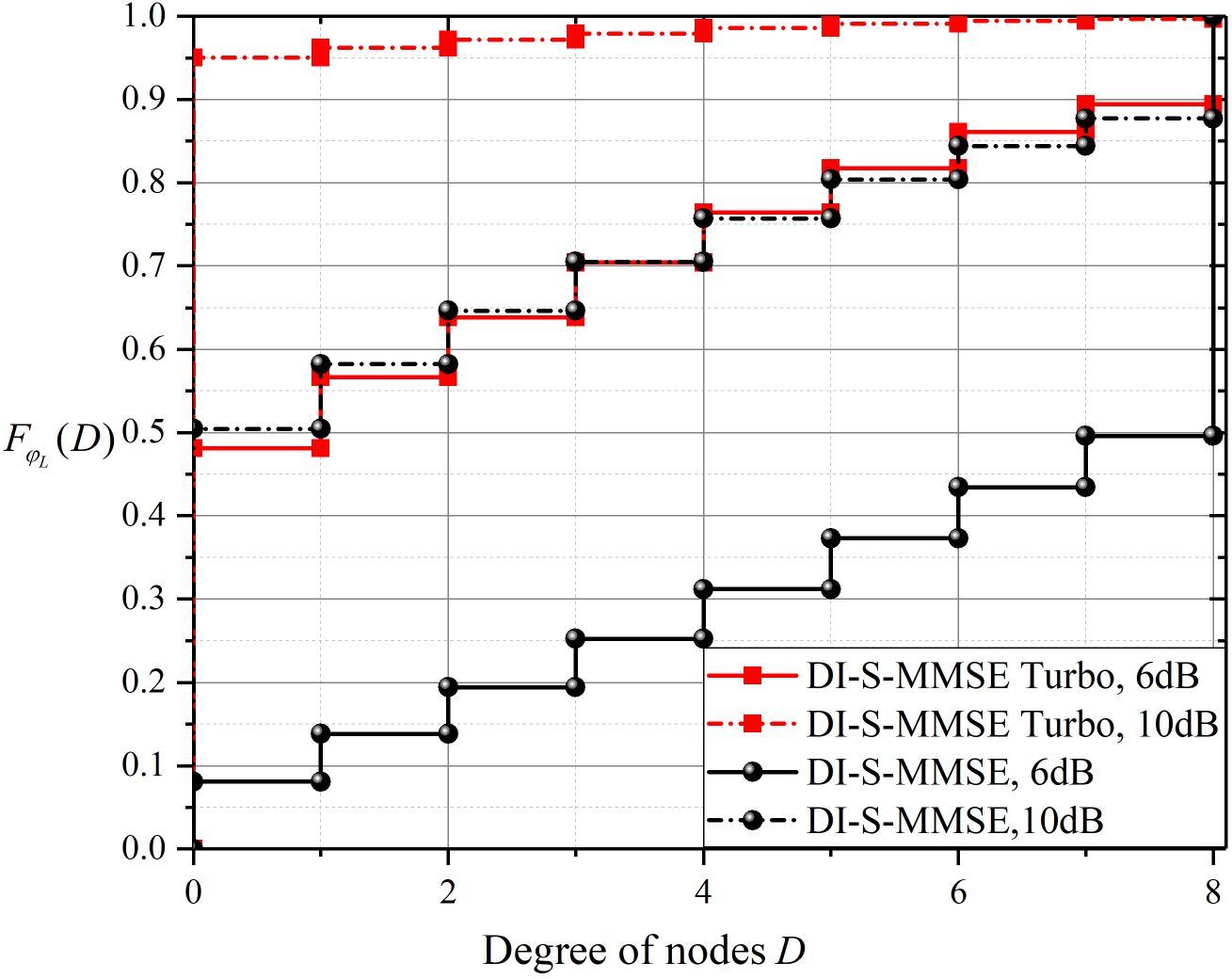}
  \end{minipage}
  \label{FSPAI1}
  }%
  \subfigure[]{
    \begin{minipage}[t]{0.5\linewidth}
      \centering
      \includegraphics[width=3in]{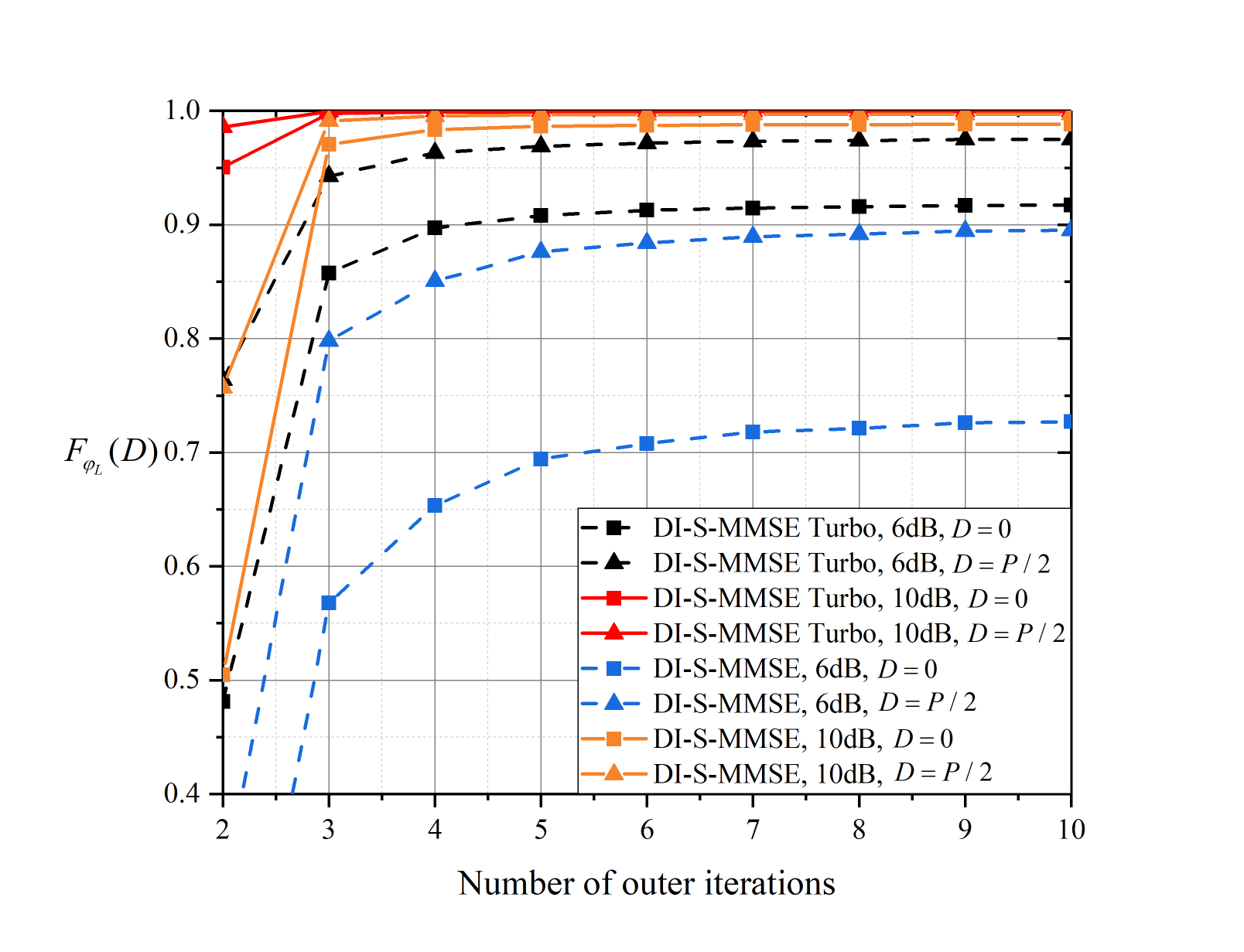}
    \end{minipage}
    \label{FSPAI2}
  }%

  \centering
  \caption{The sparsity level of $\mathbf{L}$, measured by CDF $F_{\varphi_L}(D)$, where $E_b/N_0 = 6$ dB and $10$ dB with $P = 8$.
  (a) The sparsity level of $\mathbf{L}$ at the second outer iteration. 
  (b) The sparsity level of $\mathbf{L}$ versus the number of outer iterations, and $D$ is set to be $0$ and $P/2$ respectively.} 
  \label{FSPAI}
  \vspace{-5pt}
\end{figure*}

Then, \figurename~\ref{FSPAI} shows the sparsity level of $\mathbf{L}$, which largely determines the complexity of FSPAI as analyzed in Section \ref{compl}.
From $F_{\varphi_L}(D)$, we can readily observe the distribution of $\varphi_L$, and derive the average complexity of FSPAI.
\figurename~\ref{FSPAI} (a) illustrates the CDF $F_{\varphi_L}(D)$ of the second outer iteration at different $E_b/N_0$ with $P=8$.
After considerable simulations, we have optimized the parameters to achieve a low complexity order and maintain a good BER performance. 
The drop tolerances of GMRES and FSPAI are set to be $\epsilon_g=\epsilon_f=10^{-3}$.
The parameters of sparsification guidelines are chosen to be $\epsilon_A=10^{-3}$ and $\epsilon_D=P/4$ respectively.
The maximum node degree of FSPAI $\zeta=P=8$ is sufficient to avoid noticeable performance loss.
For the DI-S-MMSE turbo receiver at $E_b/N_0=6$ dB, $F_{\varphi_L}(0)\approx 0.48$ shows that around 48$\%$ nodes have degree equal to $0$;
at $E_b/N_0=10$ dB, $F_{\varphi_L}(0)\approx 1$ indicates that most $\varphi_L$ is zero and $\mathbf{L}$ is nearly a diagonal matrix.
This is because with the increase of $E_b/N_0$, the extrinsic information generated by the decoder is more reliable.
Thus, the new variances are closer to zero, and together with the sparsification guidelines applied to $\mathbf{A}$, a sparser approximation $\mathbf{L}$ is made possible.
Furthermore, at the same $E_b/N_0$, $\mathbf{L}$ of the DI-S-MMSE turbo receiver tends to be much sparser than that of the DI-S-MMSE receiver.
This is demonstrated in the figure that $F_{\varphi_L}(D)$ of the DI-S-MMSE turbo receiver is strictly larger than that of the DI-S-MMSE receiver, so that $\mathbf{L}$ is sparser and requires less computation.
Hence, the simulation results verify that FSPAI can indeed reduce the complexity substantially, and the reduction is more manifest with the help of the decoder.

\figurename~\ref{FSPAI} (b) shows the sparsity level of $\mathbf{L}$ with the number of outer iterations at $E_b/N_0=6$ dB and $10$ dB, where $P=8$, and $D$ is chosen to be $0$ and $P/2$ respectively.
Similarly, $F_{\varphi_L}(D)$ of the DI-S-MMSE turbo receiver is strictly larger than that of the DI-S-MMSE receiver at the same $E_b/N_0$, which indicates that our turbo receiver provides a smaller $\varphi_L$ and thus a lower complexity.
Besides, $F_{\varphi_L}(D)$ increases with $E_b/N_0$.
More importantly, $F_{\varphi_L}(D)$ increases with the number of outer iterations, and the overall complexity of FSPAI is mainly dominated by the second outer iteration.
This verifies that the {\it a priori} information is improved with the outer iteration, and the resultant new variances $v_n$ are closer to zero, leading to a sparser $\mathbf{L}$.
For example, at the third outer iteration and $E_b/N_0=10$ dB, both receivers have $F_{\varphi_L}(0)> 0.96$ and $F_{\varphi_L}(P/2)\approx 1$.
Hence, $\mathbf{L}$ is nearly diagonal, which has 96$\%$ nodes with degree zero and almost all the nodes with degree less than $P/2$.
From \figurename~\ref{BERR} and \figurename~\ref{FSPAI}, it is verified that the turbo structure not only improves the BER performance, but also greatly reduces the complexity.
Therefore, our DI-S-MMSE turbo equalizer can indeed achieve a great performance with a preferable complexity.

\begin{figure*}[t]

  \centering
  \subfigure[]{
  \begin{minipage}[t]{0.45\linewidth}
    \centering
    \includegraphics[width=3in]{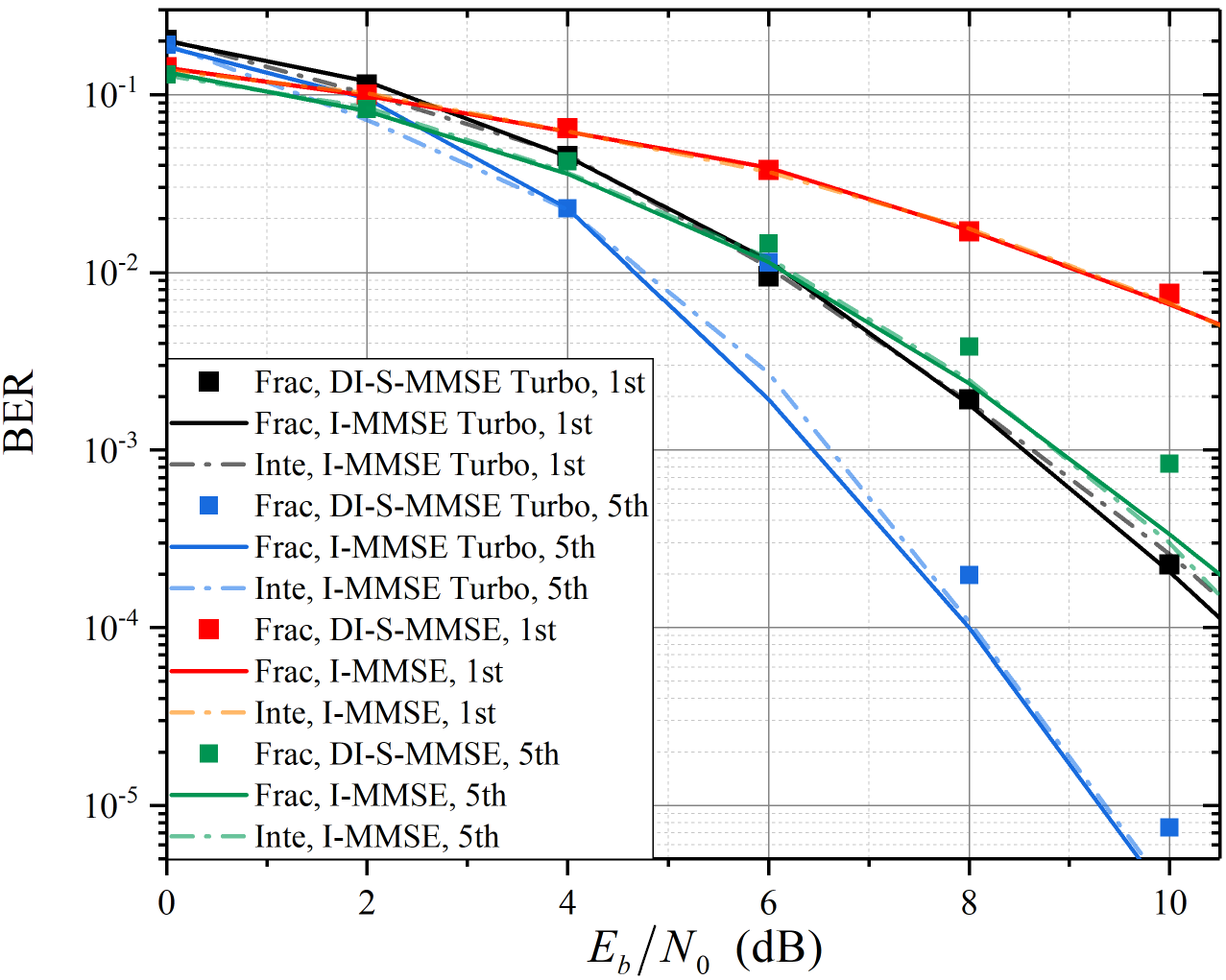}
  \end{minipage}
  \label{BER_frac}
  }%
  \subfigure[]{
    \begin{minipage}[t]{0.5\linewidth}
      \centering
      \includegraphics[width=3in]{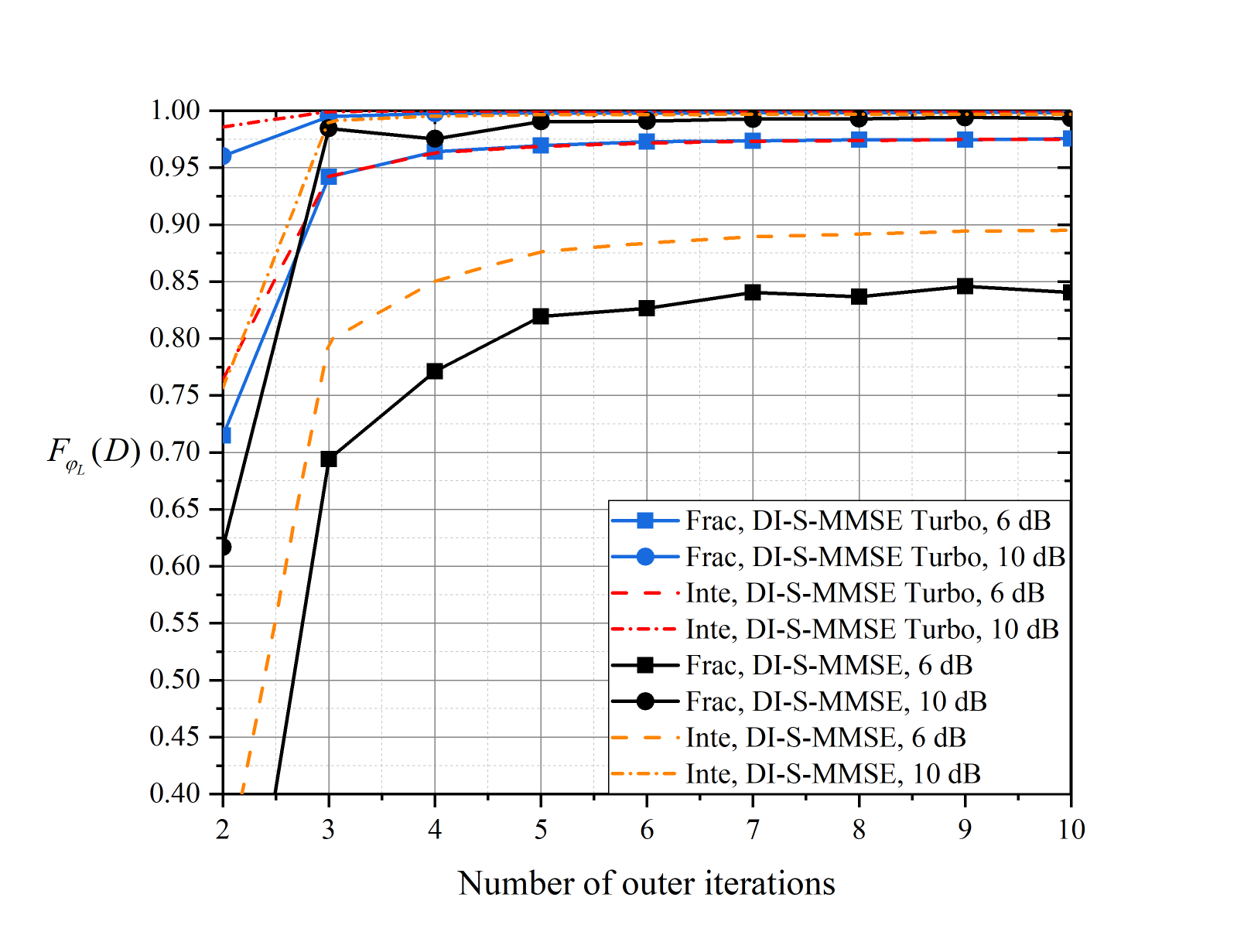}
    \end{minipage}
    \label{FSPAI_frac}
  }%

  \centering
  \caption{
  The BER performance and the sparsity level of $\mathbf{L}$ with the fractional Doppler (denoted by "Frac"), where $P=8$, $\epsilon_g=\epsilon_f=\epsilon_A=10^{-3}$, $\epsilon_D=P/4$. 
  For comparison, the BER and sparsity level with integer Doppler shifts (denoted by "Inte") are also presented.
  (a) The BER performance of our proposed DI-S-MMSE (turbo) equalizers. 
  (b) The sparsity level of $\mathbf{L}$ versus the number of outer iterations at $E_b/N_0 = 6$ dB and $10$ dB respectively, and $D$ is set to be $P/2$.}

  \vspace{-5pt}
\end{figure*}

In the following, we apply our equalizer to the fractional Doppler case, and analyze the corresponding BER and complexity respectively.
Here, we use Jakes' formula to generate Doppler taps as $k_p+\kappa_p = k_{\mathrm{max}}\mathrm{cos}(\theta_p)$ for $p=1,\dots,P$, where $k_{\mathrm{max}}=6$ and $\theta_p$ is uniformly distributed over $[-\pi,\pi]$. 
With the help of the decoder in the DI-S-MMSE turbo equalizer, $\zeta=P$ is sufficient to guarantee a good BER performance, while $\zeta=3P$ is required for our uncoded equalizer.
\figurename~\ref{BER_frac} shows the BER with $P=8$, where the solid lines and square points denote the BER of the I-MMSE (turbo) and DI-S-MMSE (turbo) equalizers with the fractional Doppler respectively.
Here, the former serves as the lower bound of the latter.
We observe that there is no BER performance loss at the first outer iteration, which verifies the effectiveness of GMRES.
However, even with $\zeta=3P$ for our uncoded equalizer, around 0.8 dB $E_b/N_0$ loss is incurred at BER = $10^{-3}$ due to the absence of the decoder.
In contrast, only $\zeta=P$ is sufficient for the DI-S-MMSE turbo equalizer and only causes around $0.3$ dB performance loss at BER = $10^{-4}$. 
This is because the decoder can output highly reliable extrinsic information, and thus the updated variances are close to zero.
As a result, a highly sparse $\mathbf{A}$ can be derived without too much performance loss. 
More importantly, the dash lines represent the BER of the I-MMSE (turbo) equalizers without the fractional Doppler from \figurename~\ref{BER2}.
From the solid and dash lines, we observe that the I-MMSE (turbo) can achieve the same BER regardless of the fractional Doppler.
This indicates that our equalizer can handle the fractional Doppler well.

In \figurename~\ref{FSPAI_frac}, we present the sparsity level $F_{\varphi_L}(D)$ of $\mathbf{L}$ with the fractional Doppler at $E_b/N_0=6$ dB and $10$ dB.
For comparison, we also present $F_{\varphi_L}(D)$ from \figurename~\ref{FSPAI2} where the fractional Doppler does not exist.
As shown in \figurename~\ref{FSPAI_frac}, $\mathbf{L}$ is much sparser in the DI-S-MMSE turbo equalizer (blue lines) than that in our uncoded equalizer (black lines).  
For example, at the second outer iteration and $E_b/N_0 = 10$ dB, the former has around 96\% nodes with degree not larger than $P/2$, while the latter has around 62\% nodes with degree not larger than $P/2$.
More importantly, compared with the integer Doppler case (red lines), the sparsity level of $\mathbf{L}$ in the DI-S-MMSE turbo equalizer is hardly influenced.
This is because the updated variances derived from the decoder is so close to zero that the first term of $\mathbf{A}=\mathbf{H}_{\mathrm{DD}}\mathbf{V}\mathbf{H}_{\mathrm{DD}}^{\mathrm{H}}+N_0\mathbf{I}$ is almost negligible.
In this case, the diagonal elements of $\mathbf{A}$ have dominant magnitude, and thus $\mathbf{L}$ can still be highly sparse.
This observation validates that the decoder can not only improve the BER performance, but also greatly reduce the complexity, even with the fractional Doppler.

The computational complexity of the aforementioned detectors is summarized in Table \ref{comp}, where $n_{\mathrm{iter}}$ is the number of iterations of the MPA and SPA detector, and also the number of outer iterations of our equalizer.
In our equalizer, we use restarted GMRES and neglect the decoder for a fair comparison.
It is observed that our equalizer has a complexity order linearly increasing with $MN$.
Besides, since typically $\zeta=P$ is sufficient, even in the worst case, the overall complexity order of our equalizer increases with $P^4$ as discussed in Section \ref{compl}.
However, as observed in \figurename~\ref{FSPAI}, the average complexity is far less than the worst case.
In comparison with the LMMSE equalizer in \cite{tiwari2019low} which has a log-linear order of complexity, our equalizer not only achieves a better BER performance, but also has an even lower complexity order.
The MPA detector has a lower complexity than our equalizer but sacrifices the BER performance as shown in \figurename~\ref{BER1}.
For the SPA detector, the complexity order of the SPA detector is exponential to $P$.
Hence, in richly scattered environment with large $P$, our equalizer has much lower complexity.

\vspace{0pt}
\begin{table*}[t] 
  \small \label{comp}
  \begin{spacing}{1.4}
  \centering
  \caption{Computational complexity of different detectors}
  \begin{tabular} {|c|c|}
    \hline
    Low complexity LMMSE equalizer \cite{tiwari2019low} & $\mathcal{O}(\frac{MN}{2}\log _2M+MN(2l_{max}^2+2P^2k_{max}))$                                              \\
    \hline
    MPA detector \cite{raviteja2018interference} & $\mathcal{O}(n_{\mathrm{iter}}MN P |S| ))$                                              \\
    \hline
    SPA detector\cite{9439819}                          & $\mathcal{O}(n_{\mathrm{iter}}MNP|\mathcal{S}|^{P})$                                                        \\
    \hline
    Our proposed equalizer                              & $\mathcal{O}( MN(n_{cyc} |\mathcal{D}|\maxGMRES+\frac{n_{\mathrm{iter}-1}} {n_p} \mathbb{E}\{\varphi_L^4\}))$ \\
    \hline 
  \end{tabular}
  \vspace{-0pt}
\end{spacing}
\end{table*}

\vspace{0pt}
\section{Conclusion} \label{section8}
This paper developed a DI-S-MMSE turbo equalizer.
To exploit the inherent sparsity of the channel matrix, we resorted to graph theory to analyze the sparse matrices and proposed two sparsification guidelines.
At the initial outer iteration, we applied GMRES to avoid directly computing the inverse of the covariance matrix with little performance loss.
In addition, GMRES can be readily extended to other LMMSE systems if the covariance matrix is sparse.
Then, we modified FSPAI to derive an approximate inverse of the covariance matrix.
Next, we proved the global convergence of full and restarted GMRES respectively, and demonstrated that the overall complexity order of our proposed equalizer increases linearly with the frame size.
Simulation results demonstrated that our proposed equalizer can indeed deliver a great BER performance with a tremendously reduced complexity and regardless of the fractional Doppler.

\vspace{-0pt}
\appendix \label{appen}
\section*{Proof of Theorem \ref{thm2}}
After restarted GMRES performs $\maxGMRES$ iterations, i.e., a restart cycle, the residual is given by 
\vspace{-0pt}
\begin{equation}   \label{LS}
  \left\| \mathbf{r}_{\maxGMRES} \right\|=\left\| \mathbf{r}_0-\mathbf{A}\Delta{ \mathbf{f  }}_{\maxGMRES}' \right\|=
  \min \limits_{\mathbf{\Delta \mathbf{f  }}_{\maxGMRES}\in \mathcal{K}_{\maxGMRES}} \left\| \mathbf{r}_0-\mathbf{A}\Delta \mathbf{f  }_{\maxGMRES} \right\|,
  \vspace{-0pt}
\end{equation}
where $\Delta{ \mathbf{f  }}_{\maxGMRES}'$ is the solution of the LS problem.
Considering $\Delta{ \mathbf{f  }}_{\maxGMRES}'\in\mathcal{K}_{\maxGMRES}$, $\mathbf{A}\Delta{ \mathbf{f  }}_{\maxGMRES}'$ belongs to a subspace spanned by $\{ \mathbf{Ar}_0,\dots,\mathbf{A}^{\maxGMRES}\mathbf{r}_0 \}$.
Then, from (\ref{LS}), the minimum residual is obtained when $ \mathbf{r}_{\maxGMRES} $ is perpendicular to this subspace and $\mathbf{A}\Delta{ \mathbf{f  }}_{\maxGMRES}'$ is the projection of $\mathbf{r}_0$ onto this subspace.
Hence, $\|\mathbf{r}_{\maxGMRES}\|^2 = \|\mathbf{r}_0\|^2 - \|\mathbf{A}\Delta{ \mathbf{f  }}_{\maxGMRES}'\|^2$, and $\mathbf{A}\Delta{ \mathbf{f  }}_{\maxGMRES}'=\mathbf{K}(\mathbf{K}^{\mathrm{H}}\mathbf{K})^{-1}\mathbf{K}^{\mathrm{H}}\mathbf{r}_0$ where $\mathbf{K}=[\mathbf{Ar}_0,\dots,\mathbf{A}^{\maxGMRES}\mathbf{r}_0]$ \cite{zitko2000generalization}. 
Now, we have the relative residual
\vspace{-0pt}
\begin{equation}
  \begin{aligned}  \label{rat}
  &\|\mathbf{r}_{\maxGMRES}\|^2 / \|\mathbf{r}_{0}\|^2
  = 1 - \|\overline{\mathbf{K}}(\overline{\mathbf{K}}^{\mathrm{H}}\overline{\mathbf{K}})^{-1}\overline{\mathbf{K}}^{\mathrm{H}}\overline{\mathbf{r}}_0\|^2\\
  =&1-\overline{\mathbf{r}}_0^{\mathrm{H}} \overline{\mathbf{K}}(\overline{\mathbf{K}}^{\mathrm{H}}\overline{\mathbf{K}})^{-1}\overline{\mathbf{K}}^{\mathrm{H}} \overline{\mathbf{r}}_0,
  \vspace{-0pt}
  \end{aligned}
\end{equation}
where $\overline{\mathbf{r}}_0=\mathbf{r}_0/\|\mathbf{r}_0\|$ and $\overline{\mathbf{K}}=[\mathbf{A\overline{r}}_0,\dots,\mathbf{A}^{\maxGMRES}\mathbf{\overline{r}}_0]$.
Then, the upper bound of (\ref{rat}) is given by 
  \vspace{-0pt}
  \begin{equation}
   \label{restartG}
   \begin{aligned}
    \frac{\|\mathbf{r}_{\maxGMRES}\|^2} {\|\mathbf{r}_{0}\|^2}
    &\leq 1 - \overline{\mathbf{r}}_0^{\mathrm{H}} \overline{\mathbf{K}}\overline{\mathbf{K}}^{\mathrm{H}} \overline{\mathbf{r}}_0/ \lambda_{\mathrm{max}}(\overline{\mathbf{K}}^{\mathrm{H}}\overline{\mathbf{K}}) \\
    &\overset{(a)}{\leq} 1 - \sum_{j=1}^{\maxGMRES}\left|\overline{\mathbf{r}}_0^{\mathrm{H}}\mathbf{A}^j \overline{\mathbf{r}}_0\right|^2 / \mathrm{tr}(\overline{\mathbf{K}}^{\mathrm{H}}\overline{\mathbf{K}})  
    = 1-\varepsilon_{\maxGMRES},
   \end{aligned}
   \vspace{-0pt}
  \end{equation}
  where (a) follows from the fact that $\overline{\mathbf{K}}^{\mathrm{H}}\overline{\mathbf{K}}$ is a Gram matrix and thus positive semidefinite.
  It is easy to know that $\mathrm{tr}(\overline{\mathbf{K}}^{\mathrm{H}}\overline{\mathbf{K}})>0$.
  On the other hand, since $\mathbf{A}^j$ is HPD, $\overline{\mathbf{r}}_0^{\mathrm{H}}\mathbf{A}^j \overline{\mathbf{r}}_0>0$ for $j\geq 1$ and any $\overline{\mathbf{r}}_0$.
  Hence, for any $\maxGMRES \geq 1$, $\varepsilon_{\maxGMRES}$ is always positive, and thus the relative residual is less than one.
  This proves the global convergence of restarted GMRES.

\vspace{10pt}
\bibliographystyle{IEEEtran}
\bibliography{IEEEabrv,reference}

\begin{IEEEbiography}[{\includegraphics[width=1in,height=1.25in,clip,keepaspectratio]{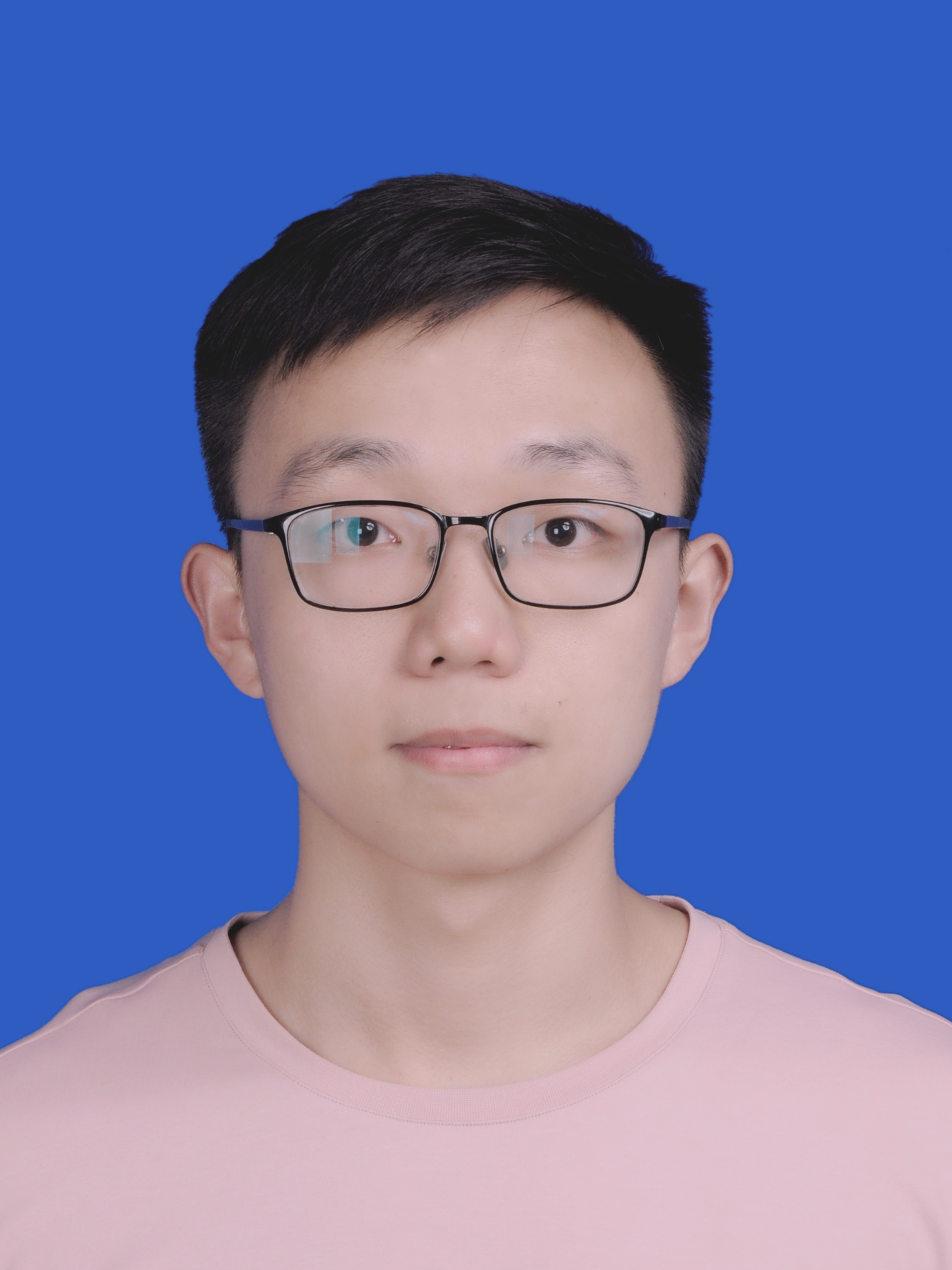}}]{Haotian Yu}
(Student Member, IEEE) received the B.E. degree in electronic and information engineering from Harbin Institute of Technology, Harbin, China, in 2020, where he is currently pursuing the M.S. degree in electronic and information engineering. His research interests include signal processing and channel coding.
\end{IEEEbiography}

\begin{IEEEbiography}[{\includegraphics[width=1in,height=1.25in,clip,keepaspectratio]{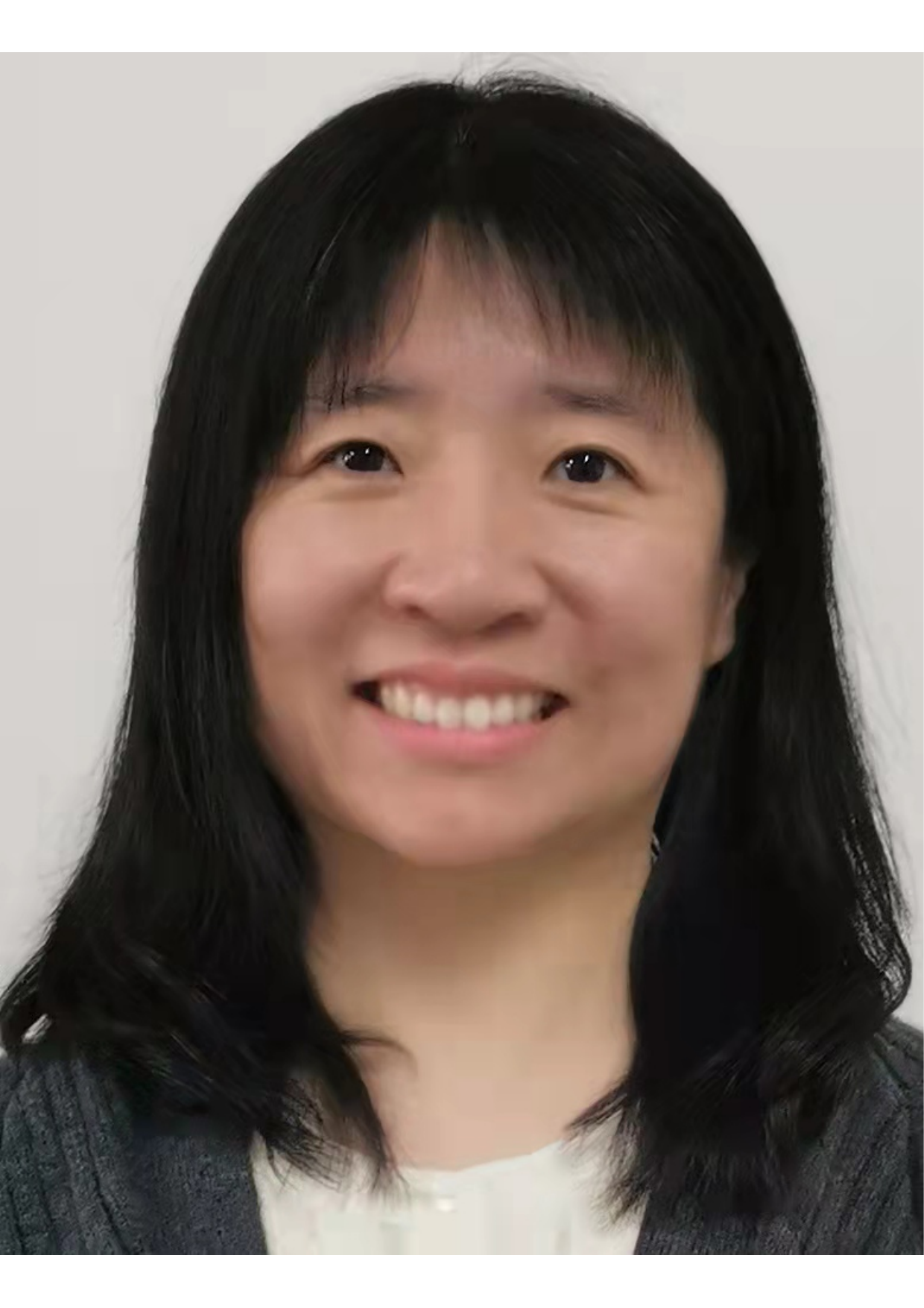}}]{Qiyue Yu}
  (M'08-SM'17) received her B. Eng., M. Eng., and Ph.D. degrees from Harbin Institute of Technology (HIT), China, in 2004, 2006, and 2010, respectively.
  She is currently a full professor at the school of Electronics and Information Engineering, HIT. During Apr. 2007-Mar. 2008, she studied in Adachi Lab, Tohoku University, Japan and was a research assistant of Tohoku University Global COE program. In 2010, she was invited to City University of Hong Kong to research on multi-user MIMO technology. She was invited to University of Southern Queensland in Jan.-Mar. 2014. During Sept. 2015-Sept. 2016, she was a visiting scholar in University of California, Davis, and focused on coding theory. Her research interests include channel codes, multi-access techniques, equalization techniques and MIMO for broadband wireless communications.
\end{IEEEbiography}

\end{document}